\newtheorem{theorem}{Theorem}
\newtheorem{definition}[theorem]{Definition}
\newtheorem{lemma}[theorem]{Lemma}
\newtheorem{corollary}[theorem]{Corollary}
\theoremstyle{remark}
\newcommand*{\nred}[1]{{\#red(#1)}}
\newcommand*{\nblue}[1]{{\#blue(#1)}}
\newcommand*{\ninir}[1]{{\#ini_r(#1)}}
\newcommand*{\ninib}[1]{{\#ini_b(#1)}}
\newcommand*{\nini}[1]{{\#ini(#1)}}
\title{Uniform Bipartition in the Population Protocol Model with Arbitrary Communication Graphs} 
\author[1]{Hiroto Yasumi}
\author[1]{Fukuhito Ooshita}
\author[1]{Michiko Inoue}
\author[2]{S\'ebastien Tixeuil}
\affil[1]{Nara Institute of Science and Technology}
\affil[2]{Sorbonne Universit\'e}
\date{}
\begin{document}

\maketitle

\begin{abstract}
 In this paper, we focus on the uniform bipartition problem in the population protocol model.
This problem aims to divide a population into two groups of equal size.
In particular, we consider the problem in the context of \emph{arbitrary} communication graphs.
As a result, we clarify the solvability of the uniform bipartition problem with arbitrary communication graphs when agents in the population have designated initial states, under various assumptions such as the existence of a base station, symmetry of the protocol, and fairness of the execution.
When the problem is solvable, we present protocols for uniform bipartition. When global fairness is assumed, the space complexity of our solutions is tight. 
\end{abstract}

\section{Introduction}
\subsection{Background}

In this paper, we consider the population protocol model introduced by Angluin et al.~\cite{angluin2006computation}.
The population protocol model is an abstract model for low-performance devices.
In the population protocol model, devices are represented as anonymous agents, and a population is represented as a set of agents.
Those agents move passively (\emph{i.e.}, they cannot control their movements), and when two agents approach, they are able to communicate and update their states (this pairwise communication is called an interaction). 
A computation then consists of an infinite sequence of interactions. 

Application domains for population protocols include sensor networks used to monitor live animals (each sensor is attached to a small animal and monitors \emph{e.g.} its body temperature) that move unpredictably (hence, each sensor must handle passive mobility patterns).
Another application domain is that of molecular robot networks~\cite{murata2013molecular}. 
In such systems, a large number of molecular robots collectively work inside a human body to achieve goals such as transport of medicine. 
Since those robots are tiny, their movement is uncontrollable, and robots may only maintain extremely small memory.

In the population protocol model, many researchers have studied various fundamental problems such as leader election protocols~\cite{angluin2005stably} (A population protocol solves leader election if starting from an initially uniform population of agents, eventually a single agent outputs \emph{leader}, while all others output \emph{non-leader}), counting~\cite{aspnes2017time,beauquier2015space,beauquier2007self} (The counting problem consists in counting how many agents participate to the protocol; As the agents' memory is typically constant, this number is output by a special agent that may maintain logarithmic size memory, the base station), majority \cite{angluin2008simple} (The majority problem aims to decide which, if any, initial state in a population is a majority), $k$-partition~\cite{yasumi2019population, bipartition, yasumi2019space} (The $k$-partition problem consists in dividing a population into $k$ groups of equal size), etc. 

In this paper, we focus on the uniform bipartition problem~\cite{yasumi2019uniform, bipartition, yasumi2019space}, whose goal is to divide a population into two stable groups of equal size (the difference is one if the population size is odd).
To guarantee the stability of the group, each agent eventually belongs to a single group and never changes the group after that.
Applications of the uniform bipartition include saving batteries in a sensor network by switching on only one group, or executing two tasks simultaneously by assigning one task to each group.
Contrary to previous work that considered \emph{complete} communication graphs~\cite{yasumi2019uniform, bipartition, yasumi2019space}, we consider the uniform bipartition problem over \emph{arbitrary} graphs.
In the population protocol model, most existing works consider the complete communication graph model (every pairwise interaction is feasible). However, realistic networks command studying incomplete communications graphs (where only a subset of pairwise interactions remains feasible) as low-performance devices and unpredictable movements may not yield a complete set of interactions. 
Moreover, in this paper, we assume the designated initial states (i.e., all agents share the same given initial state), and consider the problem under various assumptions such as the existence of a base station, symmetry of the protocol, and fairness of the execution.
Although protocols with arbitrary initial states tolerate a transient fault, protocols with designated initial states can usually be designed using fewer states, and exhibit faster convergence times. Actually, it was shown in \cite{bipartition} that, with arbitrary initial states, constant-space protocols cannot be constructed in most cases even assuming complete graphs.

\subsection{Related Works}
The population protocol model was proposed by Angluin et al.~\cite{angluin2006computation}, who were recently awarded the 2020 Edsger W. Dijkstra prize in Distributed Computing for their work. While the core of the initial study was dedicated to the computability of the model, subsequent works considered various problems (\emph{e.g.}, leader election, counting, majority, uniform $k$-partition) under different assumptions (\emph{e.g.}, existence of a base station, fairness, symmetry of protocols, and initial states of agents).

The \emph{leader election} problem was studied from the perspective of time and space efficiency. 
Doty and Soloveichik~\cite{doty2018stable} proved that $\Omega(n)$ expected parallel time is required to solve leader election with probability 1 if agents have a constant number of states. 
Relaxing the number of states to a polylogarithmic value, Alistarh and Gelashvili~\cite{alistarh2015polylogarithmic} proposed a leader election protocol in polylogarithmic expected stabilization time.
Then, G{\k{a}}sieniec et al.~\cite{gkasieniec2019almost} designed a protocol with $O(\log \log n)$ states and $O(\log n \cdot \log \log n)$ expected time. Furthermore, the protocol of G{\k{a}}sieniec et al.~\cite{gkasieniec2019almost} is space-optimal for solving the problem in polylogarithmic time. 
In \cite{sudo2020time}, Sudo et al. presented a leader election protocol with $O(\log n)$ states and $O(\log n)$ expected time. This protocol is time-optimal for solving the problem. 
Finally, Berenbrink et al.~\cite{berenbrink2020optimal} proposed a time and space optimal protocol that solves the leader election problem with $O(\log \log n)$ states and $O(\log n)$ expected time.
In the case of arbitrary communication graphs, it turns out that self-stabilizing leader election is impossible~\cite{angluin2008self} (a protocol is self-stabilizing if its correctness does not depend on its initial global state). This impossibility can be avoided if oracles are available~\cite{beauquier2013self,canepa2010self} or if the self-stabilization requirement is relaxed: Sudo et al.~\cite{sudo2014loosely} proposed a loosely stabilizing protocol for leader election (loose stabilization relates to the fact that correctness is only guaranteed for a very long expected amount of time).

The \emph{counting} problem was introduced by Beauquier et al.~\cite{beauquier2007self} and popularized the concept of a base station. 
Space complexity was further reduced by follow-up works~\cite{beauquier2015space,izumi2014space2}, until Aspnes et al.~\cite{aspnes2017time} finally proposed a time and space optimal protocol.
On the other hand, by allowing the initialization of agents, the counting protocols without the base station were proposed for both exact counting~\cite{berenbrink2019counting} and approximate counting~\cite{alistarh2017time,berenbrink2019counting}. 
In~\cite{alistarh2017time}, Alistarh et al. proposed a protocol that computes an integer $k$ such that $\frac{1}{2} \log n < k < 9 \log n$ in $O(\log n)$ time with high probability using $O(\log n)$ states.
After that, Berenbrink et al.~\cite{berenbrink2019counting} designed a protocol that outputs either $\lfloor \log n \rfloor$ or $\lceil \log n \rceil$ in $O(\log^2 n)$ time with high probability using  $O(\log n \cdot \log \log n)$ states. Moreover, in~\cite{berenbrink2019counting}, they proposed the exact counting protocol that computes $n$ in $O(\log n)$ time using $\tilde{O}(n)$ states with high probability.

The \emph{majority} problem was addressed under different assumptions (\emph{e.g.}, with or without failures~\cite{angluin2008simple}, deterministic~\cite{gasieniec2017deterministic,kosowski2018brief} or probabilistic~\cite{alistarh2018space,ben2020log3,berenbrink2018population,kosowski2018brief} solutions, with arbitrary communication graphs~\cite{mertzios2014determining}, etc.). 
Those works also consider minimizing the time and space complexity.
Recently, Berenbrink et al.~\cite{berenbrink2020time} shows trade-offs between time and space for the problem.

To our knowledge, the \emph{uniform $k$-partition} problem and its variants have only been considered in complete communication graphs. 
Lamani et al.~\cite{lamani2016realization} studied a group decomposition problem that aims to divide a population into groups of designated sizes. 
Yasumi et al.~\cite{yasumi2019population} proposed a uniform $k$-partition protocol 
with no base station. 
Umino et al.~\cite{tomoki2018differentiation} extended the result to the $R$-generalized partition problem that aims at dividing a population into $k$ groups whose sizes follow a given ratio $R$.
Also, Delporte-Gallet et al.~\cite{delporte2006birds} proposed a $k$-partition protocol with relaxed uniformity constraints: the population is divided into $k$ groups such that in any group, at least $n/(2k)$ agents exist, where $n$ is the number of agents.

\begin{table}[t!]
\begin{minipage}{1.0\hsize}
\caption{The minimum number of states to solve the uniform bipartition problem with designated initial states over \emph{complete} graphs~\cite{yasumi2019uniform, bipartition}.}
\begin{center}
\label{tab:desgraglobal}
\scalebox{0.8}{
\begin{tabular}{|c|c|c|c|c|}
\hline
base station & fairness & symmetry & upper bound & lower bound \\
\hline
\multirow{4}{*}{initialized/non-initialized base station} & \multirow{2}{*}{global} & asymmetric & 3 & 3 \\ 
\cline{3-5}
 &  & symmetric & 3 & 3 \\ 
\cline{2-5} 
 & \multirow{2}{*}{weak} & asymmetric & 3 & 3 \\ 
\cline{3-5}
 &  & symmetric & 3 & 3 \\ 
\cline{1-5}
\multirow{4}{*}{no base station} & \multirow{2}{*}{global} & asymmetric & 3 & 3 \\ 
\cline{3-5}
 &  & symmetric & 4 & 4 \\ 
\cline{2-5}
 & \multirow{2}{*}{weak} & asymmetric & 3 & 3 \\ 
\cline{3-5}
 &  & symmetric & \multicolumn{2}{c|}{unsolvable} \\ 
\hline
\end{tabular}
}
\end{center}
\end{minipage}

\end{table}

Most related to our work is the uniform bipartition solution for complete communication graphs provided by Yasumi et al.~\cite{yasumi2019uniform, bipartition}. 
For the uniform bipartition problem over complete graphs with designated initial states, Yasumi et al.~\cite{yasumi2019uniform, bipartition} studied space complexity under various assumptions such as: \emph{(i)} an initialized base station, a non-initialized base station, or no base station (an initialized base station has a designated initial state, while a non-initialized has an arbitrary initial state), 
\emph{(ii)} asymmetric or symmetric protocols (asymmetric protocols allow interactions between two agents with the same state to map to two resulting different states, while symmetric protocols do not allow such a behavior), and \emph{(iii)} global or weak fairness 
(weak fairness guarantees that every individual pairwise interaction occurs infinitely often, while global fairness guarantees that every recurrently reachable configuration is eventually reached).
Furthermore, they also study the solvability of the uniform bipartition problem with arbitrary initial states.
Table~\ref{tab:desgraglobal} shows the upper and lower bounds of the number of states to solve the uniform bipartition with designated initial states over complete communication graphs.

\begin{table}[t!]
\begin{minipage}{1.0\hsize}
\caption{The minimum number of states to solve the uniform bipartition problem with designated initial states over \emph{arbitrary} graphs. $P$ is a known upper bound of the number of agents, and $l \ge 3$ and $h$ are a positive integer.}
\begin{center}
\label{tab:arbgraglobal}
\scalebox{0.8}{
\begin{tabular}{|c|c|c|c|c|}
\hline
base station & fairness & symmetry & upper bound & lower bound  \\
\hline
\multirow{3}{*}{initialized/non-initialized} & \multirow{2}{*}{global} & asymmetric & 3{\bf *} & 3\textbf{$^\dag$}  \\
\cline{3-5}
\multirow{3}{*}{base station} &  & symmetric & 3{\bf *} & 3\textbf{$^\dag$}   \\
\cline{2-5} 
 & \multirow{2}{*}{weak} & asymmetric & \begin{tabular}{c}$3P+1${\bf *} \\ $3l+1$ for no $l \cdot h$ cycle {\bf *}\end{tabular}& 3\textbf{$^\dag$} \\
\cline{3-5}
 &  & symmetric & \begin{tabular}{c}$3P+1${\bf *} \\ $3l+1$ for no $l \cdot h$ cycle {\bf *}\end{tabular} & 3\textbf{$^\dag$} \\
\cline{1-5}
\multirow{4}{*}{no base station} & \multirow{2}{*}{global} & asymmetric & 4{\bf *} & 4{\bf *}  \\
\cline{3-5}
 &  & symmetric & 5{\bf *} & 5{\bf *}   \\
\cline{2-5}
 & \multirow{2}{*}{weak} & asymmetric & \multicolumn{2}{c|}{unsolvable{\bf *}}   \\
\cline{3-5}
 &  & symmetric & \multicolumn{2}{c|}{unsolvable\textbf{$^\dag$}} \\
\hline
\multicolumn{5}{r}{\textbf{*} Contributions of this paper}\\ 
\multicolumn{5}{r}{\textbf{$^\dag$} Deduced from Yasumi et al.~\cite{bipartition}}\\ 
\end{tabular}
}
\end{center}
\end{minipage}
\end{table}

There exist some protocol transformers that transform protocols for some assumptions into ones for other assumptions.
In~\cite{angluin2006computation}, Angluin et al. proposed a transformer that transforms a protocol with complete communication graphs into a protocol with arbitrary communication graphs. 
This transformer requires the quadruple state space and works under global fairness.
In this transformer, agents exchange their states even after convergence.
For the uniform bipartition problem, since agents must keep their groups after convergence, they cannot exchange their states among different groups and thus the transformer proposed in~\cite{angluin2006computation} cannot directly apply to the uniform bipartition problem. 
Bournez et al.~\cite{bournez2013population} proposed a transformer that transforms an asymmetric protocol into symmetric protocol by assuming additional states. In~\cite{bournez2013population}, only protocols with complete communication graphs were considered and the transformer works under global fairness.
We use the same idea to construct a symmetric uniform bipartition protocol under global fairness without a base station.

\subsection{Our Contributions}
In this paper, we study the solvability of the uniform bipartition problem with designated initial states over arbitrary graphs. A summary of our results is presented in Table \ref{tab:arbgraglobal}. 
Let us first observe that, as complete communication graphs are a special case of arbitrary communication graphs, the impossibility results by Yasumi et al.~\cite{bipartition} remain valid in our setting.  
With a base station (be it initialized or non-initialized) under global fairness, we extend the three states protocol by Yasumi et al.~\cite{bipartition} from complete communication graphs to arbitrary communication graphs. 
With a non-initialized base station under weak fairness, we propose a new symmetric protocol with $3P+1$ states, where $P$ is a known upper bound of the number of agents. These results yield identical upper bounds for the easier cases of asymmetric protocols and/or initialized base station.
In addition, we also show a condition of communication graphs in which the number of states in the protocol can be reduced from $3P+1$ to constant.
Concretely, we show that the number of states in the protocol can be reduced to $3l+1$ if we assume communication graphs such that every cycle either includes the base station or its length is not a multiple of $l$, where $l$ is a positive integer greater than three.
On the other hand, with no base station under global fairness, we prove that four and five states are necessary and sufficient to solve uniform bipartition with asymmetric and symmetric protocols, respectively.
In the same setting, in complete graphs, three and four states were necessary and sufficient. 
So, one additional state enables problem solvability in arbitrary communications graphs in this setting. 
With no base station under weak fairness, we prove that the problem cannot be solved, using a similar argument as in the impossibility result for leader election by Fischer and Jiang~\cite{fischer2006self}.
Overall, we show the solvability of uniform bipartition in a variety of settings for a population of agents with designated initial states assuming arbitrary communication graphs. In cases where the problem remains feasible, we provide upper and lower bounds with respect to the number of states each agent maintains, and in all cases where global fairness can be assumed, our bounds are tight.

\section{Definitions}
 \subsection{Population Protocol Model}

A population whose communication graph is arbitrary is represented by an undirected connected graph $G=(V,E)$, where $V$ is a set of agents, and $E \subseteq V \times V$ is a set of edges that represent the possibility of an interaction between two agents.
That is, two agents $u \in V$ and $v \in V$ can interact only if $(u,v) \in E$ holds.
A protocol ${\cal P}=(Q, \delta)$ consists of $Q$ and $\delta$, where $Q$ is a set of possible states for agents, and $\delta$ is a set of transitions from $Q\times Q$ to $Q\times Q$.
Each transition in $\delta$ is denoted by $(p, q) \rightarrow (p', q')$,
which means that, when an interaction between an agent $x$ in state $ p $ and an agent $y$ in state $ q $ occurs, their states become $ p'$ and $ q' $, respectively. Moreover, we say $x$ is an initiator and $y$ is a responder. When $x$ and $y$ interact as an initiator and a responder, respectively, we simply say that $x$ interacts with $y$.
Transition $(p,q) \rightarrow (p',q')$ is null if both $p=p'$ and $q=q'$ hold. We omit null transitions in the  descriptions of protocols.
Protocol ${\cal P}=(Q, \delta)$ is symmetric if, for every transition $(p, q) \rightarrow (p',q')$ in $\delta$, $(q, p) \rightarrow (q',p')$ exists in $\delta$. In particular, if a protocol ${\cal P}=(Q,\delta)$ is symmetric and transition $(p, p) \rightarrow (p',q')$ exists in $\delta$, $p'=q'$ holds. If a protocol is not symmetric, the protocol is asymmetric.
Protocol ${\cal P}=(Q,\delta)$ is deterministic if, for any pair of states $(p,q)\in Q\times Q$, exactly one transition $(p,q)\rightarrow (p', q') $ exists in $\delta$. We consider only deterministic protocols in this paper. 
 A global state of a population is called a configuration, defined as a vector of (local) states of all agents.
 A state of agent $a$ in configuration $C$, is denoted by $s(a, C)$. 
Moreover, when $C$ is clear from the context, we simply use $s(a)$ to denote the state of agent $a$.
A transition between two configurations $C$ and $C'$ is described as $C \rightarrow C'$, and means that configuration $C'$ is obtained from $C$ by a single interaction between two agents. 
 For two configurations $C$ and $C'$, if there exists a sequence of configurations $C = C_0, C_1, \ldots , C_m = C'$ such that $C_i \rightarrow C_{i+1}$ holds for every $i$ ($0 \le i < m$), we say 
 $C'$ is reachable from $C$, denoted by $C \xrightarrow{*} C'$.
 An infinite sequence of configurations $\Xi =C_0, C_1, C_2, \ldots$ is an execution of a protocol if $C_i \rightarrow C_{i+1}$ holds for every $i$ ($i \ge 0$).
An execution $\Xi$ is weakly-fair if, for each pair of agents $(v, v') \in E$, $v$ (resp. $v'$) interacts with $v'$ (resp., $v$) infinitely often\footnote{We use this definition for the lower bound under weak fairness, but for the upper bound we use a slightly weaker version. We show that our proposed protocols for weak fairness works if, for each pair of agents $(v, v') \in E$, $v$ and $v'$ interact infinitely often (i.e., for interactions by some pair of agents $v$ and $v'$, it is possible that $v$ only becomes an initiator and $v'$ never becomes an initiator).}. 
An execution $\Xi$ is globally fair if, for every pair of configurations $C$ and $C'$ such that $C \rightarrow C'$, $C'$ occurs infinitely often when $C$ occurs infinitely often. 
Intuitively, global fairness guarantees that, if configuration $C$ occurs infinitely often, then
every possible interaction in $C$ also occurs infinitely often. Then, if $C$ occurs infinitely often, $C'$ satisfying $C\rightarrow C'$ occurs infinitely often, we can deduce that $C''$ satisfying $C'\rightarrow C''$ also occurs infinitely often. Overall, with global fairness, if a configuration $C$ occurs infinitely often, then every configuration $C^*$ reachable from $C$ also occurs infinitely often.

In this paper, we consider three possibilities for the base station: initialized base station, non-initialized base station, and no base station.  
In the model with a base station, we assume that a single agent, called a base station, exists in $V$.
Then, $V$ can be partitioned into $V_b$, the singleton set containing the base station, and $V_p$, the set of agents except for the base station. 
The base station can be distinguished from other agents in $V$, although agents in $V_p$ cannot be distinguished.
Then, the state set $Q$ can be partitioned into a state set $ Q_b $ for the base station, and a state set $ Q_p $ for agents in $V_p$.
The base station has unlimited resources (with respect to the number of states), in contrast with other resource-limited agents (that are allowed only a limited number of states).
So, when we evaluate the space complexity of a protocol, we focus on the number of states $|Q_p|$ for agents in $V_p$ and do not consider the number of states $|Q_b|$ that are allocated to the base station.
In the sequel, we thus say a protocol uses $x$ states if $|Q_p|=x$ holds.
When we assume an initialized base station, the base station has a designated initial state.
When we assume a non-initialized base station, the base station has an arbitrary initial state (in $Q_b$), although agents in $V_p$ have the same designated initial state. When we assume no base station, there exists no base station and thus $V=V_p$ holds. 
 For simplicity, we use agents only to refer to agents in $V_p$ in the following sections. To refer to the base station, we always use the term base station (not an agent). 
 In the initial configuration, both the base station and the agents are not aware of the number of agents, yet they are given an upper bound $P$ of the number of agents. However, in protocols except for a protocol in Section \ref{sec:3P+1}, we assume that they are not given $P$. 

\subsection{Uniform Bipartition Problem} 
 Let $ f: Q_p \rightarrow \{red$, $blue \}$ be a function that maps a state of an agent to $red$ or $blue$. We define the color of an agent $a$ as $f(s(a))$. Then, we say that agent $a$ is $red$ (resp., $blue$) if $f(s(a))=red$ (resp., $f(s(a))=blue$) holds. If an agent $a$ has state $s$ such that $f(s) = red$ (resp., $f(s) = blue$), we call $a$ a $red$ agent (resp., a $blue$ agent). 
For some population $V$, the number of $red$ agents (resp., $blue$ agents) in $V$ is denoted by $\nred{V}$ (resp., $\nblue{V}$).
When $V$ is clear from the context, we simply write $\# red$ and $\# blue$.

A configuration $ C $ is stable with respect to the uniform bipartition if there exists a partition $\{ H_r$, $H_b\}$ of $V_p$ that satisfies the following conditions:
 
 \begin{enumerate}
 	\item $\left||H_r|-|H_b|\right| \leq 1$ holds, and
 	\item For every configuration $C^{\prime}$ such that $C \xrightarrow{*} C^{\prime}$, each agent in $H_r$ (resp., $H_b$) remains $red$ (resp., $blue$) in $C^{\prime}$.
 \end{enumerate}
 
  An execution $\Xi = C_0$, $C_1$, $C_2$, $\ldots$ solves the uniform bipartition problem if $\Xi$ includes a configuration $C_t$ that is stable for uniform bipartition.
  Finally, a protocol $ {\cal P} $ solves the uniform bipartition problem if every possible execution $ \Xi $ of protocol $ {\cal P} $ solves the uniform bipartition problem.

\section{Upper Bounds with Non-initialized Base Station}

In this section, we prove some upper bounds on the number of states that are required to solve the uniform bipartition problem over arbitrary graphs with designated initial states and a non-initialized base station.
More concretely, with global fairness, we propose a symmetric protocol with three states by extending the protocol by Yasumi et al.~\cite{bipartition} from a complete communication graph to an arbitrary communication graph.
In the case of weak fairness, we present a symmetric protocol with $3P+1$ states, where $P$ is a known upper bound of the number of agents.

\subsection{Upper Bound for Symmetric Protocols under Global Fairness}

The state set of agents in this protocol is $ Q_p=\{initial, red, blue \} $, and we assume that $ f(initial) = f(red) = red $ and $ f(blue) = blue $ hold.
The designated initial state of agents is $initial$. 
The idea of the protocol is as follows: the base station assigns $red$ and $blue$ to agents whose state is $initial$ alternately.
As the base station cannot meet every agent (the communication graph is arbitrary), the positions of state $initial$ are moved throughout the communication graph using transitions. Thus, if an agent with $initial$ state exists somewhere in the network, the base station has infinitely many chances to interact with a neighboring agent with $initial$ state.
This implies that the base station is able to repeatedly assign $red$ and $blue$ to neighboring agents with $initial$ state unless no agent anywhere in the network has $initial$ state. 
Since the base station assigns $red$ and $blue$ alternately, the uniform bipartition is completed after no agent has $initial$ state.

To make $red$ and $blue$ alternately, the base station has a state set $Q_b=\{b_{red},b_{blue}\}$. Using its current state, the base station decides which color to use for the next interaction with a neighboring agent with $initial$ state.
Now, to move the position of an $initial$ state in the communication graph, if an agent with $initial$ state and an agent with $red$ (or $blue$) state interact, they exchange their states.
This implies that eventually an agent adjacent to the base station has $initial$ state and then the agent and the base station interact (global fairness guarantees that such interaction eventually happens). 
Transition rules of the protocol are the following (for each transition rule $(p,q) \rightarrow (p',q')$, transition rule $(q,p) \rightarrow (q',p')$ exists, but we omit the description).

\begin{enumerate}
\item $(b_{red},initial)\rightarrow(b_{blue},red)$
\item $(b_{blue},initial)\rightarrow(b_{red},blue)$
\item $(blue,initial)\rightarrow(initial,blue)$
\item $(red,initial)\rightarrow(initial,red)$
\end{enumerate}

From these transition rules, the protocol converges when no agent has $initial$ state (indeed, no interaction is defined when no agent has $initial$ state).

We now prove the correctness of the protocol.

\begin{theorem}
In the population protocol model with a non-initialized base station, there exists a symmetric protocol with three states per agent that solves the uniform bipartition problem with designated initial states assuming global fairness in arbitrary communication graphs.
\end{theorem}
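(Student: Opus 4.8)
The plan is to prove two separate facts about the protocol — a safety invariant showing the coloring is always balanced, and a progress statement showing every agent is eventually colored — and then combine them, since a configuration with no $initial$ agent has no enabled transition and is therefore automatically stable.

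For safety, I would track the numbers $c_r$, $c_b$, $c_i$ of agents in states $red$, $blue$, $initial$. Transitions 3 and 4 merely swap the $initial$ token with a colored neighbor, so they leave $(c_r,c_b,c_i)$ and the base station's state untouched; only transitions 1 and 2 change anything, each decrementing $c_i$ by one, incrementing $c_r$ (resp.\ $c_b$) by one, and flipping the base station from $b_{red}$ to $b_{blue}$ (resp.\ from $b_{blue}$ to $b_{red}$). I would then prove by induction on execution length the invariant: $|c_r-c_b|\le 1$, with $c_r-c_b=1$ only if the base station is in $b_{blue}$ and $c_r-c_b=-1$ only if it is in $b_{red}$. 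The base case is the initial configuration ($c_r=c_b=0$); the inductive step is trivial for null transitions and for transitions 3, 4, and for transitions 1, 2 it follows because the enabling condition on the base station's state, combined with the invariant, pins $c_r-c_b$ to the correct value just before the step.

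For progress, I would show that under global fairness every execution reaches a configuration with $c_i=0$. Since $c_i$ is non-increasing and the configuration space is finite, some configuration $C$ occurs infinitely often; suppose for contradiction $c_i(C)\ge 1$. Fix an agent $v$ with state $initial$ in $C$, take a simple path from $v$ to the base station in $G$ (it exists by connectivity), and let $w$ be the agent immediately preceding the base station on it, so that the sub-path from $v$ to $w$ avoids the base station. By repeatedly swapping the $initial$ token forward along this sub-path using transitions 3 and 4 — none of which involves the base station — we reach a configuration in which $w$ has state $initial$; then the base station interacts with $w$ via transition 1 or 2, reaching a configuration $C''$ with $c_i(C'')=c_i(C)-1$. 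Since $C\xrightarrow{*}C''$ and $C$ occurs infinitely often, global fairness forces $C''$ to occur infinitely often, hence $C$ recurs after some occurrence of $C''$; but $c_i$ is non-increasing and $c_i(C)>c_i(C'')$, a contradiction. Therefore $c_i(C)=0$, so the execution does reach a configuration with no $initial$ agent.

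Finally I would assemble the pieces: let $C_t$ be a reached configuration with $c_i=0$. Every non-null transition requires an agent in state $initial$, so no interaction is enabled at $C_t$, whence $C_t$ is the only configuration reachable from $C_t$ and each agent keeps its color forever. Taking $H_r$ and $H_b$ to be the sets of $red$ and $blue$ agents (which partition $V_p$, as every agent is now colored), the invariant yields $\bigl||H_r|-|H_b|\bigr|=|c_r-c_b|\le 1$, so $C_t$ is stable for uniform bipartition and the execution solves the problem; since the transition set is closed under swapping initiator and responder and $|Q_p|=3$, the protocol is symmetric with three states, establishing the theorem. I expect the progress lemma to be the main obstacle — in particular the routing argument that an $initial$ token can always be driven to a base-station neighbor without changing $c_i$, together with the correct use of global fairness to turn reachability from a recurrent configuration into recurrence of the target configuration; the invariant itself is routine bookkeeping.
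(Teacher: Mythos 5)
Your proposal is correct and follows essentially the same approach as the paper's proof: alternation of the base station's assignments gives the balance invariant, and token movement via the swap rules plus global fairness gives progress until no $initial$ agent remains, at which point no transition is enabled. Your write-up is in fact more rigorous than the paper's (the explicit invariant tying the sign of $c_r-c_b$ to the base station's state, and the recurrent-configuration contradiction for progress); the only point to tidy is the routing step, where two adjacent $initial$ agents cannot swap (null transition), so one should drive toward the base station the $initial$ token held by the path agent closest to it.
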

\begin{proof}
Recall that $\# red$ (resp., $\# blue$) denotes the number of agents in state $s$ such that $f(s)=red$ (resp., $f(s)=blue$) holds, thus initially $\# red = n$ and $\# blue = 0$ hold.
We show that eventually $|\# red - \# blue| \le 1 $ holds and no agent ever changes its state afterwards.
Transition rules 1 and 2 indicate that, if the base station has state $b_{red}$ (resp., $b_{blue}$), and an agent with $initial$ state interact, the agent state becomes $red$ (resp., $blue$), and the base station state becomes $b_{blue}$ (resp., $b_{red}$).
By repeating these transitions, agents are assigned $red$ and $blue$ alternately, and the number of agents in state $initial$ decreases.
This implies that, eventually $|\# red - \# blue| \le 1 $ holds by alternating transition rules 1 and 2 (transition rules 1 and 2 are never enabled simultaneously, and executing one disables it while enabling the other).
Transition rules 3 and 4 indicate that, if an agent with $initial$ state and an agent with $red$ (or $blue$) state interact, they exchange their states.
From the global fairness hypothesis and transition rules 3 and 4, if there exists an agent with $initial$ state, the base station and a neighboring agent with $initial$ state eventually interact. Thus, transition rules 1 and 2 occur repeatedly unless no agent has $initial$.
After all agents have $red$ or $blue$ state, no agent ever changes its color. Recall that agents are assigned $red$ and $blue$ alternately and thus $|\# red - \# blue| \le 1$ holds after those assignments.
Moreover, the arguments do not depend on the initial state of the base station, and the proposed protocol is symmetric.
Therefore, the theorem holds.
\end{proof}

Note that, under weak fairness, this protocol does not solve the uniform bipartition problem. This is because we can construct a weakly-fair execution of this protocol such that some agents keep $initial$ state infinitely often.
For example, we can make an agent keep $initial$ by constructing an execution in the following way.
\begin{itemize}
\item If the agent (in $initial$) interacts with an agent in $red$ or $blue$, the next interaction occurs between the same pair of agents. 
\end{itemize}

\subsection{Upper Bound for Symmetric Protocols under Weak Fairness}
\label{sec:3P+1}
\subsubsection{A protocol over arbitrary graphs}
We obtain a symmetric protocol under the assumption by using a similar idea of the transformer proposed in~\cite{bournez2013population}. 
The transformer simulates an asymmetric protocol on a symmetric protocol. To do this, the transformer requires additional states. 
Moreover, the transformer works with complete communication graphs. We show that one additional state is sufficient to transform the asymmetric uniform bipartition protocol into the symmetric protocol even if we assume arbitrary graphs.

\begin{algorithm}[t!]
	\caption{Uniform bipartition protocol with $3P+1$ states}         
	\label{alg2}
	\algblock{when}{End}
	\begin{algorithmic}[1]
		\renewcommand{\algorithmicrequire}{\textbf{Variables at the base station:}}
		\renewcommand{\algorithmicwhile}{\textbf{when}}
		\Require
		\Statex $RB \in \{r,b\}$: The state that the base station assigns next
		
		\renewcommand{\algorithmicrequire}{\textbf{Variables at an agent $x$:}}
		\Require
		\Statex $color_x \in \{ ini, r,b\}$: Color of the agent, initialized to $ini$
		\Statex $depth_x \in \{\bot, 1, 2, 3, \ldots, P\}$: Depth of agent $x$ in a tree rooted at the base station, initialized to $\bot$
		\While {an agent $x$ and the base station interact}
		\If{$color_x = ini$ and $depth_x = 1$ }
		\State $color_x \leftarrow RB$
		\State $RB \leftarrow \overline{RB}$
		\EndIf
		\If{$depth_x = \bot$ }
		\State $depth_x \leftarrow 1$
		\EndIf
		\EndWhile

		\While {two agents $x$ and $y$ interact}
		\If{$depth_y \neq \bot$ and $depth_x = \bot$ }
		\State $depth_x \leftarrow depth_y + 1$
		\ElsIf{ $depth_x \neq \bot$ and $depth_y = \bot$ }
		\State $depth_y \leftarrow depth_x + 1$
		\EndIf
		\If{$depth_x < depth_y$ and $color_y = ini$ }
		\State $color_y \leftarrow color_x$
		\State $color_x \leftarrow ini$
		\EndIf
		\If{$depth_y < depth_x$ and $color_x = ini$ }
		\State $color_x \leftarrow color_y$
		\State $color_y \leftarrow ini$
		\EndIf
		\EndWhile
		\renewcommand{\algorithmicrequire}{\textbf{Note:} If $depth_x = \bot$ holds, $color_x = ini$ holds.}
		\Require
	\end{algorithmic}
\end{algorithm}

In this protocol, every agent $x$ has variables $color_x$ and $depth_x$.
Variable $color_x$ represents the color of agent $x$.
That is, for an agent $x$, if $color_x=ini$ or $color_x=r$ holds, $f(s(x))=red$ holds.
On the other hand, if $color_x=b$ holds, $f(s(x))=blue$ holds.
The protocol is given in Algorithm~\ref{alg2}. Note that this algorithm does not care an initiator and a responder.

The basic strategy of the protocol is the following.
\begin{enumerate}
\item Create a spanning tree rooted at the base station. Concretely, agent $x$ assigns its depth in a tree rooted at the base station into variable $depth_x$. Variable $depth_x$ is initialized to $\bot$.
Variable $depth_x$ obtains the depth of $x$ in the spanning tree as follows:
If the base station and an agent $p$ with $depth_p = \bot$ interact, $depth_p$ becomes $1$. 
If an agent $q$ with $depth_q \neq \bot$ and an agent $p$ with $depth_p = \bot$ interact, $depth_p$ becomes $depth_q +1$.
By these behaviors, for any agent $x$, eventually variable $depth_x$ has a depth of $x$ in a tree rooted at the base station.
\item Using the spanning tree, carry the initial color $ini$ toward the base station and make the base station assign $r$ and $b$ to agents one by one.
Concretely, if agents $x$ and $y$ interact and both $depth_y < depth_x$ and $color_x = ini$ hold, $x$ and $y$ exchange their colors (i.e., $ini$ is carried from $x$ to $y$).
Hence, since $ini$ is always carried to a smaller $depth$, eventually an agent $z$ with $depth_z=1$ obtains $ini$. After that, the base station and the agent $z$ interact and the base station assigns $r$ or $b$ to $z$.
Note that, if the base station assigns $r$ (resp., $b$), the base station assigns $b$ (resp., $r$) next.
\end{enumerate}
Then, for any agent $v$, eventually $color_{v} \neq ini$ holds. Hence, there exist $\lceil n/2 \rceil$ $red$ (resp., $blue$) agents, and $\lfloor n/2 \rfloor$ $blue$ (resp., $red$) agents if variable $RB$ in the base station has $r$ (resp., $b$) as an initial value. 
Therefore, the protocol solves the uniform bipartition problem.

From now on, we demonstrate the correctness of the protocol.
Let $\# ini=|\{x| color_x=ini \}|$, $\# r=|\{x| color_x=r \}|$, and $\# b=|\{x| color_x=b \}|$.
First, we show that $\# r$, $\# b$, and $\# ini$ do not change except for an interaction between the base station and an agent with a specific state.

\begin{lemma}
\label{lem:color}
For any weakly-fair execution of Algorithm~\ref{alg2}, unless an interaction occurs between the base station and an agent $v$ such that $depth_v = 1$ and $color_v = ini$ hold, $\# r$, $\# b$, and $\# ini$ do not change.
\end{lemma}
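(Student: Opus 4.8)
The plan is to prove the lemma by a direct case analysis on the transitions of Algorithm~\ref{alg2}, exploiting the fact that $\#r$, $\#b$, and $\#ini$ can change only through an interaction that writes some agent's $color$ variable. Inspecting the algorithm, a $color$ variable is assigned in exactly three places: the line $color_x \leftarrow RB$ in the base-station block, and the two symmetric color-exchange blocks ($color_y \leftarrow color_x;\ color_x \leftarrow ini$ and its mirror) in the agent--agent block. Every other line modifies only $depth$ variables or the base-station variable $RB$, and hence leaves all three counts untouched. So it suffices to analyze these three color-writing branches.

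First, the base-station branch that writes a color is guarded by $color_x = ini$ \emph{and} $depth_x = 1$; an interaction triggering it is precisely an interaction between the base station and an agent $v$ with $depth_v = 1$ and $color_v = ini$, i.e.\ the interaction excluded in the statement. The other base-station branch merely sets $depth_x \leftarrow 1$ when $depth_x = \bot$; by the invariant recorded in the Note of Algorithm~\ref{alg2}, $depth_x = \bot$ implies $color_x = ini$, so this step leaves $color_x = ini$ and changes no count.

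Second, consider an agent--agent interaction between $x$ and $y$. The depth-update lines do not touch colors. If a color-exchange branch fires, say the one with guard $depth_x < depth_y$ and $color_y = ini$, then after the step $color_y$ equals the old $color_x$ while $color_x$ equals $ini$; since the old $color_y$ was already $ini$, the multiset $\{color_x, color_y\}$ is unchanged, so $\#r$, $\#b$, $\#ini$ are all preserved. The mirror branch is symmetric, and the two branches cannot both fire in one interaction since their guards $depth_x < depth_y$ and $depth_y < depth_x$ are mutually exclusive. The one subtlety to check is that a $depth$ may be assigned earlier in the same interaction (e.g.\ $depth_x$ moves from $\bot$ to $depth_y+1$ before the color test is evaluated); but in that case the Note gives $color_x = ini$, so the subsequent swap is trivial and still count-preserving.

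Combining the three cases, the only interaction that can change $\#r$, $\#b$, or $\#ini$ is an interaction of the base station with an agent $v$ satisfying $depth_v = 1$ and $color_v = ini$, which is exactly the claim. The argument is essentially bookkeeping; the main point requiring care is the order of assignments within a single interaction together with the use of the Note's invariant ($depth = \bot \Rightarrow color = ini$). If that invariant is not taken as given, it should be established first by a trivial induction: it holds in the initial configuration, $color$ is only overwritten by branches whose guards force $depth \neq \bot$, and $depth$ only ever moves away from $\bot$.
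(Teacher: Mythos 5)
Your proposal is correct and follows essentially the same route as the paper's proof: a case analysis on which lines can write a $color$ variable, observing that the agent--agent branches merely swap colors (preserving the counts because the guard forces one of them to be $ini$) and that the only base-station branch writing a color is guarded by exactly the excluded condition $depth_v=1 \land color_v=ini$. Your additional care about assignment order within an interaction and the invariant $depth=\bot \Rightarrow color=ini$ is a welcome refinement of the paper's more terse argument but does not change the approach.
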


\begin{proof}
For each interaction occurring in a weakly-fair execution of the protocol, we consider two cases.

First, we consider the case that the base station does not participate to the interaction.
From the protocol, when agents interact, the color of agents may only change in lines 17-18 or 21-22 of the pseudocode.
On lines 17-18 and 21-22, two agents $x$ and $y$ just exchange $color_x$ and $color_y$.
Thus, $\# r$, $\# b$, and $\# ini$ do not change in this case.

Next, we consider the case that the base station participates the interaction.
In the case, from the protocol, variable $color$ changes only if the base station interacts with an agent $v$ such that $depth_v = 1$ and $color_v = ini$ hold.
Thus, the lemma holds.
\end{proof}

We now show three basic properties of variable $depth$.

\begin{lemma}
\label{lem:nochangedepth}
For any agent $v$, when $depth_{v} \neq \bot$ holds in a configuration, $depth_{v}$ does not change afterwards.
\end{lemma}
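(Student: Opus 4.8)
The statement to prove is Lemma~\ref{lem:nochangedepth}: once $depth_v \neq \bot$, it never changes. Let me sketch a proof.

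The plan is to examine every line of Algorithm~\ref{alg2} that writes to the variable $depth$ and verify that each such write is guarded by a condition requiring the variable being written to currently equal $\bot$. Concretely, I would first enumerate the assignments to $depth$-variables in the pseudocode: line~7 ($depth_x \leftarrow 1$ in the base-station branch, guarded by $depth_x = \bot$ on line~6), line~13 ($depth_x \leftarrow depth_y + 1$, guarded by $depth_x = \bot$ on line~12), and line~15 ($depth_y \leftarrow depth_x + 1$, guarded by $depth_y = \bot$ on line~14). No other line of the protocol modifies any $depth$-variable — in particular, lines 17--18 and 21--22 touch only $color$-variables. Hence a $depth$-variable can only be assigned a (non-$\bot$) value when it currently holds $\bot$.

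Then the argument is a straightforward induction on the length of the execution prefix (or, equivalently, on the index of the configuration). Fix an agent $v$ and suppose $depth_v \neq \bot$ holds in some configuration $C_i$. I would show by induction on $j \ge i$ that $depth_v$ has the same value in $C_j$ as in $C_i$. The base case $j = i$ is immediate. For the inductive step, consider the interaction taking $C_j$ to $C_{j+1}$. If $v$ does not participate in this interaction, $depth_v$ is unchanged. If $v$ participates, then among the three assignment lines above, each one that could write $depth_v$ is executed only when its guard — which demands $depth_v = \bot$ — is satisfied; but by the inductive hypothesis $depth_v \neq \bot$ in $C_j$, so none of those guards holds for $v$, and $depth_v$ is again unchanged in $C_{j+1}$. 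This completes the induction, and the lemma follows.

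There is really no obstacle here: the lemma is a direct syntactic invariant of the code, and the only care needed is to be exhaustive in listing the lines that can modify $depth$ and to note that $\bot \neq 1$ and $\bot \neq depth_y + 1$ for any legal value of $depth_y$, so that the freshly assigned value is genuinely distinct from $\bot$ and the variable has "left" $\bot$ permanently. If anything deserves a sentence of justification, it is the observation that the pseudocode's $when$-blocks for base-station interactions and for agent-agent interactions between them cover all interactions, so the case analysis above is complete; this is immediate from the structure of Algorithm~\ref{alg2}.
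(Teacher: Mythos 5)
Your proof is correct and takes essentially the same approach as the paper: both arguments observe that the only lines of Algorithm~\ref{alg2} that write a $depth$-variable are guarded by a test that the variable currently equals $\bot$, so a non-$\bot$ value can never be overwritten. Your version merely makes the induction on the configuration index explicit, which the paper leaves implicit.
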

\begin{proof}
From the protocol, variable $depth_v$ of agent $v$ changes only in line 7, 12, or 14 of the pseudocode.
On lines 7, 12, and 14, $depth_v$ changes only if $depth_v = \bot$ holds.
Thus, the lemma holds.
\end{proof}

\begin{lemma}
\label{lem:depthnobot}
For any agent $v$, $depth_{v} \neq \bot$ holds after some configuration in any weakly-fair execution of Algorithm~\ref{alg2}. 
\end{lemma}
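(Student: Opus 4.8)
The plan is to prove the statement by induction on the graph distance from the base station. Since $G$ is connected and finite, every agent $v$ has a well-defined finite distance $d(v)$ to the (unique) base station, so it suffices to show that for every $j \ge 1$, every agent at distance $j$ eventually has $depth \neq \bot$ and keeps it forever. Note that by Lemma~\ref{lem:nochangedepth}, once $depth_v \neq \bot$ holds it never changes, so ``eventually $depth_v \neq \bot$'' and ``eventually $depth_v \neq \bot$ permanently'' are the same statement; I will phrase the induction in the permanent form because that is what the inductive step needs.

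For the base case $j=1$, let $v$ be an agent adjacent to the base station. By (the weaker form of) weak fairness stated in the footnote, the base station and $v$ interact at some step of the execution. At that step, if $depth_v = \bot$ still holds, lines 6--7 of Algorithm~\ref{alg2} fire and set $depth_v \leftarrow 1 \neq \bot$; if $depth_v \neq \bot$ already, there is nothing to do. By Lemma~\ref{lem:nochangedepth}, $depth_v \neq \bot$ from that configuration on. For the inductive step, assume every agent at distance at most $j$ eventually has $depth \neq \bot$ permanently. Fix an agent $v$ with $d(v) = j+1$ and a neighbor $u$ of $v$ with $d(u) = j$. By the induction hypothesis there is a configuration $C$ in the execution after which $depth_u \neq \bot$ always holds; since only finitely many agents are at distance at most $j$, we may in fact take $C$ so that \emph{all} of them have already reached a non-$\bot$ depth (configurations in an execution are indexed by $\mathbb{N}$, so take the maximum of the finitely many stabilization indices). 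By weak fairness, $u$ and $v$ interact at some step after $C$; at that step $depth_u \neq \bot$, and lines 11--14 set the $\bot$-valued one of $\{depth_u, depth_v\}$ to the other's value plus one, so if $depth_v = \bot$ then it becomes $depth_u + 1 \neq \bot$. Again by Lemma~\ref{lem:nochangedepth} it stays non-$\bot$ forever. This closes the induction, and since every agent lies at some finite distance from the base station, the lemma follows.

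The argument is essentially the standard ``information diffuses along a BFS tree under fairness'' pattern, so I do not expect a real obstacle; the two places to be careful are (i) passing from ``each of finitely many agents eventually stabilizes'' to ``a single configuration after which all are stabilized,'' which is immediate once one observes configurations are $\mathbb{N}$-indexed, and (ii) checking that lines 11--14 perform the $\bot \to$ non-$\bot$ update regardless of which of the two interacting agents is the initiator and which is the responder — this is fine because the two branches in lines 11--14 are symmetric in $x$ and $y$ and Algorithm~\ref{alg2} ignores the initiator/responder roles, which is also why invoking the weaker fairness notion of the footnote (only that $u$ and $v$ interact, not who initiates) causes no problem.
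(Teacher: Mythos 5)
Your proof is correct and follows essentially the same route as the paper's: establish that agents adjacent to the base station acquire a non-$\bot$ depth via weak fairness and lines 6--7, then propagate outward using Lemma~\ref{lem:nochangedepth} and weak fairness on each edge. The only difference is that you make the induction on distance from the base station explicit, whereas the paper phrases the same propagation argument informally.
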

\begin{proof}
By the weak fairness assumption, for any agent $v_0$ adjacent to the base station, eventually $v_0$ and the base station interact. By lines 6-7 of the pseudocode, $depth_{v_0} \neq \bot$ holds after the interaction with the base station occurs.
From Lemma \ref{lem:nochangedepth}, for any agent $v$, once $depth_v \neq \bot$ holds, $depth_v \neq \bot$ holds perpetually. 
Hence, if agent $v$ with $depth_{v} = \bot$ is adjacent to $v'$ with $depth_{v'} \neq \bot$, eventually agents $v$ and $v'$ interact by the weak fairness assumption, and after the interaction occurs, $depth_v \neq \bot$ holds.
This implies that, for any agent $v$, eventually $depth_{v} \neq \bot$ holds.
\end{proof}

\begin{lemma}
\label{lem:depth+1}
For any agent $v_1$, when $depth_{v_1}$ is neither $\bot$ nor one, then $v_1$ is adjacent to an agent $v_2$ such that $depth_{v_1} = depth_{v_2}+1$ holds.
When $depth_{v_1}$ is one, $v_1$ is adjacent to the base station.
\end{lemma}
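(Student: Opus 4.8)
The plan is to pinpoint the single interaction at which $depth_{v_1}$ first stops being $\bot$ and to read off the claimed adjacency directly from the transition rule applied there. By Lemma~\ref{lem:nochangedepth}, in every configuration where $depth_{v_1}\neq\bot$ the value of $depth_{v_1}$ equals whatever it became at that first assignment, so it suffices to analyze the interaction that performed this assignment. Inspecting Algorithm~\ref{alg2}, the only lines that write $depth_{v_1}$ are line 7 (which fires in an interaction between $v_1$ and the base station), line 12, and line 14 (which fire in an interaction between $v_1$ and another agent); each of these executes only when $depth_{v_1}=\bot$ currently holds, consistent with this being the unique transition moment for $depth_{v_1}$.

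First I would handle the case $depth_{v_1}=1$. Since the depth domain is $\{\bot,1,\dots,P\}$ and lines 12 and 14 set $depth_{v_1}$ to $depth_{v_2}+1$ for a partner with $depth_{v_2}\neq\bot$, hence $depth_{v_2}\ge 1$, those lines can only produce values $\ge 2$. Therefore $depth_{v_1}=1$ must have been set by line 7, which executes only in an interaction between $v_1$ and the base station; in particular $v_1$ and the base station form an edge of $E$, i.e.\ $v_1$ is adjacent to the base station.

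Next I would handle the case where $depth_{v_1}$ is neither $\bot$ nor one. By the above, the assignment must have come from line 12 or line 14, say during an interaction between $v_1$ and some agent $v_2$ at which $depth_{v_2}\neq\bot$ held and $depth_{v_1}$ was set to $depth_{v_2}+1$. Because $v_1$ and $v_2$ interacted we have $(v_1,v_2)\in E$, so they are adjacent. It remains to argue that the equality $depth_{v_1}=depth_{v_2}+1$ still holds in the configuration under consideration: the left-hand side is frozen by Lemma~\ref{lem:nochangedepth} once it left $\bot$, and the right-hand side is frozen as well because $depth_{v_2}$ was already $\neq\bot$ at the time of the interaction and hence is unchanged afterwards, again by Lemma~\ref{lem:nochangedepth}. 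Thus the witnessed relation persists, which proves the case.

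The only real subtlety, and the step I would be most careful about, is the bookkeeping that guarantees the witnessed relation is not disturbed after the interaction that establishes it; this is precisely what Lemma~\ref{lem:nochangedepth} delivers, applied to both $v_1$ and $v_2$, so no additional fairness or reachability argument is needed. The remainder is a routine case split over the three lines that can modify $depth$, together with the observation that the depth values never include $0$.
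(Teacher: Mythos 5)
Your proof is correct and follows essentially the same route as the paper's: isolate the unique interaction at which $depth_{v_1}$ leaves $\bot$, observe that the value $1$ can only arise from an interaction with the base station while any other value arises as $depth_{v_2}+1$ from an adjacent agent $v_2$, and invoke Lemma~\ref{lem:nochangedepth} on both endpoints to see that the witnessed relation persists. Your added observation that lines 12 and 14 can only produce values at least $2$ makes the case split slightly more explicit than the paper's, but the argument is the same.
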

\begin{proof}
We focus on interactions such that an agent $v$ assigns a value other than $\bot$ to $depth_v$.
By Lemmas \ref{lem:nochangedepth} and \ref{lem:depthnobot}, the number of such interactions for a given agent is exactly one.
We consider two cases.

First, we consider the case when $depth_{v}$ becomes one by the interaction.
From the protocol, $depth_{v}$ can become one only if the base station and $v$ interact.
Hence, $v$ is adjacent to the base station.

Next, we consider the case that $depth_{v}$ becomes neither $\bot$ nor one by the interaction.
From the protocol, if the interaction happens between $v$ and an agent $w$, $depth_{v}$ becomes $depth_{w}+1$ by the interaction.
By Lemma \ref{lem:nochangedepth}, $depth_w$ and $depth_{v}$ never change afterward.
Thus, the lemma holds.
\end{proof}

From the above lemmas, we show that eventually $\# ini = 0$ holds in any weakly-fair execution of the protocol.
\begin{lemma}
\label{lem:zerozero}
For any weakly-fair execution of Algorithm~\ref{alg2}, $\# ini = 0$ holds after finite time.
\end{lemma}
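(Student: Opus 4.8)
The plan is to argue by contradiction, combining a monotonicity observation about $\#ini$ with an induction on tree depth. First I would note that, by Lemma~\ref{lem:color}, the only interactions that can change $\#r$, $\#b$, or $\#ini$ are those between the base station and an agent $v$ having $depth_v = 1$ and $color_v = ini$; inspecting lines~2--5, such an interaction replaces $color_v = ini$ by $color_v \in \{r,b\}$, hence strictly decreases $\#ini$. Therefore $\#ini$ is non-increasing along every weakly-fair execution, and being a non-negative integer it eventually freezes at some value $k$. Using Lemmas~\ref{lem:nochangedepth} and~\ref{lem:depthnobot}, I would then fix a single configuration $C$ after which (i) every agent $v$ has a fixed value $depth_v \in \{1,\dots,P\}$ and (ii) $\#ini = k$ permanently, and assume for contradiction that $k \ge 1$.

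The heart of the proof is the claim that, from $C$ onward, no agent of depth at most $d$ ever has color $ini$, proved by induction on $d \ge 1$. For the base case ($d=1$): if a depth-$1$ agent $v$ ever has $color_v = ini$ after $C$, a short inspection of lines~16--23 shows that any interaction of $v$ with a non-base-station agent leaves $color_v = ini$ (equal-depth interactions do nothing; in an interaction with a strictly deeper agent, a depth-$1$ agent never receives a different color); meanwhile $v$ is adjacent to the base station by Lemma~\ref{lem:depth+1}, so by weak fairness $v$ eventually meets it and, by lines~2--5, $\#ini$ drops below $k$, contradicting~(ii). For the inductive step, assume the claim for $d$ and suppose some depth-$(d+1)$ agent $v$ had $color_v = ini$ at some time after $C$. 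By Lemma~\ref{lem:depth+1}, $v$ has a neighbor $w$ with $depth_w = d$. A case analysis of lines~16--23 (over both ways of assigning the roles $x,y$ to an interacting pair) shows that the only interaction that can make $color_v$ differ from $ini$ is one between $v$ and a strictly shallower agent $u$, and that the very same interaction sets $color_u = ini$; since $depth_u \le d$, this already contradicts the induction hypothesis. Hence $color_v$ stays $ini$ until $v$ interacts with $w$, which happens by weak fairness, and that interaction puts $ini$ on an agent of depth $d$, again contradicting the hypothesis. So no depth-$(d+1)$ agent is ever $ini$ after $C$ either, completing the induction.

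To conclude, I would take $d$ to be the maximum depth realized in the population --- finite because $V$ is finite and every $depth_v$ lies in $\{1,\dots,P\}$ after $C$ --- so that the claim yields that no agent has color $ini$ after $C$; that is, $\#ini = 0$ after $C$, contradicting $k \ge 1$. Hence $k = 0$ and $\#ini = 0$ holds after finite time. I expect the only delicate part to be the step-$(d+1)$ case analysis: one must check each conditional branch in lines~16--23 for both labelings of the interacting pair and verify that the color-exchange rule, which always transfers $ini$ toward smaller depth, can never convert a depth-$(d+1)$ occurrence of $ini$ into a non-$ini$ color without simultaneously creating an $ini$ at depth at most $d$; this directional behavior of the rule is exactly what drives the whole argument.
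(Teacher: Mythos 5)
Your proof is correct and takes essentially the same route as the paper's: both argue by contradiction from the monotonicity of $\# ini$ (via Lemma~\ref{lem:color}), and both use Lemma~\ref{lem:depth+1} together with weak fairness to show that an $ini$ color must keep being handed to strictly shallower agents until it reaches a depth-one agent and then the base station, which decreases $\# ini$. The only difference is organizational --- the paper tracks the minimum depth of an $ini$-colored agent and shows it decreases, whereas you run an induction on depth outward from the base station --- and your bottom-up induction makes the ``by repeating this behavior'' step of the paper's argument fully explicit.
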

\begin{proof}
For the purpose of contradiction, let us assume that, after some configuration $C$, $\# ini > 0$ never decreases.
Let us consider a configuration $C'$ such that $depth_{v} \neq \bot$ holds for any agent $v$ in $C'$, and $C'$ appears after $C$. By Lemma \ref{lem:depthnobot}, such $C'$ exists.
In $C'$, let $Ini=\{x| color_x=ini \}$, and let $v_1 \in Ini$ be an agent such that $depth_{v_1} = \min \{ depth_x| x \in Ini \}$ holds.
By Lemma \ref{lem:depth+1}, either $v_1$ is adjacent to an agent $v_2$ such that $depth_{v_2} = depth_{v_1} - 1$ or $depth_{v_1} = 1$ holds.

First, we consider the case when $depth_{v_1} = 1$ holds.
By Lemma \ref{lem:depth+1}, $v_1$ is adjacent to the base station.
Since $depth_{v_1} = 1$ holds, there is not an agent $v'_2$ such that $depth_{v'_2} < depth_{v_1}$ holds.
Hence, from the protocol, $color_{v_1}$ keeps $ini$ unless $v_1$ and the base station interact.
Thus, eventually $v_1$ with $color_{v_1}=ini$ and the base station interact.
Then, $\# ini$ decreases in this case.

Next, we consider the case that $v_1$ is adjacent to an agent $v_2$ such that $depth_{v_2} = depth_{v_1} - 1$.
From the protocol, since $depth_{v_1} \neq 1$ holds, $color_{v_1}$ keeps $ini$ unless an interaction happens between $v_1$ and an agent $v'_2$ such that $depth_{v'_2} < depth_{v_1}$ holds.
From the weak fairness, eventually $v_1$ and such $v'_2$ interact and $color_{v'_2}$ becomes $ini$.
At that time, the smallest depth of agents with color $ini$ decreases. By repeating this behavior similarly, eventually some agent $v_h$ with $depth_{v_h}=1$ obtains color $ini$. After that, $v_h$ and the base station interact and $\# ini$ decreases.
This is a contradiction.
\end{proof}

Next, we prove that $|\#r - \# b| \le 1$ holds in any weakly-fair execution of the protocol.
\begin{lemma}
\label{lem:redblue}
For any configuration in any weakly-fair execution of Algorithm~\ref{alg2}, either \emph{(i} both $\#r - \# b = 1$ and $RB = b$ hold, or \emph{(ii)} both $\#r - \# b = 0$ and $RB = r$ hold.
\end{lemma}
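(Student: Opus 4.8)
The plan is to prove Lemma~\ref{lem:redblue} as an invariant, by induction on the position of a configuration in the weakly-fair execution $\Xi = C_0, C_1, C_2, \ldots$ (equivalently, on the number of interactions executed so far). For the base case $C_0$, every agent $x$ has $color_x = ini$, so $\#r = \#b = 0$ and thus $\#r - \#b = 0$. Since the base station is non-initialized, $RB$ may be $r$ or $b$ in $C_0$; as the literal statement matches only the case $RB = r$ (which is case (ii)), I would open the proof by taking, without loss of generality, $RB = r$ initially, remarking that the case $RB = b$ is symmetric under swapping $r \leftrightarrow b$ and (i) $\leftrightarrow$ (ii). With this convention $C_0$ satisfies (ii).

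For the inductive step, suppose $C_i \to C_{i+1}$ and that (i) or (ii) holds in $C_i$. By Lemma~\ref{lem:color}, the only interaction that alters any of $\#r$, $\#b$, $\#ini$ is one between the base station and an agent $v$ with $depth_v = 1$ and $color_v = ini$; and by direct inspection of Algorithm~\ref{alg2}, $RB$ is written only on line~4, which executes exactly in that same ``assignment'' interaction. Hence if $C_i \to C_{i+1}$ is any interaction other than this one, both $\#r - \#b$ and $RB$ are unchanged and the invariant carries over verbatim. If it is the assignment interaction, lines~3--4 set $color_v \leftarrow RB$ and then flip $RB$, so I split on the induction hypothesis: if (ii) holds in $C_i$ then $v$ turns from $ini$ to $r$, so $\#r - \#b$ becomes $1$ and $RB$ becomes $b$, i.e.\ (i) holds in $C_{i+1}$; and if (i) holds in $C_i$ then $v$ turns from $ini$ to $b$, so $\#r - \#b$ returns to $0$ and $RB$ becomes $r$, i.e.\ (ii) holds in $C_{i+1}$. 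Either way the invariant is preserved, which completes the induction; the bound $|\#r - \#b| \le 1$ is then an immediate corollary.

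I expect the main difficulty to be purely a matter of exhaustive, careful case enumeration rather than any real subtlety: one must confirm that in every interaction not of the assignment form --- the base station with an agent of depth $\ne 1$ or with a non-$ini$ agent, and every interaction between two ordinary agents, including the color swaps on lines~17--18 and 21--22 and the depth propagations on lines~7, 12, 14 --- neither $\#r - \#b$ nor $RB$ changes; Lemma~\ref{lem:color} disposes of the color counts, and inspecting the pseudocode disposes of $RB$. The only genuinely delicate point is the treatment of the non-initialized $RB$ in the base case, which the ``without loss of generality $RB = r$'' reduction above settles.
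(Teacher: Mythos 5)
Your proof is correct and follows essentially the same route as the paper's: induction on the configuration index, using Lemma~\ref{lem:color} to dispose of all non-assignment interactions and a two-way case split on $RB$ for the assignment interaction. You are in fact slightly more careful than the paper on the base case, since the paper simply asserts $RB=r$ in $C_0$ despite the base station being non-initialized, whereas your explicit ``without loss of generality'' reduction (swapping $r\leftrightarrow b$) correctly handles the possibility $RB=b$ initially, under which the literal statement of the lemma would otherwise fail.
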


\begin{proof}
Consider an execution $\Xi=C_0$, $C_1$, $\ldots$ of the protocol.
We prove the lemma by induction on the index of a configuration.
In the base case ($C_0$), $\#r = \# b = 0$ and $RB = r$ hold and thus alternative \emph{(ii)} of the lemma holds immediately.

For the induction step, assume that there exists an integer $i\geq 0$ such that the lemma holds in $C_i$.
Consider an interaction at $C_i \rightarrow C_{i+1}$.
We consider two cases.

First, we consider the case where, for an agent $v$ such that $depth_v = 1$ and $color_v = ini$ hold, the base station and $v$ do not interact.
By Lemma \ref{lem:color}, when the base station and such $v$ do not interact, $\# r$, $\# b$, and $\# ini$ do not change.
Moreover, from the protocol, $RB$ does not change at the interaction, and thus either alternatives \emph{(i)} of \emph{(ii)} of the lemma holds in this case.

Next, we consider the case where, for an agent $v$ such that $depth_v = 1$ and $color_v = ini$ hold, the base station and $v$ interact.
\begin{itemize}
\item In the case where $RB = r$ holds in $C_i$: After the interaction at $C_i \rightarrow C_{i+1}$, $\# r$ increases by one and $RB$ becomes $b$.
By the induction assumption, since $\#r - \# b = 0$ holds in $C_i$, $\#r - \# b = 1$ holds in $C_{i+1}$.
Hence, alternative \emph{i)} of the lemma holds.
\item In the case where $RB = b$ holds in $C_i$: After the interaction at $C_i \rightarrow C_{i+1}$, $\# b$ increases by one and $RB$ becomes $r$.
By the induction assumption, since $\#r - \# b = 1$ holds in $C_i$, $\#r - \# b = 0$ holds in $C_{i+1}$.
Hence, alternative \emph{(ii)} of the lemma holds.
\end{itemize}
Thus, the lemma holds.
\end{proof}

Using Lemmas \ref{lem:zerozero} and \ref{lem:redblue}, we show that the protocol solves the uniform bipartition problem.
\begin{theorem}
\label{the:pos3p}
Algorithm~\ref{alg2} solves the uniform bipartition problem.
That is, there exists a protocol with $3P+1$ states and designated initial states that solves the uniform bipartition problem under weak fairness assuming arbitrary communication graphs with a non-initialized base station.
\end{theorem}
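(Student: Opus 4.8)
The plan is to obtain the theorem as a short corollary of Lemmas~\ref{lem:zerozero} and~\ref{lem:redblue} together with one extra observation about \emph{stability}. Fix an arbitrary weakly-fair execution $\Xi = C_0, C_1, \ldots$ of Algorithm~\ref{alg2} (all agents starting in $(ini,\bot)$). By Lemma~\ref{lem:zerozero} there is an index $t$ with $\# ini = 0$ in $C_t$. The first step is to argue that $\# ini = 0$ is a closed property and that, once it holds, no agent ever changes its color again: inspecting Algorithm~\ref{alg2}, the only lines that modify a $color$ variable are line~3 (the base-station rule, enabled only when the partner has $color = ini$) and lines~17--18 and~21--22, each of which requires one of the two interacting agents to have color $ini$ and merely moves that $ini$ to the other agent. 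Hence no transition creates a new $ini$, so $\# ini$ is non-increasing and stays $0$ from $C_t$ on; and with no agent in color $ini$, none of the color-changing lines is ever enabled after $C_t$. Consequently every agent's color, and therefore its $f$-value, is frozen from $C_t$ onward.

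Next I would exhibit the required partition. Set $H_r = \{x \mid color_x = r \text{ in } C_t\}$ and $H_b = \{x \mid color_x = b \text{ in } C_t\}$. Since $\# ini = 0$ in $C_t$, these sets are disjoint with $H_r \cup H_b = V_p$, and by the freezing property each agent of $H_r$ (resp.\ $H_b$) remains $red$ (resp.\ $blue$) in every configuration reachable from $C_t$, which is condition~2 of stability. For condition~1, Lemma~\ref{lem:redblue} gives $|\#r - \#b| \le 1$ in $C_t$, i.e.\ $\bigl||H_r| - |H_b|\bigr| \le 1$. (Lemma~\ref{lem:redblue} is phrased for the base-station variable $RB$ initialized to $r$; for a non-initialized base station with $RB$ initially $b$, the same induction yields symmetrically that either $\#b - \#r = 1$ with $RB = r$ or $\#b - \#r = 0$ with $RB = b$, so $|\#r - \#b| \le 1$ holds in either case.) Thus $C_t$ is stable for the uniform bipartition, and since $\Xi$ was an arbitrary weakly-fair execution, Algorithm~\ref{alg2} solves the problem.

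Finally I would account for the resources and the model assumptions. An agent's state is the pair $(color_x, depth_x)$ with $color_x \in \{ini, r, b\}$ and $depth_x \in \{\bot, 1, \ldots, P\}$; by the invariant recorded in the Note of Algorithm~\ref{alg2} ($depth_x = \bot$ forces $color_x = ini$), the set of reachable agent states has size $3P+1$; the base station is non-initialized (the argument never uses the initial value of $RB$ except through Lemma~\ref{lem:redblue}, whose conclusion $|\#r-\#b|\le1$ is insensitive to it); and the protocol is symmetric since Algorithm~\ref{alg2} is defined without reference to which agent is the initiator. I do not anticipate a serious obstacle here: essentially all the work is in Lemmas~\ref{lem:zerozero} and~\ref{lem:redblue}, and the only points needing care are the stability claim --- verifying that no configuration reachable after $C_t$ can re-introduce color $ini$ or perturb a settled color --- and the bookkeeping that $3P+1$, not $3(P+1)$, states suffice because of the $depth = \bot \Rightarrow color = ini$ invariant.
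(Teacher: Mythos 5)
Your proposal is correct and follows essentially the same route as the paper's proof: both derive the theorem from Lemmas~\ref{lem:zerozero} and~\ref{lem:redblue} together with the observation that no agent changes its color once $\#ini = 0$ holds, and both obtain the $3P+1$ state count from the invariant that $depth_x = \bot$ forces $color_x = ini$. Your explicit handling of the non-initialized base station (the case where $RB$ starts at $b$, which the base case of Lemma~\ref{lem:redblue} as stated does not cover) is a welcome extra detail that the paper leaves implicit.
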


\begin{proof}
By Lemmas \ref{lem:zerozero} and \ref{lem:redblue}, $|\#r - \# b| \le 1$ holds in any weakly-fair execution of the protocol and eventually $\# ini = 0$ holds in the execution.
Moreover, from the protocol, when there exists no agent $v$ such that $color_{v}=ini$, any agent $x$ does not change its $color_x$.
Thus, the protocol solves the problem. Additionally, the protocol works with $2P+(P+1)$ states.
This is because, if $depth_x = \bot$ holds, $color_x = ini$ holds. That is, $depth_x$ takes $\bot$, $1$, $2$, $\ldots$, $P$ for $ini$, and takes $1$, $2$, $\ldots$, $P$ for $r$ and $b$.

Therefore, the theorem holds.
\end{proof}

\subsubsection{A protocol with constant states over a restricted class of graphs}

In this subsection, we show that the space complexity of Algorithm~\ref{alg2} can be reduced to constant for communication graphs such that every cycle either includes the base station or its length is not a multiple of $l$, where $l$ is a positive integer at least three.

We modify Algorithm~\ref{alg2} as follows. Each agent maintains the distance from the base station by computing modulo $l$ plus 1. That is, we change lines 12 and 14 in Algorithm~\ref{alg2} to $depth_x \leftarrow depth_y \mod l + 1$ and $depth_y \leftarrow depth_x \mod l + 1$, respectively.  Now $depth_x \in \{\bot,1,2,3$, $\ldots$, $l\}$ holds for any agent $x$. Then we redefine the relation $depth_x<depth_y$ in lines 16 and 20 as follows: $depth_x<depth_y$ holds if and only if either $depth_x=1 \land depth_y=2$, $depth_x=2 \land depth_y=3$, $depth_x=3 \land depth_y=4$, $\ldots$, $depth_x=l-1 \land depth_y=l$, or $depth_x=l \land depth_y=1$ holds. 

We can easily observe that these modifications do not change the essence of Algorithm\ref{alg2}. For two agents $x$ and $y$, we say $x<y$ if $depth_x<depth_y$ holds. Each agent $x$ eventually assigns a depth of $x$ modulo $l$ plus 1 to $depth_x$, and at that time there exists a path $x_0,x_1,\ldots,x_h$ such that $x_0$ is a neighbor of the base station, $x=x_h$ holds, and $x_i<x_{i+1}$ holds for any $0\le i<h$. In addition, there exists no cycle $x_0,x_1,\ldots,x_h=x_0$ such that $x_i<x_{i+1}$ holds for any $0\le i<h$. This is because, from the definition of relation '$<$', the length of such a cycle should be a multiple of $l$, but we assume that underlying communication graphs do not include a cycle of agents in $V_p$ whose length is a multiple of $l$. Hence, similarly to Algorithm~\ref{alg2}, we can carry the initial color $ini$ toward the base station and make the base station assign $r$ and $b$ to agents one by one. 

\begin{corollary}
There exists a protocol with $3l+1$ states and designated initial states that solves the uniform bipartition problem under weak fairness assuming arbitrary communication graphs with a non-initialized base station if, for any cycle of the communication graphs, it either includes the base station or its length is not a multiple of $l$, where $l$ is a positive integer at least three.
\end{corollary}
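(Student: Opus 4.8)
The plan is to verify that the whole correctness argument of Theorem~\ref{the:pos3p} survives the modification, with the only genuinely new ingredient being that the linear quantity $depth$ — which totally ordered the agents in the proof of Lemma~\ref{lem:zerozero} — is replaced by an agent's \emph{level} in an auxiliary acyclic digraph. First I would note that Lemmas~\ref{lem:color}, \ref{lem:nochangedepth}, and \ref{lem:depthnobot} hold verbatim: the color-passing rules (lines 17--18, 21--22) and the base-station rule (lines 2--5) are syntactically unchanged — only the \emph{meaning} of ``$depth_x<depth_y$'' changed — so $\#r$, $\#b$, $\#ini$ still change only when the base station meets a depth-$1$ agent with $color=ini$, and $depth_x$ still leaves $\bot$ exactly once, so every agent eventually has a fixed non-$\bot$ depth in $\{1,\dots,l\}$. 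Lemma~\ref{lem:redblue} is also untouched, since it only reasons about the base station alternating $r$ and $b$; hence $|\#r-\#b|\le 1$ is invariant. The entire burden is thus to re-prove the analogue of Lemma~\ref{lem:zerozero}, namely that $\#ini=0$ eventually.

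For that I would work past a configuration in which all depths are frozen and define a digraph $D$ on $V_p$ with an arc $w\to x$ whenever $w$ and $x$ are adjacent and $depth_w<depth_x$ in the redefined cyclic sense (so an arc increments depth by $1$ modulo $l$). Two structural facts are needed. (i) When an agent $x$ received its depth, it took the value $depth_w\bmod l+1$ of some already-assigned neighbor $w$, or the value $1$ directly from the base station; hence — the modular analogue of Lemma~\ref{lem:depth+1} — every $x$ with $depth_x\neq 1$ has an in-arc in $D$, and every $x$ with $depth_x=1$ is adjacent either to the base station or to a depth-$l$ neighbor. Consequently every \emph{source} of $D$ is a depth-$1$ agent adjacent to the base station; note that, unlike in Algorithm~\ref{alg2}, a depth-$1$ agent need not itself touch the base station, which is exactly why ``source of $D$'' is the right replacement for ``depth $1$''. (ii) $D$ is acyclic: a shortest directed cycle of $D$ would be a simple cycle $x_0,\dots,x_{m-1},x_0$ of $G$ lying entirely in $V_p$ along which depth increases by $1$ modulo $l$ at each step, forcing $m\equiv 0\pmod l$ with $m\ge l\ge 3$; but by hypothesis any cycle whose length is a multiple of $l$ contains the base station, a contradiction.

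Now let $\rho(x)$ be the length of a longest directed path of $D$ ending at $x$, so $w\to x$ implies $\rho(w)<\rho(x)$ and the sources are exactly the agents with $\rho=0$. I would then mimic the proof of Lemma~\ref{lem:zerozero}: supposing $\#ini$ stays positive and constant, pick an $ini$-holder $v_1$ of minimum $\rho$. If $\rho(v_1)=0$, then $v_1$ is a source, hence adjacent to the base station, and $color_{v_1}$ cannot change except through an interaction with the base station (it has no smaller-depth neighbor to hand $ini$ to); weak fairness forces that interaction and $\#ini$ decreases, a contradiction. If $\rho(v_1)\ge 1$, then $v_1$ has an in-neighbor $w$ with $\rho(w)<\rho(v_1)$, and $color_{v_1}$ stays $ini$ until $v_1$ meets some cyclically-smaller-depth neighbor (or, if $depth_{v_1}=1$ and it happens to be adjacent to the base station, meets it, which again lowers $\#ini$); weak fairness forces such a meeting, moving $ini$ to an agent of strictly smaller $\rho$, so the minimum of $\rho$ over $ini$-holders strictly decreases. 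As that minimum is a bounded nonnegative integer, iteration again reaches the $\rho=0$ case, a contradiction; hence $\#ini=0$ eventually. Since no color-changing transition is enabled once no agent holds $ini$, and $\#ini$ is non-increasing, the configuration then reached is stable, and by Lemma~\ref{lem:redblue} it satisfies $|\#r-\#b|\le 1$. Counting states, $depth$ ranges over $\{\bot,1,\dots,l\}$ when $color=ini$ and over $\{1,\dots,l\}$ for each of $color=r$ and $color=b$, giving $3l+1$ states. I expect the acyclicity step (ii) and the correct identification of where $ini$ accumulates (the sources of $D$) to be the only points requiring real care; everything else is a routine transcription of Lemmas~\ref{lem:color}--\ref{lem:redblue} and Theorem~\ref{the:pos3p}.
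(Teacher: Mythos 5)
Your proposal is correct and follows essentially the same route as the paper: both arguments rest on the observations that the redefined cyclic relation admits no monotone cycle within $V_p$ (by the hypothesis on cycle lengths) and that every agent lies on a monotone path from a neighbor of the base station, so the $ini$ tokens still drain to the base station while Lemmas~\ref{lem:color}--\ref{lem:redblue} carry over unchanged. Your version merely makes the paper's informal ``does not change the essence'' claim rigorous by introducing the acyclic digraph $D$ and the potential $\rho$, which is a faithful formalization rather than a different proof.
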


\section{Upper and Lower Bounds with No Base Station}
In this section, we show upper and lower bounds of the number of states to solve the uniform bipartition problem with no base station and designated initial states over arbitrary communication graphs.
Concretely, under global fairness, we prove that the minimum number of states for asymmetric protocols is four, and the minimum number of states for symmetric protocols is five.
Under weak fairness, we prove that the uniform bipartition problem cannot be solved without a base station using proof techniques similar to those Fischer and Jiang~\cite{fischer2006self} used to show the impossibility of leader election.

\subsection{Upper Bound for Asymmetric Protocols under Global Fairness}
\label{sec:Asy}
In this subsection, over arbitrary graphs with designated initial states and no base station under global fairness, we give an asymmetric protocol with four states.

\begin{algorithm}[t!]
	\caption{Uniform bipartition protocol with four states}         
	\label{alg:arbAsy}
	\algblock{when}{End}
	\begin{algorithmic}[1]
		\renewcommand{\algorithmicrequire}{\textbf{A state set}}
		\Require
		\Statex $Q = \{ r^{\omega}, b^{\omega}, r, b \}$
		\renewcommand{\algorithmicrequire}{\textbf{A mapping to colors}}
		\Require
		\Statex $f(r^{\omega})=f(r) = red$
		\Statex $f(b^{\omega})=f(b) = blue$

		\renewcommand{\algorithmicrequire}{\textbf{Transition rules}}
		\Require
		\Statex
		\begin{enumerate}
			\item $(r^{\omega},r^{\omega})\rightarrow(r,b)$
			\item $(r^{\omega},b^{\omega})\rightarrow(b,b)$
			\item $(r^{\omega},r)\rightarrow(r,r^{\omega})$
			\item $(b^{\omega},b)\rightarrow(b,b^{\omega})$
			\item $(r^{\omega},b)\rightarrow(r,b^{\omega})$
			\item $(b^{\omega},r)\rightarrow(b,r^{\omega})$
		\end{enumerate}

	\end{algorithmic}
\end{algorithm}

We define a state set of agents as $Q = \{ r^{\omega}, b^{\omega}, r, b \}$ , and function $f$ as follows:
$f(r^{\omega})=f(r) = red$ and $f(b^{\omega})=f(b) = blue$.
We say an agent has a token if its state is $r^{\omega}$ or $b^{\omega}$.
Initially, every agent has state $r^\omega$, that is, every agent is $red$ and has a token.
The transition rules are given in Algorithm \ref{alg:arbAsy} (for each transition rule $(p,q) \rightarrow (p',q')$, transition rule $(q,p) \rightarrow (q',p')$ exists, but we omit the description except for transition rule 1).
 
The basic strategy of the protocol is as follows.
When two agents with tokens interact and one of them is $red$, a $red$ agent transitions to $blue$ and the two tokens are deleted (transition rules 1 and 2).
Since $n$ tokens exist initially and the number of tokens decreases by two in an interaction, $ \lfloor n/2 \rfloor$ $blue$ agents appear and $\lceil n/2 \rceil$ $red$ agents remain after all tokens (except one token for the case of odd $n$) disappear.
To make such interactions, the protocol moves a token when agents with and without a token interact (transition rules 3, 4, 5, and 6). Global fairness guarantees that, if two tokens exist, an interaction of transition rule 1 or 2 happens eventually.
Therefore, the uniform bipartition is achieved by the protocol.

From now, we prove the correctness of the protocol.
We define $\# r$, $\# b$, $\# r^{\omega}$, $\# b^{\omega}$ as the number of agents that have state $r$, $b$, $r^{\omega}$, $b^{\omega}$, respectively.
Let $\# red= \# r + \# r^{\omega}$ and $\# blue= \# b + \# b^{\omega}$ be the number of $red$ and $blue$ agents, respectively.
Let $\# token= \# r^{\omega} + \# b^{\omega}$ be the number of agents with tokens.

\begin{lemma}
\label{lem:rb2}
In any globally-fair execution of Algorithm~\ref{alg:arbAsy}, $\# r  = \# b + 2* \# b^{\omega} $ holds in any configuration.
\end{lemma}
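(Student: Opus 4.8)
The plan is to prove the invariant $\#r = \#b + 2\cdot\#b^{\omega}$ by induction on the length of the execution prefix, checking that every transition rule (and every null transition) preserves the quantity $\#r - \#b - 2\cdot\#b^{\omega}$. In the initial configuration every agent is in state $r^\omega$, so $\#r = \#b = \#b^\omega = 0$ and the equality holds trivially, giving the base case.

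For the inductive step, suppose the equality holds in some configuration $C$, and consider an interaction $C \to C'$. Since the only non-null transitions are rules 1--6 (together with their symmetric counterparts, which have the same effect on the multiset of states), it suffices to examine how each of the six rules changes the triple $(\#r, \#b, \#b^\omega)$. For each rule I would tabulate the change: e.g.\ rule 1, $(r^\omega,r^\omega)\to(r,b)$, leaves $\#b^\omega$ unchanged and increases both $\#r$ and $\#b$ by one, so $\#r - \#b - 2\#b^\omega$ is unchanged; rule 2, $(r^\omega,b^\omega)\to(b,b)$, decreases $\#b^\omega$ by one and increases $\#b$ by two, and $\#r$ is unchanged, so the quantity changes by $-2 - (-2) = 0$; rule 5, $(r^\omega,b)\to(r,b^\omega)$, increases $\#r$ by one, decreases $\#b$ by one, and increases $\#b^\omega$ by one, so the quantity changes by $1 - (-1) - 2 = 0$; and similarly rules 3, 4, 6 only permute tokens among agents of a fixed color and leave all three counts fixed. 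In every case $\#r - \#b - 2\#b^\omega$ is invariant, so the equality is inherited by $C'$.

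Since null transitions obviously preserve every count, and the six listed rules (with symmetric versions) exhaust all non-null transitions, the invariant holds in every configuration of every execution — globally fair or not, though the statement only claims it for globally-fair executions. The only real work is the mechanical case check over six short rules, so there is no substantial obstacle here; the one point to be careful about is to confirm that the symmetric version of each rule induces exactly the same change on the (unordered) state counts as the listed version, which is immediate since swapping initiator and responder does not change which multiset of states results.
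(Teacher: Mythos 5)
Your approach is exactly the paper's: induction on the configuration index, with a case check that each of the six transition rules preserves the quantity $\#r - \#b - 2\cdot\#b^{\omega}$, and the base case following from the all-$r^{\omega}$ initial configuration. One slip, though: you group rule~6 with rules~3 and~4 as transitions that ``leave all three counts fixed,'' but that is false for rule~6. In $(b^{\omega},r)\rightarrow(b,r^{\omega})$ the token changes hands across colors, so $\#b$ increases by one while $\#r$ and $\#b^{\omega}$ each decrease by one; the invariant still survives since the quantity changes by $(-1)-(1)-2(-1)=0$, mirroring your correct computation for rule~5 (the paper tabulates rule~6 separately for exactly this reason). With that case restated, the argument is complete and coincides with the paper's proof; as you note, global fairness plays no role in this lemma.
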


\begin{proof}
Consider an execution $\Xi=C_0$, $C_1$, $\ldots$ of the protocol.
We prove the equation by induction on the index of a configuration.
The base case is the case of $C_0$.
In this case, the equation holds because all agents have $r^{\omega}$ initially.
For the induction step, assume that the equation holds in $C_i (0 \le i)$.
Consider an interaction at $C_i \rightarrow C_{i+1}$ for each transition rule.

\begin{itemize}
\item Transition rule 1: When the transition rule 1 occurs at $C_i \rightarrow C_{i+1}$, $\# r$ and $\# b$ increase by one.
\item Transition rule 2: When the transition rule 2 occurs at $C_i \rightarrow C_{i+1}$, $\# b^{\omega}$ decreases by one, and $\# b$ increases by two.
\item Transitions rule 3 and 4: When the transition rule 3 or 4 occurs at $C_i \rightarrow C_{i+1}$, $\#r$, $\#b$, and $\#b^\omega$ do not change.
\item Transition rule 5: When the transition rule 5 occurs at $C_i \rightarrow C_{i+1}$, $\# b$ decreases by one, and $\# r$ and $\# b^{\omega}$ increase by one.
\item Transition rule 6: When the transition rule 6 occurs at $C_i \rightarrow C_{i+1}$, $\# b$ increases by one, and $\# r$ and $\# b^{\omega}$ decrease by one.
\end{itemize}
For every case, $\# r  = \# b + 2* \# b^{\omega} $ holds in $C_{i+1}$.
Therefore, the lemma holds.
\end{proof}

\begin{lemma}
\label{lem:tok}
For any globally-fair execution of the Algorithm~\ref{alg:arbAsy}, $\# token \le 1$ holds after finite time. 
\end{lemma}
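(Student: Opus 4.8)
The plan is a proof by contradiction that exploits the monotonicity of $\#token$. Transition rules 1 and 2 decrease $\#token$ by two, while rules 3--6 leave it unchanged, so $\#token$ is a non-increasing sequence of non-negative integers, all of the same parity as $n$, hence eventually constant, say equal to $k$. If $k\le 1$ we are done, so assume $k\ge 2$. Then after some point only rules 3--6 fire, and every configuration occurring infinitely often has exactly $k$ tokens. Fix such a recurrent configuration $C$. Since the execution is globally fair, every configuration reachable from $C$ occurs infinitely often; hence no configuration reachable from $C$ can have rule 1 or 2 enabled, for otherwise that configuration and then, one step later, a configuration with $k-2$ tokens would occur infinitely often, contradicting that $\#token$ is eventually $k$. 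It therefore suffices to exhibit a configuration reachable from $C$ in which some agent in state $r^{\omega}$ is adjacent to another agent holding a token, i.e.\ in which rule 1 or rule 2 is enabled.

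The key auxiliary observation is that rules 3--6 merely transport tokens: when a token-holding agent interacts with a token-free neighbour, exactly one of rules 3--6 applies, the token moves to the neighbour and adopts the color of its new host, and neither $\#token$ nor the color of any agent changes. Consequently, if $s$ is the token nearest in $G$ to a fixed token-free vertex $z$, a shortest path from $s$ to $z$ has token-free interior (an interior token would be strictly nearer to $z$), so $s$ can be transported step by step all the way to $z$ through valid interactions.

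Using this, I would first ensure that some agent is in state $r^{\omega}$. If $C$ has none, then $\#b^{\omega}=k$, so by Lemma~\ref{lem:rb2} we get $\#r=\#b+2\#b^{\omega}\ge 2k>0$ while $\#r^{\omega}=0$, hence some agent $z$ is in state $r$; transporting the nearest token onto $z$ turns that agent into $r^{\omega}$ (by rule 3 or 6). In all cases we reach a configuration, still with $k\ge 2$ tokens, having an $r^{\omega}$ agent $t^{*}$ at some vertex $w$. Now pick a token $s'\neq t^{*}$ whose vertex is nearest to $w$; this distance is at least one (distinct tokens occupy distinct vertices), and a shortest path from $s'$ to $w$ has token-free interior, so $s'$ can be transported until it occupies the neighbour of $w$ on that path. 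Then two adjacent agents hold tokens and one of them, $t^{*}$, is in state $r^{\omega}$, so rule 1 (if $s'$ is $r^{\omega}$) or rule 2 (if $s'$ is $b^{\omega}$) is enabled, a contradiction. Hence $k\ge 2$ is impossible, so $\#token\le 1$ after finitely many interactions.

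I expect the delicate point to be the situation where two tokens become adjacent but are both in state $b^{\omega}$: since $(b^{\omega},b^{\omega})$ is a null transition, such an adjacency is useless. The argument sketched above sidesteps this by always steering some token toward an $r^{\omega}$ token (first creating one, via Lemma~\ref{lem:rb2} and an agent in state $r$, when none exists), because landing next to an $r^{\omega}$ token enables rule 1 or rule 2 regardless of the mover's own color. The other point requiring care is that each transport is a genuine sequence of interactions; this is exactly what choosing the nearest token and a shortest path guarantees, since it keeps the interior vertices token-free so that a non-null rule among 3--6 is available at every step.
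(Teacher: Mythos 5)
Your proof is correct and follows essentially the same route as the paper's: a contradiction argument in which global fairness, combined with the fact that rules 3--6 let tokens travel through the connected graph without changing any agent's color, forces a reachable configuration where rule 1 or 2 fires and destroys two tokens. You actually fill in two details the paper only asserts (that a $red$ agent exists, via Lemma~\ref{lem:rb2}, and that the transports are realizable via shortest paths with token-free interiors, avoiding the useless $(b^{\omega},b^{\omega})$ adjacency), so the argument is if anything more complete than the published one.
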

\begin{proof}
First of all, there is no transition rule that increases $\# token$.
This implies that, if $\# token \le 1$ holds at some configuration, $\# token \le 1$ holds thereafter. 
Hence, for the purpose of contradiction, we assume that there exists a globally-fair execution $\Xi$ of the protocol where $\#token = x > 1$ continuously holds after some configuration.

Consider a configuration $C$ that occurs infinitely often in $\Xi$.
Note that $C$ is stable and satisfies $\#token = x$.
Consider two agents $a_1$ and $a_2$ such that $a_1$ is adjacent to $a_2$ and one of them is $red$ in $C$.

Since tokens can move through a graph by swapping states (transition rules 3,4,5, and 6), a configuration $C'$ such that $a_1$ and $a_2$ have a token is reachable from $C$.
When $a_1$ and $a_2$ interact at $C' \rightarrow C''$, $\#token = x - 2 $ holds in $C''$.
From the global fairness assumption, since $C$ occurs infinitely often in $\Xi$, $C''$ also occurs infinitely often.
Since $\#token = x $ continuously holds after some configuration, this is a contradiction.
\end{proof}

Next, by using these lemmas, we show that Algorithm~\ref{alg:arbAsy} solves the problem under the assumptions.

\begin{theorem}
\label{the:pos4}
Algorithm~\ref{alg:arbAsy} solves the uniform bipartition problem.
That is, there exists a protocol with four states and designated initial states that solves the uniform bipartition problem under global fairness over arbitrary communication graphs.
\end{theorem}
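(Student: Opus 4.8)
The plan is to combine Lemma~\ref{lem:rb2} and Lemma~\ref{lem:tok}. By Lemma~\ref{lem:tok}, every globally-fair execution $\Xi$ of Algorithm~\ref{alg:arbAsy} reaches a configuration $C$ in which $\#token\le 1$ holds; moreover, as noted in the proof of Lemma~\ref{lem:tok}, no transition rule increases $\#token$, so $\#token\le 1$ continues to hold in every configuration reachable from $C$. I would show that any such $C$ is stable with respect to uniform bipartition, which immediately yields the theorem, since $|Q|=4$ and the designated initial state $r^\omega$ is fixed by the protocol.

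For the size requirement (condition~1 of the definition of a stable configuration), I would split on the pair $(\#r^\omega,\#b^\omega)$ in $C$, which by $\#token\le 1$ is one of $(0,0)$, $(1,0)$, $(0,1)$. In each case Lemma~\ref{lem:rb2} determines $\#r$ in terms of $\#b$ (namely $\#r=\#b$, $\#r=\#b$, and $\#r=\#b+2$, respectively), and then a one-line computation using $\#red=\#r+\#r^\omega$ and $\#blue=\#b+\#b^\omega$ gives $|\#red-\#blue|\le 1$ in all three cases (the difference is $0$, $1$, $1$, respectively). Taking $H_r$ (resp.\ $H_b$) to be the set of $red$ (resp.\ $blue$) agents in $C$ then satisfies $\bigl||H_r|-|H_b|\bigr|\le 1$.

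For the stability requirement (condition~2), the key observation is that from $C$ onward no agent ever changes its color. Since $\#token\le 1$ holds in $C$ and in everything reachable from it, transition rules~1 and~2 — the only rules that change an agent's color, and the only ones requiring two tokens — are never enabled after $C$; and transition rules~3--6 only move the token between two agents that already share the same color, so each of the two interacting agents keeps its color. Hence for every $C'$ with $C\xrightarrow{*}C'$, each agent in $H_r$ is still $red$ and each agent in $H_b$ is still $blue$ in $C'$, so $C$ is stable. As every globally-fair execution contains such a $C$, Algorithm~\ref{alg:arbAsy} solves the uniform bipartition problem with four states.

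I do not expect a serious obstacle in this proof: the substantive argument was already carried out in Lemma~\ref{lem:tok}, where global fairness is used to force two tokens to migrate onto adjacent agents and annihilate. The only point needing a little care here is checking that transition rules~3--6 preserve the color of both participants, so that once $\#token\le 1$ the coloring is permanently frozen; this is precisely what delivers condition~2 of stability.
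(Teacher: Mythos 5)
Your proposal is correct and follows essentially the same route as the paper: Lemma~\ref{lem:tok} forces $\#token\le 1$, Lemma~\ref{lem:rb2} then pins down $|\#red-\#blue|\le 1$ via a short case analysis on which token (if any) survives, and stability follows because rules~1 and~2 are the only color-changing rules and both require two tokens. One small inaccuracy: rules~5 and~6 move the token between two agents of \emph{different} colors (a $red$ and a $blue$ agent), not the same color; however, the claim you actually need --- that each participant keeps its own color under rules~3--6 --- is true by direct inspection of the rules, so the argument stands.
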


\begin{proof}
Since $\# token$ is reduced only by transition rules 1 and 2, $\# token$ is reduced by two in an interaction.
This implies that, by Lemma \ref{lem:tok}, when $n$ is even (resp., odd), $\# token = 0$ (resp.,  $\# token = 1$) holds after some configuration $C$.

First, we consider the case that $n$ is even.
By Lemma \ref{lem:rb2}, if $\# token = 0$ holds, $\# r = \# b$ holds.
Hence, since $n = \# r + \# b + \# token$ holds, $\# r = \# b = n/2$ holds at $C$.
Moreover, since agents can change their colors only if $\# token \ge 2$ holds, they do not change their colors after $C$. Hence $C$ is a stable configuration, and thus the uniform bipartition is completed.

Next,  we consider the case that $n$ is odd.
When $\# token = 1$ holds, we consider two cases.
If an agent in state $r^\omega$ exists at $C$, by Lemma \ref{lem:rb2}, $\# r = \# b$ holds and thus $\# red = \# r + \# r^{\omega} = \# b + 1 = \# blue +1$ holds.
If an agent in state $b^\omega$ exists at $C$, by Lemma \ref{lem:rb2}, $\# r = \# b + 2$ holds and thus $\# red = \# r =  \# b + \# b^{\omega} +1  = \# blue +1$ holds.
Hence, in both cases, $\# red - \# blue =1$ holds at $C$.
Since agents can change their colors only if $\# token \ge 2$ holds, they do not change their colors after $C$. Hence $C$ is a stable configuration, and thus the uniform bipartition is completed.
\end{proof}

\subsection{Upper Bound for Symmetric Protocols under Global Fairness}
\label{sec:uppsym}

In this subsection, with arbitrary communication graphs with designated initial states and no base station under global fairness, we give a symmetric protocol with five states.

Observe that, with designated initial states and no base station, clearly no symmetric protocol can solve the problem if the number of agents $n$ is two (the state of the two agents is the same in the initial state, so symmetry is never broken and uniform bipartition cannot occur).
Thus, we assume that $3 \le n$ holds.

\begin{algorithm}[t!]
	\caption{Uniform bipartition protocol with five states}         
	\label{alg:arbSym}
	\algblock{when}{End}
	\begin{algorithmic}[1]
		\renewcommand{\algorithmicrequire}{\textbf{A state set}}
		\Require
		\Statex $Q = \{ r^{\omega}_0, r^{\omega}_1, b^{\omega}, r, b \}$

		\renewcommand{\algorithmicrequire}{\textbf{A mapping to colors}}
		\Require
		\Statex $f(r^{\omega}_0)= f(r^{\omega}_1) = f(r) = red$
		\Statex $f(b^{\omega})=f(b) = blue$
		\renewcommand{\algorithmicrequire}{\textbf{Transition rules}}
		\Require
		\Statex
		\begin{enumerate}
			\item $(r^{\omega}_0,r^{\omega}_0)\rightarrow(r^{\omega}_1,r^{\omega}_1)$
			\item $(r^{\omega}_1,r^{\omega}_1)\rightarrow(r^{\omega}_0,r^{\omega}_0)$
			\item $(r^{\omega}_0,r^{\omega}_1)\rightarrow(r,b)$
			\item $(r^{\omega}_0,r)\rightarrow(r,r^{\omega}_0)$
			\item $(r^{\omega}_1,r)\rightarrow(r,r^{\omega}_0)$
			\item $(b^{\omega},b)\rightarrow(b,b^{\omega})$
			\item $(r^{\omega}_0,b)\rightarrow(r,b^{\omega})$
			\item $(r^{\omega}_1,b)\rightarrow(r,b^{\omega})$
			\item $(b^{\omega},r)\rightarrow(b,r^{\omega}_0)$
			\item $(r^{\omega}_0,b^{\omega})\rightarrow(b,b)$
			\item $(r^{\omega}_1,b^{\omega})\rightarrow(b,b)$
		\end{enumerate}

	\end{algorithmic}
\end{algorithm}
We define a state set of agents as $Q = \{ r^{\omega}_0, r^{\omega}_1, b^{\omega}, r, b \}$ , and function $f$ as follows:
$f(r^{\omega}_0)= f(r^{\omega}_1) = f(r) = red$ and $f(b^{\omega})=f(b) = blue$.
We say an agent has a token if its state is $r^{\omega}_0$, $r^{\omega}_1$ , or $b^{\omega}$.
Initially, every agent has state $r^{\omega}_0$, that is, every agent is $red$ and has a token.
The transition rules are given in Algorithm \ref{alg:arbSym}.

The idea of Algorithm~\ref{alg:arbSym} is similar to Algorithm \ref{alg:arbAsy}.
That is, when two agents with tokens interact and one of them is $red$, a $red$ agent transitions to $blue$, and the two tokens are deleted.
Then, eventually $ \lfloor n/2 \rfloor$ $blue$ agents appear and $\lceil n/2 \rceil$ $red$ agents remain after all tokens (except one token for the case of odd $n$) disappear.
However,  to make a $red$ agent transition to a $blue$ agent in the first place, Algorithm \ref{alg:arbAsy} includes transition rule 1 that makes agents with the same states transition to different states. This implies that Algorithm~\ref{alg:arbAsy} is not symmetric. 
Hence, by borrowing the technique proposed in~\cite{bournez2013population}, we improve Algorithm \ref{alg:arbAsy} so that the new protocol (Algorithm~\ref{alg:arbSym}) makes a $red$ agent transition to a $blue$ agent without such a transition (and two tokens are deleted at that time).
Concretely, we realize it as follows.
In Algorithm~\ref{alg:arbSym}, there are two states $r^{\omega}_0$ and $r^{\omega}_1$ that are $red$ and have a token.
When two agents with $r^{\omega}_0$ interact, they transition to $r^{\omega}_1$, and vice versa (transition rules 1 and 2).
Thus, under global fairness, eventually an agent with $r^{\omega}_0$ interacts with an agent with $r^{\omega}_1$ and then one of them transitions to $blue$, and two tokens are deleted (transition rule 3).
Observe that, these transitions do not affect the essence of Algorithm \ref{alg:arbAsy}.
This is because the numbers of $blue$ agents, $red$ agents, and tokens do not change after transition rules 1 and 2, and a $red$ agent transitions to $blue$, and two tokens are deleted at transition rule 3.

We now prove the correctness of Algorithm~\ref{alg:arbSym} along the proof in subsection \ref{sec:Asy}.
We define $\# r$, $\# b$, $\# r^{\omega}_0$, $\# r^{\omega}_1$, $\# b^{\omega}$ as the number of agents that have state $r$, $b$, $r^{\omega}_0$, $\# r^{\omega}_1$, $b^{\omega}$, respectively.
Let $\# red= \# r + \# r^{\omega}_0 + \# r^{\omega}_1$ and $\# blue= \# b + \# b^{\omega}$ be the number of $red$ and $blue$ agents, respectively.
Let $\# token= \# r^{\omega}_0 + \# r^{\omega}_1 + \# b^{\omega}$ be the number of agents with tokens.

Recall that a mechanism of symmetry breaking (transition rules 1, 2, and 3) does not affect the essence of Algorithm \ref{alg:arbAsy}.
In particular, if $r^{\omega}_0 = r^{\omega}_1 ( = r^{\omega} )$ holds, Algorithm \ref{alg:arbSym} is equal to Algorithm \ref{alg:arbAsy}.
Thus, since an equation of Lemma \ref{lem:rb2} holds in any configuration of any execution of Algorithm \ref{alg:arbSym} and $\# r^{\omega}$ is not related to the equation, the following corollary holds.

\begin{corollary}
\label{cor:rb2-2}
In any globally-fair execution of Algorithm \ref{alg:arbSym}, $\# r  = \# b + 2* \# b^{\omega} $ holds in any configuration.
\end{corollary}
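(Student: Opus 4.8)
The plan is to deduce the corollary from Lemma~\ref{lem:rb2} by the state-collapsing argument indicated just above the statement; for completeness I would also note that the same invariant can instead be obtained by a direct induction on the configuration index, exactly in the style of the proof of Lemma~\ref{lem:rb2}.

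First I would make the collapse precise. Let $\phi$ be the map on agent states that identifies $r^{\omega}_0$ and $r^{\omega}_1$ with a single symbol $r^{\omega}$ and fixes $b^{\omega}$, $r$, and $b$, extended pointwise to configurations. The key observation is that $\phi$ sends every transition rule of Algorithm~\ref{alg:arbSym} either to a null transition or to a transition rule of Algorithm~\ref{alg:arbAsy}: rules~1 and~2 become the null transition $(r^{\omega},r^{\omega})\to(r^{\omega},r^{\omega})$; rule~3 becomes rule~1 of Algorithm~\ref{alg:arbAsy}; rules~4 and~5 both become rule~3; rule~6 becomes rule~4; rules~7 and~8 both become rule~5; rule~9 becomes rule~6; and rules~10 and~11 both become rule~2. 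Consequently, if $\Xi$ is an execution of Algorithm~\ref{alg:arbSym}, then applying $\phi$ to each configuration and deleting the stutter steps contributed by rules~1 and~2 yields an execution of Algorithm~\ref{alg:arbAsy} started from the same (designated) initial configuration, in which every agent has state $r^{\omega}$.

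Next I would check that the quantity in the claimed identity is unaffected by $\phi$: since $\phi$ does not touch the states $r$, $b$, $b^{\omega}$, the counts $\#r$, $\#b$, $\#b^{\omega}$ in any configuration $C$ of $\Xi$ coincide with the corresponding counts in $\phi(C)$. Lemma~\ref{lem:rb2}, applied to the configuration $\phi(C)$ of the resulting execution of Algorithm~\ref{alg:arbAsy}, gives $\#r = \#b + 2\,\#b^{\omega}$ there, hence the same identity holds in $C$. As $C$ ranges over all configurations of $\Xi$, and $\Xi$ over all globally-fair executions, the corollary follows.

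The only genuine work is the transition-by-transition verification in the first step, and the one point to be careful about is that rules~1 and~2 map to a null transition, so $\phi$ is not a step-for-step simulation: one must pass to the execution obtained after collapsing these stutter steps before invoking Lemma~\ref{lem:rb2}. An equally short alternative that sidesteps this subtlety is to repeat the induction of Lemma~\ref{lem:rb2} for all eleven rules of Algorithm~\ref{alg:arbSym}: rules~1, 2, 4, 5, and~6 leave each of $\#r$, $\#b$, $\#b^{\omega}$ unchanged, while the remaining rules alter them exactly as the corresponding rule of Algorithm~\ref{alg:arbAsy} does, so $\#r-\#b-2\,\#b^{\omega}$ is preserved at every step, starting from the value $0$ in the initial configuration where all agents are in state $r^{\omega}_0$. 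Global fairness is not needed here; the identity is a pure invariant.
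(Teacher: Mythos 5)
Your proposal is correct and takes essentially the same route as the paper: the paper justifies the corollary by exactly the observation that identifying $r^{\omega}_0$ and $r^{\omega}_1$ with a single state $r^{\omega}$ turns Algorithm~\ref{alg:arbSym} into Algorithm~\ref{alg:arbAsy}, so the invariant of Lemma~\ref{lem:rb2} (which does not involve $\#r^{\omega}$) carries over. Your explicit collapsing map $\phi$, the rule-by-rule verification, and the remark that rules~1 and~2 become stutter steps simply make the paper's informal argument precise, and your fallback direct induction is a valid equivalent.
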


Moreover, we show the following lemma similarly to Lemma \ref{lem:tok}.

\begin{lemma}
\label{lem:tok-2}
For any globally-fair execution of Algorithm \ref{alg:arbSym}, $\# token \le 1$ holds after some configuration.
\end{lemma}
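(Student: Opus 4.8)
The plan is to mirror the proof of Lemma~\ref{lem:tok} almost verbatim, with only a small amount of extra bookkeeping to handle the symmetry‑breaking states $r^\omega_0$ and $r^\omega_1$. First I would observe that no transition rule in Algorithm~\ref{alg:arbSym} increases $\#token$: rules 1, 2, 4, 5, 6, 7, 8, 9 all leave the total number of token‑carrying states unchanged (they either permute token labels or swap a token between two agents), and rules 3, 10, 11 each decrease $\#token$ by exactly two. Hence once $\#token\le 1$ holds it holds forever, and it suffices to rule out the possibility that $\#token=x$ stays fixed at some value $x\ge 2$ from some configuration on.

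For contradiction, suppose there is a globally‑fair execution $\Xi$ in which $\#token = x \ge 2$ holds continuously after some configuration. Pick a configuration $C$ that occurs infinitely often in $\Xi$; it is stable and satisfies $\#token = x$. Since the communication graph is connected and $x \ge 2$, there is an edge $(a_1,a_2)$ such that both endpoints can be made to carry a token and at least one of them is $red$; more carefully, because token movement rules (4, 5, 6, 7, 8, 9) let a token migrate along any edge while preserving colors of the non‑token endpoint, and because at least one token sits on a $red$ agent (if all $x$ tokens were on $blue$ agents, i.e.\ all in state $b^\omega$, then $\#b^\omega = x\ge 2$ would contradict $\#r = \#b + 2\#b^\omega \ge 4$ together with $\#token=x$ staying fixed — but the cleaner route is simply to route a $red$ token and another token onto the two ends of a chosen edge). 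From $C$ one can reach a configuration $C'$ in which $a_1$ and $a_2$ both hold a token with at least one of them $red$; then the interaction $a_1 \leftrightarrow a_2$ applies rule 3, 10, or 11 and yields $C''$ with $\#token = x-2$. By global fairness, since $C$ recurs infinitely often, so does $C''$, contradicting the assumption that $\#token = x$ holds continuously after some point. Therefore $\#token \le 1$ holds after finite time.

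The main obstacle is the step that guarantees we can maneuver \emph{two tokens, at least one on a $red$ agent}, onto the two endpoints of a single edge. In the asymmetric case (Lemma~\ref{lem:tok}) this was immediate because every pair of adjacent token‑carrying agents admits a token‑deleting interaction as long as one is $red$; here one must additionally be careful that two $blue$ tokens meeting (rule 6) does \emph{not} delete tokens, so the reachability argument must explicitly steer a $red$ token. I would handle this by noting that if $x\ge 2$ tokens persist, not all can be $b^\omega$: by Corollary~\ref{cor:rb2-2}, $\#b^\omega \le \#token = x$, and if all $x$ tokens were $b^\omega$ then $\#r = \#b + 2x$, but $\#r + \#b + x = n$ forces $\#b = (n-3x)/2 \ge 0$, which is consistent for some $n$, so this counting alone is not decisive. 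The robust fix is instead structural: pick any $red$ token and any other token, use the movement rules to walk both of them (along paths in the connected graph, moving one at a time so no premature collision happens) to the endpoints of a common edge, keeping the $red$ one $red$ throughout — a $red$ token only ever changes color via rules 9--11, none of which is forced, so global fairness never compels it. Once both sit on that edge with one $red$, the appropriate rule among 3, 10, 11 fires, and the contradiction goes through exactly as in Lemma~\ref{lem:tok}.
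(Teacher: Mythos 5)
Your overall scaffolding (no rule increases $\# token$; suppose $\# token = x \ge 2$ persists; pick a recurrent configuration $C$; exhibit a reachable $C''$ with $\# token = x-2$; invoke global fairness) matches the paper. But there is a genuine gap at the crucial step. You claim that once a $red$ token and ``any other token'' sit on the two ends of an edge, ``the appropriate rule among 3, 10, 11 fires.'' This is false when both tokens end up on $red$ agents with the same subscript: $(r^{\omega}_0,r^{\omega}_0)$ triggers rule 1 and $(r^{\omega}_1,r^{\omega}_1)$ triggers rule 2, neither of which deletes tokens, and the movement rules 4--5 normalize a relocated red token to $r^{\omega}_0$, so this case is exactly what your routing produces when no $blue$ agent is available (e.g., any recurrent configuration in which all agents are still $red$, which Corollary~\ref{cor:rb2-2} forces to be the all-token configuration). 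Moreover, your fallback of ``pick any $red$ token'' presupposes one exists; if all $x$ tokens are $b^{\omega}$ you would first have to create a red token by moving a $b^{\omega}$ onto a $red$ agent via rule 9 (such a $red$ agent does exist, since Corollary~\ref{cor:rb2-2} gives $\# r = \# b + 2x \ge 4$), but you never make that argument. There are also two factual slips: two $blue$ tokens meeting is a null transition, not rule 6 (which is token movement), and a red token changes into a blue token via rules 7--8, not 9--11.

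The paper closes precisely this gap with an extra structural step that your proposal lacks: it first proves that at least one $blue$ agent is eventually created (starting from the all-$r^{\omega}_0$ configuration, only rules 1--3 are enabled, and global fairness forces an adjacent $r^{\omega}_0$/$r^{\omega}_1$ pair to interact via rule 3) and that $\# blue$ never decreases; combined with Corollary~\ref{cor:rb2-2} this guarantees both colors are present in $C$, hence by connectivity an adjacent $red$--$blue$ pair $(a_1,a_2)$ exists. Routing tokens onto that specific pair is safe because the movement rules preserve every agent's color and a token always carries the color of its host, so the pair necessarily becomes $(r^{\omega}_i, b^{\omega})$ and rule 10 or 11 applies. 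To repair your proof you would need to add both the existence-of-a-blue-agent argument and the choice of a bichromatic target edge (or, in the all-red case, a separate argument that an adjacent $r^{\omega}_0$/$r^{\omega}_1$ configuration is reachable).
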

\begin{proof}
First of all, there is no transition rule that increase $\# token$.
This implies that, if $\# token \le 1$ holds at some configuration, $\# token \le 1$ holds thereafter. 
Hence, for the purpose of contradiction, we assume that there exists a globally-fair execution $\Xi$ of Algorithm \ref{alg:arbSym} where $\#token = x > 1$ continuously holds after some configuration.

Consider a configuration $C$ that occurs infinitely often in $\Xi$.
Note that $C$ is stable and satisfies $\#token = x$.
First, we show that at least one $blue$ agent occurs from the initial configuration.
When $n = \# r^{\omega}_0 + \# r^{\omega}_1 $ holds, only transition rules 1, 2, and 3 can occur.
From the global fairness assumption, an agent with $r^{\omega}_0$ is adjacent to agent with $r^{\omega}_1$ at the same time and then they interact.
By the interaction, transition rule 3 happens and thus one $blue$ agent appears.

Since there is no transition rule that decreases the number of $blue$ agents, at least one $blue$ agent exists after some configuration of $\Xi$.
Moreover, by Corollary \ref{cor:rb2-2}, if there exists $blue$ agent, there also exists $red$ agent.
Hence, in $C$, there exist at least one $blue$ agent and at least one $red$ agent.
Consider two agents $a_1$ and $a_2$ such that $a_1$ is adjacent to $a_2$ and $a_1$ (resp., $a_2$) is $red$ (resp., $blue$) in $C$.

Since tokens can move through a graph by swapping states (transition rules 4, 5, 6, 7, 8, and 9), a configuration $C'$ such that $a_1$ and $a_2$ have a token is reachable from $C$.
When $a_1$ and $a_2$ interact at $C' \rightarrow C''$, $\# token = x - 2 $ holds in $C''$.
From the global fairness assumption, since $C$ occurs infinitely often in $\Xi$, $C''$ also occurs infinitely often.
Since $\#token = x $ continuously holds after some configuration, this is a contradiction.
\end{proof}

Finally, we show, similarly to Theorem \ref{the:pos4}, that Algorithm \ref{alg:arbSym} solves the uniform bipartition problem.

\begin{theorem}
Algorithm~\ref{alg:arbSym} solves the uniform bipartition problem.
That is, there exists a symmetric protocol with five states and designated initial states that solves the uniform bipartition problem under global fairness with arbitrary communication graphs.
\end{theorem}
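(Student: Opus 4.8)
The plan is to follow the blueprint already laid out for Algorithm~\ref{alg:arbAsy} in Theorem~\ref{the:pos4}, leveraging Corollary~\ref{cor:rb2-2} and Lemma~\ref{lem:tok-2} as the symmetric analogues of Lemma~\ref{lem:rb2} and Lemma~\ref{lem:tok}. First I would observe that $\# token$ never increases and decreases only by two (via transition rules 3, 10, and 11), so by Lemma~\ref{lem:tok-2} eventually $\# token = 0$ when $n$ is even and $\# token = 1$ when $n$ is odd, stabilizing from some configuration $C$ onward. The key point, exactly as in Theorem~\ref{the:pos4}, is that once $\# token \le 1$, no transition rule that changes an agent's color is ever enabled again (color changes happen only in rules 3, 9, 10, 11, each of which requires two tokens or produces/consumes a token in a way incompatible with $\# token \le 1$ being permanent — more precisely, rules 3, 10, 11 require two token-bearing agents, and rule 9 both consumes and creates a token so does not change $\# token$ but does change a color; I need to check that rule 9 cannot fire forever, which follows because firing it requires a $b^\omega$ token and an $r$ agent, and if only one token remains the color distribution is already frozen in the relevant sense — I should state this carefully). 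Hence $C$ is stable provided the color counts are correct there.

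Next I would compute the color counts at $C$ using Corollary~\ref{cor:rb2-2}, which gives $\# r = \# b + 2\cdot\# b^\omega$ in every configuration. When $n$ is even, $\# token = 0$ forces $\# b^\omega = 0$, hence $\# r = \# b$, and since $n = \# r + \# b$ we get $\# red = \# blue = n/2$. When $n$ is odd, $\# token = 1$: if the surviving token is $r^\omega_0$ or $r^\omega_1$ then $\# b^\omega = 0$, so $\# r = \# b$ and $\# red = \# r + 1 = \# blue + 1$; if the surviving token is $b^\omega$ then $\# r = \# b + 2$ and $\# red = \# r = \# b + \# b^\omega + 1 = \# blue + 1$. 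In both odd cases $|\# red - \# blue| = 1$, so in all cases the partition $\{H_r, H_b\}$ induced by the colors at $C$ satisfies $||H_r| - |H_b|| \le 1$, and since no agent changes color after $C$, condition~2 of stability holds as well.

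The one genuinely new wrinkle compared to Theorem~\ref{the:pos4} is the degenerate case $n = 2$, which was flagged before the algorithm: with two agents both starting in $r^\omega_0$, symmetry can never be broken, so I must explicitly restrict to $n \ge 3$ (as the paper already does) and note that this is consistent with the impossibility observation. A second point worth verifying is that the symmetry-breaking gadget (rules 1, 2, 3) actually makes progress: under global fairness, starting from the all-$r^\omega_0$ configuration, the configuration where some pair of adjacent token-bearing agents hold $r^\omega_0$ and $r^\omega_1$ respectively is reachable (using rules 1, 2 to flip parities and rules 4, 5 to shuffle tokens along edges), hence occurs infinitely often, hence rule 3 fires — this is essentially the content already used inside the proof of Lemma~\ref{lem:tok-2}, so I can cite that lemma rather than re-derive it. I expect the main obstacle to be the bookkeeping around rule~9: unlike Algorithm~\ref{alg:arbAsy}, here an $r$ agent can be turned into a $b$ agent's token while a $b^\omega$ becomes $r^\omega_0$ — but this is a token-preserving color-neutral-on-counts swap (it moves the token and swaps the two agents' colors), so $\# red$ and $\# blue$ are unchanged by it; I need to confirm this from the rule and from Corollary~\ref{cor:rb2-2} (which already encodes invariance under all rules), and then the argument that $C$ is truly stable — i.e.\ that rule 9 cannot fire after $C$ to violate the fixed partition — goes through because after $C$ the colors are permanently fixed by the same reasoning as in Theorem~\ref{the:pos4}. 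With these checks in place the theorem follows.
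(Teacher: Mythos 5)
Your proposal follows the paper's own proof essentially verbatim: invoke Lemma~\ref{lem:tok-2} and the fact that $\#token$ only ever drops by two (rules 3, 10, 11) to get $\#token=0$ for even $n$ and $\#token=1$ for odd $n$ after some configuration $C$, then read off the color counts from Corollary~\ref{cor:rb2-2} in each of the three sub-cases, and finally observe that no color can change once $\#token\le 1$. That is exactly the paper's argument, and your count computations are identical to the ones in the paper.

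The one place you go astray is the digression on rule~9. You assert that rule~9 ``does change a color'' and later that it ``swaps the two agents' colors.'' It does neither: rule~9 is $(b^{\omega},r)\rightarrow(b,r^{\omega}_0)$, and since $f(b^{\omega})=f(b)=blue$ and $f(r)=f(r^{\omega}_0)=red$, each of the two agents keeps its own color; the rule only relocates a token. Hence the clean statement you need --- and the one the paper uses --- is simply that \emph{every} transition rule that changes some agent's color (namely 3, 10, and 11) requires two token-bearing agents, so no agent changes color after $C$. Your fallback justification for rule~9 (``$\#red$ and $\#blue$ are unchanged by it, so it is fine'') is also the wrong criterion: stability is defined per agent (each agent of $H_r$ must \emph{remain} red in every reachable configuration), so a rule that swapped two agents' colors while preserving the counts would still destroy stability. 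The proof survives only because rule~9 in fact preserves each agent's color individually, not because it preserves the totals. With that correction, your argument coincides with the paper's and is complete (including the explicit restriction to $n\ge 3$, which the paper also makes).
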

\begin{proof}

Since $\# token$ is reduced only in transition rules 3, 10, and 11, $\# token$ is reduced by two in an interaction.
This implies that, by Lemma \ref{lem:tok-2}, when $n$ is even (resp., odd), $\# token = 0$ (resp.,  $\# token = 1$) holds after some configuration $C$.

First, we consider the case where $n$ is even.
By Corollary \ref{cor:rb2-2}, if $\# token = 0$ holds, $\# r = \# b$ holds.
Hence, since $n = \# r + \# b + \# token$ holds, $\# r = \# b = n/2$ holds at $C$.
Moreover, since agents can change their colors only if $\# token \ge 2$ holds, they do not change their colors after $C$. Hence $C$ is a stable configuration, and thus the uniform bipartition is completed.

Next, we consider the case that $n$ is odd.
When $\# token = 1$ holds, we consider two cases.
If an agent in state $r^{\omega}_0$ or $r^{\omega}_1$ exists at $C$, by Corollary \ref{cor:rb2-2}, $\# r = \# b$ holds and thus $\# red = \# r + \# r^{\omega}_0 + \# r^{\omega}_1 = \# b + 1 = \# blue +1$ holds.
If an agent in state $b^\omega$ exists at $C$, by Corollary \ref{cor:rb2-2}, $\# r = \# b + 2$ holds and thus $\# red = \# r =  \# b + \# b^{\omega} +1  = \# blue +1$ holds.
Hence, in both cases, $\# red - \# blue =1$ holds at $C$.
Since agents can change their colors only if $\# token \ge 2$ holds, they do not change their colors after $C$. Hence $C$ is a stable configuration, and thus the uniform bipartition is completed.
\end{proof}

\subsection{Lower Bound for Asymmetric Protocols under Global Fairness}
In this section, we show that, over arbitrary graphs with designated initial states and no base station under global fairness, there exists no asymmetric protocol with three states.

To prove this, we first show that, when the number of agents $n$ is odd and no more than $P/2$, each agent changes its own state to another state infinitely often in any globally-fair execution $\Xi$ of a uniform bipartition protocol $Alg$, where $P$ is a known upper bound of the number of agents. This proposition holds regardless of the number of states in a protocol.

After that, we prove impossibility of an asymmetric protocol with three states.
The outline of the proof is as follows. 
For the purpose of contradiction, we assume that there exists a protocol $Alg$ that solves the problem with three states.
From the above proposition, in any globally-fair execution, some agents change their state infinitely often.
Now, with three states, the number of $red$ or $blue$ states is at least one and thus, 
if we assume without loss of generality that the number of $blue$ states is one, 
agents with the $blue$ state change their color eventually.
This is a contradiction.


From now on, we show that, in any globally-fair execution $\Xi$ of a protocol $Alg$ solving uniform bipartition over an arbitrary communication graph such that the number of agents $n < P/2$ is odd, all agents transition their own state to another state infinitely often.
\begin{lemma}
\label{lem:nocha}
Assume that there exists a uniform bipartition protocol $Alg$ with designated initial states over arbitrary communication graphs assuming global fairness.
Consider a graph $G=(V,E)$ such that the number of agents $n$ is odd and no more than $P/2$.
In any globally-fair execution $\Xi =C_0$, $C_1$, $\ldots$ of $Alg$ over $G$, each agent changes its state infinitely often.
\end{lemma}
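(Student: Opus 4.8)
The plan is to argue by contradiction: suppose some agent $v$ eventually stops changing its state in the globally-fair execution $\Xi$, and derive a contradiction with the uniformity requirement $||H_r|-|H_b||\le 1$ by building a larger graph on which the protocol must fail. The key idea is a standard \emph{two-copy} (or ``double cover'') argument tailored to the population protocol setting: since $n\le P/2$, the upper bound $P$ also legitimizes executions on a graph with $2n\le P$ agents, so we are free to take two disjoint copies of $G$. First I would make precise what it means for an agent to ``stabilize'': if every agent changes state only finitely often in $\Xi$, then $\Xi$ reaches a configuration $C$ after which no agent ever changes state again, i.e.\ every interaction enabled in $C$ is null. Call such a $C$ a \emph{frozen} configuration. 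The first step is therefore to show that if the lemma fails, we may assume $\Xi$ reaches a frozen configuration $C$; this uses global fairness to promote ``each agent changes finitely often'' to ``the whole population is frozen,'' since once each agent has made its last move the configuration is fixed and, by global fairness on that recurring configuration, every enabled interaction from it must be null.

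Next I would exploit the frozen configuration $C$ together with the odd-cardinality hypothesis. In $C$ the population $V$ of odd size $n$ is split into $H_r$ and $H_b$ with $||H_r|-|H_b||\le 1$, so one color class, say $H_r$, has size $\lceil n/2\rceil$ and the other size $\lfloor n/2\rfloor$; in particular the two classes have \emph{different} sizes. Now form the graph $G'$ consisting of two disjoint isomorphic copies $G^{(1)},G^{(2)}$ of $G$ (this is where $2n\le P$ is used — $G'$ has at most $P$ agents, and it is connected? — no, it is not connected, so I must instead add a single edge between the two copies, or more carefully invoke that the model only requires the graph be connected, so I join the two copies by one edge between corresponding vertices, keeping $G'$ connected with $2n$ agents). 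From the all-initial configuration of $G'$, run the copy of the execution that produced $C$ on each side independently (the added edge is never used), reaching a configuration $C'$ in which each copy looks exactly like $C$. Since $C$ was frozen inside each copy and the joining edge connects two agents that, in $C$, have colors from the frozen split, I would check that the only possibly-enabled non-null interaction across the joining edge still leads — by global fairness — to a reachable stable configuration; but whatever happens, in $C'$ the number of $red$ agents is $2\lceil n/2\rceil$ and the number of $blue$ agents is $2\lfloor n/2\rfloor$, which differ by $2$ when $n$ is odd.

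The final step is to push this to a contradiction with stability. If $C'$ (or a configuration reachable from it by activity on the joining edge) were stable for $G'$, its color counts would have to differ by at most $1$; but they differ by $2$, and the joining edge alone can change the count by at most a bounded amount per interaction — here I would argue more carefully that from $C'$, because each $G^{(i)}$-copy is internally frozen, the only transitions available involve the one joining edge, and a case analysis on the (constantly many) transition types shows the color imbalance can never be reduced to within $1$ while also reaching a stable configuration, since reaching stability would require freezing again and the imbalance parity/magnitude obstruction persists. Hence no globally-fair execution of $Alg$ on $G'$ from the designated initial configuration can be stabilizing, contradicting the assumption that $Alg$ solves uniform bipartition over arbitrary graphs. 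Therefore no agent can change state only finitely often, so every agent changes state infinitely often. The main obstacle I anticipate is the handling of the joining edge: making $G'$ connected forces one inter-copy edge, and I must ensure that the activity along that single edge cannot repair the imbalance of $2$ — this likely needs the observation that such local activity moves at most a bounded (indeed, by the transition structure, at most $O(1)$) number of agents between colors per interaction and cannot both reach a frozen/stable configuration and equalize, which is the delicate part of the argument.
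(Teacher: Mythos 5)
There is a genuine gap, and it occurs at the very first step. The negation of the lemma is that \emph{some} agent changes its state only finitely often; it is not that \emph{every} agent does. Your reduction to a ``frozen'' configuration $C$ explicitly starts from ``each agent changes state only finitely often'' and therefore proves only the much weaker statement that not all agents can simultaneously stop changing state. Global fairness does not let you promote ``one agent freezes'' to ``the whole population freezes'': a protocol may well have agents that keep permuting states within their color class forever (indeed, the paper's own four-state protocol keeps moving a leftover token around after convergence when $n$ is odd). The paper's proof is designed around exactly this point: it assumes only that a single agent $v_{\alpha}$ keeps one state $s_{\alpha}$ after a stable configuration $C_h$, fixes a neighbor $v_{\beta}$ and the set $S_{\beta}$ of states $v_{\beta}$ takes after $C_h$, and never needs the rest of the population to be quiescent.

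The second gap is the joining edge, which you correctly identify as the delicate part but do not resolve. Your claim that ``the only transitions available involve the one joining edge'' holds only until the first cross-edge interaction changes a state, after which internal edges of the copies can become non-null again and the frozenness you rely on is destroyed; there is also no parity invariant on the color imbalance, since a single interaction can change $\#red-\#blue$ by $0$, $\pm 1$, or $\pm 2$ in a general protocol. The paper's fix is to place the unique inter-copy edge precisely at $(v'_{\alpha}, v'_{n+\beta})$, i.e., between the copy of the non-changing agent and the copy of its neighbor. Because $v_{\alpha}$ keeps $s_{\alpha}$ no matter which state of $S_{\beta}$ it meets, a cross-edge interaction leaves $v'_{\alpha}$ unchanged and acts on $v'_{n+\beta}$ exactly as an interaction with its own in-copy neighbor (which also holds $s_{\alpha}$) would. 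An induction then shows that each half of $G'$ always remains equivalent to some configuration occurring infinitely often after $C_h$ in $\Xi$ (hence stable, with the same color counts), so no agent of $G'$ ever changes color, the imbalance of $2$ persists, and global fairness of $\Xi'$ yields the contradiction. Your high-level two-copy idea and the use of $n\le P/2$ to legitimize the $2n$-agent graph match the paper, but without the choice of bridge endpoint and the equivalence induction the argument does not go through.
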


\begin{proof}
The outline of the proof is as follows.
First, for the purpose of contradiction, we assume that there exists an agent $v_{\alpha}$ that never changes its state after some stable configuration $C_h$ in a globally-fair execution $\Xi$ over graph $G$.
Let $s_\alpha$ be a state that $v_\alpha$ has after $C_h$. Let $v_{\beta} \in V$ be an agent adjacent to $v_{\alpha}$ and $S_\beta$ be a set of states that $v_\beta$ has after $C_h$.
Since the number of states is finite, there exists a stable configuration $C_t$ that occurs infinitely often after $C_h$.
Next, let $G'_1=(V'_1, E_1)$ and $G'_2=(V'_2, E_2)$ be graphs that are isomorphic to $G$.
Moreover, let $v'_{\alpha} \in V'_1$ (resp., $v'_{n+\beta} \in V'_2$) be an agent that corresponds to $v_{\alpha} \in V$ (resp., $v_\beta \in V$).
We construct $G'=(V',E')$ by connecting $G'_1$ and $G'_2$ with an additional edge $(v'_{\alpha},v'_{n+\beta})$.
Over $G'$, we consider an execution $\Xi'$ such that, agents in $G'_1$ and $G'_2$ behave similarly to $\Xi$ until $C_t$ occurs in $G'_1$ and $G'_2$, and then make interactions so that $\Xi'$ satisfies global fairness.
Since $\Xi$ is globally fair, we can show the following facts after $G'_1$ and $G'_2$ reach $C_t$ in $\Xi'$.
\begin{itemize}
\item $v'_{\alpha}$ has state $s_\alpha$ as long as $v'_{n+\beta}$ has a state in $S_\beta$.
\item $v'_{n+\beta}$ has a state in $S_\beta$ as long as $v'_{\alpha}$ has state $s_\alpha$.
\end{itemize}
From these facts, in $\Xi'$, $v'_{\alpha}$ continues to have state $s_\alpha$ and $v'_{n+\beta}$ continues to have a state in $S_\beta$.
Hence, in $\Xi'$, each agent in $V'_1$ cannot notice the existence of agents in $V'_2$, and vice versa.
This implies that, in stable configurations, $\nred{V} = \nred{V'_1} = \nred{V'_2}$ and $\nblue{V} = \nblue{V'_1} = \nblue{V'_2}$ hold.
Since the number of agents in $G$ is odd, $\nred{V} - \nblue{V} = 1 $ or $\nblue{V} - \nred{V} = 1 $ holds in stable configurations of $\Xi$.
Thus, in stable configurations of $\Xi'$, $|\nred{V'} - \nblue{V'}| = 2$ holds.
Since $\Xi'$ is globally fair, this is a contradiction.

From now on, we show the details of the proof. 

Let $V=\{v_0$, $v_1$, $v_2$, $v_3$, $\ldots$, $v_{n-1}\}$.
Assume, for the purpose of contradiction, that there exists $v_{\alpha}$ that does not change its state after some stable configuration $C_h$ in globally-fair execution $\Xi$.
Let $s_\alpha$ be a state that $v_\alpha$ has after $C_h$. Since the number of states is finite, in $\Xi$, there exists a stable configuration $C_{t}$ that appears infinitely often after $C_h$.
Without loss of generality, $\nred{V} - \nblue{V} = 1 $ holds after $C_h$.
Let $v_{\beta}$ be an agent that is adjacent to $v_{\alpha}$.

Next, consider a communication graph $G'=(V', E')$ that satisfies the following.
\begin{itemize}
\item $V'=V'_1 \cup V'_2$, where $V'_1 = \{v'_0$, $v'_1$, $v'_2$, $\ldots$, $v'_{n-1}\}$ and $V'_2 = \{v'_{n}$, $v'_{n+1}$, $v'_{n+2}$, $\ldots$, $v'_{2n-1}\}$.
\item $E'=\{(v'_x,v'_y),(v'_{x+n},v'_{y+n}) \in V' \times V' \mid (v_x, v_y) \in E \} \cup \{(v'_{\alpha}, v'_{n+\beta}) \}$.
\end{itemize}

\begin{figure}[t!]
\begin{center}
\includegraphics[scale=0.4]{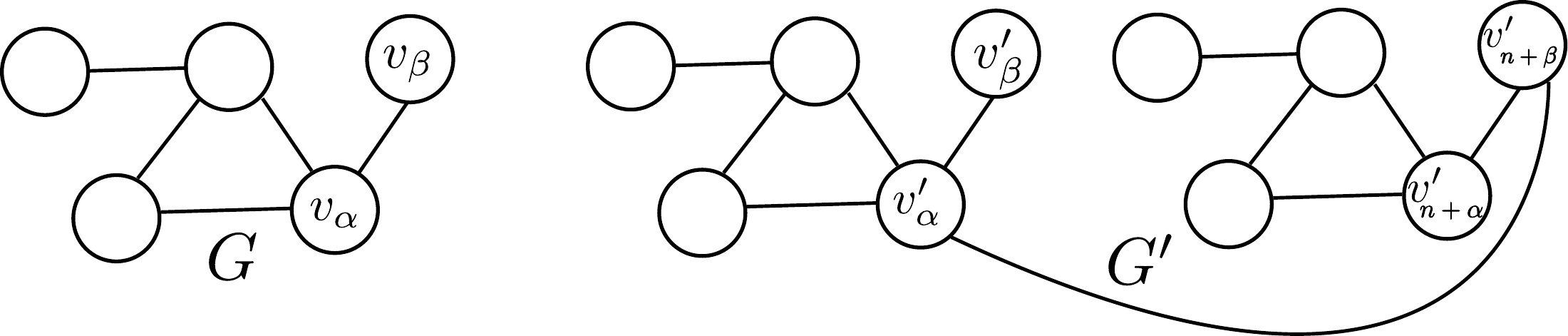}
\caption{An example of communication graphs $G$ and $G'$ ($n=5$)}
\label{fig:graph1}
\end{center}
\end{figure}

An example ($n=5$) of $G$ and $G'$ is shown in Figure \ref{fig:graph1}.

Consider a globally-fair execution $\Xi' = C'_0$, $C'_1$, $C'_2$, $C'_3$, $\ldots$ over $G'$ as follows:
\begin{itemize}
\item For $i \le t$, when $v_x$ interacts with $v_y$ at $C_i \rightarrow C_{i+1}$, $v'_x$ interacts with $v'_y$ at $C'_{2i} \rightarrow C'_{2i+1}$, and $v'_{x+n}$ interacts with $v'_{y+n}$ at $C'_{2i+1} \rightarrow C'_{2i+2}$.
\item After $C'_{2t}$, agents make interactions so that $\Xi'$ satisfies global fairness.
\end{itemize}

In the following, for configuration $C'$ in $\Xi'$ and configuration $C$ in $\Xi$, we say $C'$ of $V'_1$ (resp., $V'_2$) is equivalent to $C$ if $s(v'_x,C')=s(v_x,C)$ (resp., $s(v'_{x+n},C')=s(v_x,C)$) holds for any $v_x\in V$.
Observe that, by the definition of $\Xi'$, $C'_{2t}$ of $V'_1$ and $V'_2$ is equivalent to $C_t$.

From now on, by induction on the index of configuration, we prove the proposition that, for any configuration $C'_m$ that occurs after $C'_{2t}$, there is a configuration $C_a$ (resp., $C_b$) such that 1) $C'_m$ of $V'_1$ (resp., $V'_2$) is equivalent to $C_a$ (resp., $C_b$) and 2) $C_a$ (resp., $C_b$) appears infinitely often in $\Xi$.

The base case is $C'_{m} = C'_{2t}$.
Since $C_{t}$ appears infinitely often in $\Xi$ and $C'_{2t}$ of $V'_1$ and $V'_2$ is equivalent to $C_t$, the base case holds.

For the induction step, assume that, for $C'_m(m \ge 2t)$, there is a configuration $C_a$ (resp., $C_b$) that satisfies the conditions.
We consider two cases for an interaction at $C'_m \rightarrow C'_{m+1}$.
The first case considers an interaction of $v'_\alpha$ and $v'_{n+\beta}$, and the second case considers other interactions.

First, we consider the case that $v'_{\alpha}$ and $v'_{n+\beta}$ interact at $C'_m \rightarrow C'_{m+1}$.
By the assumption, we have $s(v'_\alpha,C'_m)=s(v_\alpha,C_a)=s_\alpha$, $s(v'_{n+\alpha},C'_m)=s(v_\alpha,C_b)=s_\alpha$, and $s(v'_{n+\beta}, C'_{m}) = s(v_{\beta}, C_{b})$.
Hence, since $v_{\alpha}$ does not change its state even when it interacts with $v_\beta$ at $C_b$, a transition rule $(s(v_{\beta}, C_{b}), s_\alpha) \rightarrow (s, s_\alpha)$ exists, where $s$ is some state.
This implies that, when $v'_{\alpha}$ and $v'_{\beta + n}$ interact at $C'_m$, $v'_{\alpha}$ keeps the state $s_\alpha$.
Thus, $C'_{m+1}$ of $V'_1$ is still equivalent to $C_a$.
Additionally, since $s(v'_{\alpha}, C'_{m}) = s(v'_{n+\alpha}, C'_{m}) = s(v_\alpha,C_b) = s_\alpha$ and $s(v'_{n+\beta},C'_m) = s(v_\beta,C_b)$ hold, $v'_{n+\beta}$ changes its state similarly to the case that $v_\beta$ interacts with $v_\alpha$ in $C_b$.
That is, letting $C_{b'}$ be a configuration immediately after $v_\alpha$ and $v_\beta$ interact at $C_b$, $C'_{m+1}$ of $V'_2$ is equivalent to $C_{b'}$.
Since $C_b$ occurs infinitely often in $\Xi$ and $\Xi$ is globally fair, $C_{b'}$ occurs infinitely often in $\Xi$.
Thus, the proposition holds for $C'_{m+1}$.

Next, we consider the case that at least one agent other than $v'_\alpha$ and $v'_{n+\beta}$ joins an interaction at $C'_m \rightarrow C'_{m+1}$.
By the definition, no edge other than $(v'_\alpha,v'_{n+\beta})$ connects $V'_1$ and $V'_2$.
Hence, if $v'_i$ and $v'_j$ interact at $C'_m \rightarrow C'_{m+1}$, either $v'_i \in V'_1 \land v'_j \in V'_1$ or $v'_i \in V'_2 \land v'_j \in V'_2$ holds.
In the former case, letting $C_{a'}$ be the configuration immediately after $v_i$ and $v_j$ interact at $C_a$, $C'_{m+1}$ of $V'_1$ is equivalent to $C_{a'}$ and $C'_{m+1}$ of $V'_2$ is still equivalent to $C_b$. 
In the latter case, letting $C_{b'}$ be the configuration immediately after $v_{i-n}$ and $v_{j-n}$ interact at $C_b$, $C'_{m+1}$ of $V'_1$ is still equivalent to $C_a$ and $C'_{m+1}$ of $V'_2$ is equivalent to $C_{b'}$. 
Since $C_a$ and $C_b$ occur infinitely often in $\Xi$ and $\Xi$ is globally fair, such $C_{a'}$ and $C_{b'}$ occurs infinitely often in $\Xi$ and thus the proposition holds in the case.

Since $\nred{V} - \nblue{V} = 1 $ holds after $C_h$ in $\Xi$, $\nred{V'_1} - \nblue{V'_1} = 1 $ and $\nred{V'_2} - \nblue{V'_2} = 1 $ holds after $C'_{2t}$ in $\Xi'$.
Thus, $\nred{V'} - \nblue{V'} = 2 $ holds after $C'_{2t}$ in $\Xi'$.
Since $\Xi'$ is globally fair, this is a contradiction.
\end{proof}

From now, by using Lemma \ref{lem:nocha}, we show the theorem.

\begin{theorem}
There exists no uniform bipartition protocol with three states and designated initial states over arbitrary communication graphs assuming global fairness.
\end{theorem}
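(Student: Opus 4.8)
The plan is to argue by contradiction, leaning entirely on Lemma~\ref{lem:nocha}: if some deterministic protocol $Alg=(Q,\delta)$ with $|Q|=3$ solved the uniform bipartition problem with designated initial states over arbitrary communication graphs under global fairness, then on every graph with an odd number $n\le P/2$ of agents, every agent would change its state infinitely often along any globally-fair execution. I will exhibit an agent that, in a stable configuration, is in fact frozen forever, which is the desired contradiction.

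First I would split $Q$ according to the colour function $f$ into the classes $f^{-1}(red)$ and $f^{-1}(blue)$. These classes partition $Q$ and neither is empty (an empty class would make $\left||H_r|-|H_b|\right|\le 1$ unattainable for any graph with $n\ge 2$), so since $|Q|=3$ one of them is a singleton and the other has exactly two states. The two situations are symmetric, so I assume without loss of generality that $f^{-1}(blue)=\{b\}$ is the singleton.

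Next I would fix a communication graph $G=(V,E)$ with an odd number $n$ of agents, $3\le n\le P/2$ (for instance $n=3$, assuming $P$ is large enough), and an arbitrary globally-fair execution $\Xi=C_0,C_1,\dots$ of $Alg$ over $G$. Since $Alg$ solves uniform bipartition, $\Xi$ contains a stable configuration $C_t$ with an associated partition $\{H_r,H_b\}$ of $V_p=V$; because $n$ is odd, $\left||H_r|-|H_b|\right|=1$, hence $\{|H_r|,|H_b|\}=\{(n-1)/2,(n+1)/2\}$ and in particular $H_b\neq\emptyset$. Pick any $v\in H_b$. By the definition of a stable configuration, $v$ remains $blue$ in every configuration reachable from $C_t$, hence in every configuration of $\Xi$ occurring after $C_t$; but $f^{-1}(blue)=\{b\}$, so $v$ is in state $b$ in all those configurations, i.e.\ $v$ never changes its state after $C_t$. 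This contradicts Lemma~\ref{lem:nocha}, which asserts that $v$ changes its state infinitely often in $\Xi$. The case in which $f^{-1}(red)$ is the singleton is handled identically (choosing $v\in H_r$). Therefore no such $Alg$ exists.

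The substantial work has already been carried out in Lemma~\ref{lem:nocha}; here the only step needing care is the passage from ``$v$ never changes colour'' to ``$v$ never changes state'', which is precisely where the three-state hypothesis is essential: it forces one colour class to be a singleton, so that a frozen colour pins down a frozen state. The remaining bookkeeping is routine — one only checks that the chosen $n$ is admissible, namely odd, at least $3$ so that both colour classes of a stable partition are nonempty, and at most $P/2$ so that Lemma~\ref{lem:nocha} applies.
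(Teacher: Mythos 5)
Your proposal is correct and follows essentially the same route as the paper: both invoke Lemma~\ref{lem:nocha} on a graph with an odd number $n\le P/2$ of agents and observe that with three states one colour class is a singleton, so a stably coloured agent of that colour could never change state, contradicting the lemma. Your write-up is slightly more explicit about why both colour classes are nonempty in a stable configuration, but the argument is the same.
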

\begin{proof}
For the purpose of contradiction, we assume that such a protocol $Alg$ exists.

Let $S=\{ s_1, s_2, s_3 \}$ be a state set of agents.
Without loss of generality, $f(s_1) = f(s_2) = red$ and $f(s_3) = blue$ hold.
Consider a globally-fair execution $\Xi$ of $Alg$ over graph $G$ such that the number of agents is odd and no more than $P/2$.
By Lemma \ref{lem:nocha}, after some stable configuration $C_t$ in $\Xi$, each agent changes its state infinitely often.
This implies that each agent with $s_3$ transitions to $s_1$ or $s_2$ after $C_t$. That is, each $blue$ agent transitions to $red$ state after $C_t$.
Since $C_t$ is stable, this is a contradiction.
\end{proof}

\subsection{Lower Bound for Symmetric Protocols under Global Fairness}

In this section, we show that, with arbitrary communication graphs, designated initial states, and no base station assuming global fairness, there exists no symmetric protocol with four states.
Recall that, with designated initial states and no base station, clearly any symmetric protocol never solves the problem if the number of agents $n$ is two.
Thus, we assume that $3 \le n \le P$ holds, where $P$ is a known upper bound of the number of agents.
Note that the symmetric protocol proposed in subsection \ref{sec:uppsym} solves the problem for $3 \le n \le P$.

In this subsection, we newly define $q \overset{sym}{\rightsquigarrow} q'$ as follow:

\begin{itemize}
\item For states $q$ and $q'$, we say $q \overset{sym}{\rightsquigarrow} q'$ if there exists a sequence of states $q = q_0, q_1, \cdots , q_k = q'$ such that, for any $i(0 \le i < k)$, 
transition rule $(q_i,q_i) \rightarrow (q_{i+1},q_{i+1})$ exists.
\end{itemize}

Moreover, we say two agents are homonyms if they have the same state.
Intuitively, $q \overset{sym}{\rightsquigarrow} q'$ means that an agent in state $q$ can transition to $q'$ by only interactions with homonyms.

\begin{theorem}
\label{thm:no4sym}
There exists no symmetric protocol for the uniform bipartition with four states and designated initial states over arbitrary graph assuming global fairness when $P$ is twelve or more.
\end{theorem}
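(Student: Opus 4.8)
The plan is to follow the three-state impossibility proved above, upgrading it with an indistinguishability (graph-merging) argument in the spirit of Lemma~\ref{lem:nocha}. Fix a candidate symmetric four-state protocol $Alg$ with state set $\{s_1,s_2,s_3,s_4\}$ claimed to solve uniform bipartition on every connected graph with at most $P$ agents. The first step is a reduction on the colouring $f$. If some colour is realised by a single state (a $3$-$1$ split), then on any graph with an odd number $n\le P/2$ of agents Lemma~\ref{lem:nocha} forces every agent to change state infinitely often in every globally-fair execution; in particular an agent that is, say, blue in a stable configuration must leave the unique blue state and hence change colour, contradicting stability exactly as in the three-state case. So we may assume the balanced split $f(s_1)=f(s_2)=red$ and $f(s_3)=f(s_4)=blue$.

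The second step extracts the dynamical consequences of Lemma~\ref{lem:nocha}. Pick a small odd $n$ with $2n\le P$ (available since $P\ge 12$) and a connected graph $G$ on $n$ vertices, run a globally-fair execution $\Xi$ of $Alg$ on $G$, and let $D$ be a stable configuration that recurs infinitely often in $\Xi$. Because $D$ is stable, every transition fireable from a configuration reachable from $D$ preserves both participants' colours; hence after $D$ the state of every red agent stays in $\{s_1,s_2\}$ and that of every blue agent stays in $\{s_3,s_4\}$. Combining this with Lemma~\ref{lem:nocha} (each agent changes state infinitely often) shows that, among the configurations recurrently reachable from $D$, each red agent actually alternates between $s_1$ and $s_2$ and each blue agent between $s_3$ and $s_4$. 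A case analysis of which colour-preserving rules can realise these alternations --- homonym rules $(s_i,s_i)\to(s_j,s_j)$, swap rules $(s_i,s_j)\to(s_j,s_i)$, or rules in which a neighbour of the other colour pushes the transition --- together with a pigeonhole remark (for instance, when three red agents all carry states in $\{s_1,s_2\}$, two of them are homonyms) lets us pin down a specific recurring stable configuration $D^\star$ and an edge $(u,v)$ of $G$ whose transition at $D^\star$ is \emph{transparent}: firing it only relabels states within a colour class and leaves a configuration from which $Alg$ still reaches a stable configuration with the same surplus.

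The third step is the merge. Take two disjoint copies $G_1,G_2$ of $G$ and add the single edge joining the copy of $u$ in $G_1$ to the copy of $v$ in $G_2$, obtaining a connected graph $G'$ on $2n\le P$ agents. Build a globally-fair execution $\Xi'$ of $Alg$ on $G'$ in which both copies follow the \emph{same} execution $\Xi$, passing through $D^\star$ infinitely often, and in which the cross edge is fired only after both copies have been driven to $D^\star$; by transparency each such firing preserves the colours of both endpoints and leaves each copy in a configuration from which it re-reaches a stable configuration of $\Xi$, so $\Xi'$ can be completed to a globally-fair execution of $G'$ in which no cross-copy interaction ever changes a colour. Consequently every stable configuration occurring in $\Xi'$ satisfies $\nred{V'}=\nred{V'_1}+\nred{V'_2}$ and $\nblue{V'}=\nblue{V'_1}+\nblue{V'_2}$; since $n$ is odd and both copies ran the identical execution, they contribute a surplus of $\pm1$ of the same sign, so $|\nred{V'}-\nblue{V'}|=2$ persists, contradicting that $\Xi'$ must solve uniform bipartition on $G'$. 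The main obstacle is the second step: four states leave enough slack that the transparent transition has to be located by a genuine case analysis of the colour-preserving sub-dynamics, and it is precisely the need to host a non-trivial odd graph, its doubled copy, and enough agents for the pigeonhole step that forces $P\ge 12$.
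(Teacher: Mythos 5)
Your first step (ruling out a $3$--$1$ colour split via Lemma~\ref{lem:nocha}) matches the paper. The gap lies in your second and third steps. Essentially all of the technical content of the paper's argument sits inside what you defer to ``a case analysis'': the paper proves that the symmetric homonym dynamics force a pair of transitions $(ini_r,ini_r)\to(ini_b,ini_b)$ and $(ini_b,ini_b)\to(ini_r,ini_r)$ (Lemma~\ref{lem:sym2}), that no rule increases $\#ini$ (Lemma~\ref{lem:ini}), that $\#ini\le 1$ eventually holds on complete graphs (Lemma~\ref{lem:ininum}), and that every rule with at most one $ini$-endpoint preserves both colours (Corollary~\ref{cor:Ini}). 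None of these is established in your proposal, and they are not routine: each needs its own counting argument on a complete graph of a specific size. Without them you have no characterization of which interactions can change colours, so ``pinning down a transparent edge'' is an unsupported claim rather than a proof step.

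Moreover, the two-copy transparent-edge merge would fail even if those lemmas were granted. Under global fairness you cannot schedule the cross edge to fire only at $D^\star$: if a configuration in which both endpoints of the cross edge carry $ini$-states recurs --- and by Lemma~\ref{lem:nocha} every agent keeps cycling through its states, so you would have to prove such configurations do \emph{not} recur --- then the colour-changing cross interaction must also occur infinitely often. A single transparent firing is therefore useless; one must freeze colours \emph{globally}. The paper achieves this by merging \emph{four} copies and joining the four $ini_r$-holders into a clique: those four agents form a legitimate $4$-agent sub-population started from the designated initial state, so by Lemma~\ref{lem:ininum} they can be driven among themselves to a configuration with $\nini{V'_{red}}\le 1$ and two agents of each colour, after which $\#ini\le 1$ holds in the whole $12$-agent graph, Corollary~\ref{cor:Ini} and Lemma~\ref{lem:ini} freeze every colour, and the surplus is $8$. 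With only two copies this cannot be done: the two special agents, always holding the same $ini$-state as each other, can only oscillate between $(ini_r,ini_r)$ and $(ini_b,ini_b)$ and never reduce $\#ini$ below $2$, so colours are never frozen. This is also why the threshold is $P\ge 12$ (four copies of a $3$-ring), not the ``doubled copy plus pigeonhole'' count you give.
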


For the purpose of contradiction, suppose that there exists such a protocol $Alg$. Let $R$ (resp., $B$) be a state set such that, for any $s \in R$ (resp., $s^{\prime} \in B$), $f(s) = red$ (resp., $f(s^{\prime}) = blue$) holds.
First, we show that the following lemma holds from Lemma~\ref{lem:nocha}.

\begin{lemma}
$|R|=|B|$ holds (i.e., $|R| = 2$ and $|B| = 2$ hold).
\end{lemma}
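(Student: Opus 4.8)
The plan is to derive $|R| = |B| = 2$ from the combination of the four-states assumption and Lemma~\ref{lem:nocha}. Since $|R| + |B| = 4$ and both $R$ and $B$ are nonempty (the initial state is $red$, so $|R| \geq 1$; and a stable bipartition with $n \geq 3$ agents requires at least one $blue$ agent, so $|B| \geq 1$), the only possibilities are $(|R|,|B|) \in \{(1,3),(2,2),(3,1)\}$. I would rule out the asymmetric-looking cases $(1,3)$ and $(3,1)$ by the same mechanism used in the proof of the lower bound for asymmetric protocols under global fairness: Lemma~\ref{lem:nocha} guarantees that, over a graph $G$ with an odd number $n \leq P/2$ of agents, in any globally-fair execution every agent changes its state infinitely often after the stable configuration. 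If, say, $|B| = 1$, then the unique $blue$ state $s$ must, infinitely often, be left by every $blue$ agent, so a $blue$ agent must transition to a $red$ state after the stable configuration is reached — contradicting stability (a $blue$ agent in $H_b$ must remain $blue$). Symmetrically for $|R| = 1$. Hence $|R| = |B| = 2$.

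A point requiring a little care is that Theorem~\ref{thm:no4sym} is stated for $P \geq 12$, whereas Lemma~\ref{lem:nocha} needs a graph with an odd number of agents that is at most $P/2$; for $P \geq 12$ we certainly can pick, e.g., $n = 3$ or $n = 5$, so the hypothesis of Lemma~\ref{lem:nocha} is satisfiable, and one should note explicitly that such a graph exists (it need not even be complete — any connected graph on $n$ vertices works, and indeed $Alg$ is assumed to work over arbitrary graphs). I would also remark that $Alg$ being symmetric is irrelevant to this particular argument: Lemma~\ref{lem:nocha} applies to any uniform bipartition protocol regardless of symmetry, and the contradiction only uses that some color class of states is a singleton.

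I do not anticipate a serious obstacle here: this lemma is essentially a bookkeeping consequence of Lemma~\ref{lem:nocha}, and the proof is short. The only thing to be slightly careful about is phrasing the "every agent changes its state infinitely often, so the singleton color class cannot be stable" step precisely — one must observe that a $blue$ agent changing its state necessarily changes its color when $|B| = 1$, since the only state it could move to lies in $R$. The real work of Theorem~\ref{thm:no4sym} will come after this lemma, in analyzing the structure of the two $red$ states and two $blue$ states and using the $\overset{sym}{\rightsquigarrow}$ relation together with a partitioning/indistinguishability construction (in the spirit of Lemma~\ref{lem:nocha}) to force a contradiction; that is where I would expect the technical difficulty to lie, not in establishing $|R| = |B| = 2$.
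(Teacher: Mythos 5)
Your proposal is correct and matches the paper's own proof: both rule out the singleton color class by applying Lemma~\ref{lem:nocha} to a small odd population (the paper uses $n=3$) and observing that an agent whose color class contains only one state must change color when it changes state, contradicting stability. The only cosmetic difference is that the paper fixes $|R|=1$ without loss of generality and argues on the $red$ side, whereas you argue on whichever class is the singleton.
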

\begin{proof}
Assume, for the purpose of contradiction, that $|R| \neq |B|$ holds.
Without loss of generality, assume that $|R|=1$ and $|B|=3$ hold (clearly $R\neq \emptyset$ holds and thus only this combination is valid).
Moreover, let $ r $ (resp., $b_1$, $b_2$, and $b_3$) be a state belonging to $R$ (resp., $B$).

Consider a globally-fair execution $\Xi = C_0$, $C_1$, $C_2$, $\ldots$ of $Alg$ with a communication graph $G$ such that the number of agents $n$ is three.
From Lemma \ref{lem:nocha}, after some stable configuration $C_t$ in $\Xi$, each agent transitions its own state to another state infinitely often.
Hence, after $C_t$, some agent with $r$ transitions to some $blue$ state.
Since $C_t$ is stable, this is a contradiction.
\end{proof}

Let $ ini_r$ and $r $ (resp.,  $ini_b$ and $b$) be states belonging to $R$ (resp., $B$).
In addition, without loss of generality, assume that $ini_r$ is the initial state of agents.

From the property of symmetry, the following lemma holds.
\begin{lemma}
\label{lem:sym}
Consider a symmetric transition sequence $(ini_r,ini_r) \rightarrow (p_1,p_1)$,  $(p_1,p_1) \rightarrow (p_2,p_2)$, $(p_2,p_2) \rightarrow (p_3,p_3)$, $\ldots $ starting from $ini_r$.
For any $i$, $p_i \neq p_{i+1}$ holds.
\end{lemma}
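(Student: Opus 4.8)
The plan is to prove Lemma~\ref{lem:sym} by contradiction: suppose there is an index $i$ with $p_i = p_{i+1}$ in the symmetric transition sequence $(ini_r,ini_r) \to (p_1,p_1) \to (p_2,p_2) \to \cdots$. Since $Alg$ has exactly four states and is symmetric, whenever two homonyms in state $p$ interact, they move to a common state that I will call $\sigma(p)$, so the sequence $ini_r, p_1, p_2, \ldots$ is just the orbit of $ini_r$ under the function $\sigma$. First I would observe that $\sigma(p_i) = p_{i+1} = p_i$ means $p_i$ is a fixed point of $\sigma$: once an agent reaches $p_i$, no interaction with a homonym ever changes its state.

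**Exploiting the fixed point to freeze a monochromatic population.**
The key step is to run $Alg$ on a population where \emph{every} agent is in state $p_i$ and argue this population can never achieve uniform bipartition, contradicting correctness. Concretely, I would take the initial configuration (all agents in $ini_r$) on a suitable graph $G$ and first drive the whole population to state $p_i$: because $ini_r \overset{sym}{\rightsquigarrow} p_i$ via the sequence $ini_r \to p_1 \to \cdots \to p_i$, under global fairness one can reach, from the all-$ini_r$ configuration, the configuration in which all $n$ agents are in $p_i$ — interactions between homonyms suffice, and global fairness guarantees any such reachable configuration recurs (or at least is reached once, which is all we need here). Call this configuration $C^\ast$. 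From $C^\ast$, I claim the only way an agent leaves state $p_i$ is through an interaction with an agent in a \emph{different} state; but in $C^\ast$ all agents are homonyms in the $\sigma$-fixed state $p_i$, so by induction on the length of the execution starting at $C^\ast$, every reachable configuration still has all agents in state $p_i$: the first interaction is between two agents in $p_i$, which is the transition $(p_i,p_i)\to(p_i,p_i)$ (a fixed point), leaving the configuration unchanged, and the induction continues trivially. Hence the population is stuck with all agents the same color $f(p_i)$, so $\#red$ and $\#blue$ can never satisfy $||H_r| - |H_b|| \le 1$ for any $n \ge 3$ (one of the two groups is empty). This contradicts the assumption that $Alg$ solves uniform bipartition.

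**Choosing $n$ and handling the graph.**
I would just pick $n$ to be, say, $3$ (or any value in the allowed range $3 \le n \le P$, which is nonempty since $P \ge 12$), and take $G$ to be any connected graph on $n$ vertices — a path or a triangle works. The argument above only uses that the graph is connected (so the all-$p_i$ configuration $C^\ast$ is reachable from all-$ini_r$ by homonym interactions cascading along edges) and that $n \ge 3$ forces a genuinely nontrivial partition. Since all agents start in $ini_r$ and $ini_r \overset{sym}{\rightsquigarrow} p_i$ by hypothesis, reachability of $C^\ast$ is immediate from the definition of $\overset{sym}{\rightsquigarrow}$ together with the fact that transitions propagate: first make one pair of adjacent agents walk $ini_r \to p_1 \to \cdots \to p_i$, but actually it is cleaner to move the \emph{whole} population in lockstep one step at a time — from all-$ini_r$, repeatedly apply $(p,p)\to(\sigma(p),\sigma(p))$ to adjacent pairs until all are in $p_1$, then all in $p_2$, and so on to $p_i$. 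This is a finite reachable sequence, so $C^\ast$ is reached in any globally-fair execution that is steered this way, and global fairness then forces the contradiction.

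**Main obstacle.**
The only delicate point is making sure the argument that "an agent in $p_i$ never leaves $p_i$ in the population $C^\ast$" is airtight: a priori an agent could leave $p_i$ via an interaction with a \emph{non-homonym}, but the whole force of the fixed-point observation is that in $C^\ast$ there are no non-homonyms, and this property is preserved forever because the only available interaction $(p_i,p_i)\to(p_i,p_i)$ is null. I expect the authors will package this as: starting from the all-$ini_r$ configuration, the configuration with all agents in $p_i$ is reachable, it is a "dead" configuration (all transitions null), hence it is stable but monochromatic, contradicting the partition requirement for $n \ge 3$. The subtlety to be careful about is that Lemma~\ref{lem:nocha} is about executions where agents change state infinitely often, whereas here the relevant execution is eventually frozen — but that is fine, because a frozen monochromatic configuration directly violates the definition of a stable uniform-bipartition configuration, so we do not even need Lemma~\ref{lem:nocha} for this particular lemma; we only need it for the later parts of the proof of Theorem~\ref{thm:no4sym}.
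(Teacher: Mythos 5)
Your overall strategy is exactly the paper's: assume $p_i=p_{i+1}$, drive the whole population to the fixed point $p_i$ using only homonym interactions, observe that the resulting monochromatic configuration is frozen (the only enabled transition $(p_i,p_i)\rightarrow(p_i,p_i)$ is null), and conclude that no stable uniform-bipartition configuration is ever reached. However, your concrete instantiation has a genuine gap: you propose $n=3$ on a path or triangle, and the ``lockstep'' construction you describe does not work for odd $n$. Starting from three agents all in $ini_r$, any homonym interaction advances exactly two agents to $p_1$ and strands the third in $ini_r$; from $\{p_1,p_1,ini_r\}$ the stranded agent has no homonym neighbour, so it can never be advanced by homonym interactions (unless the orbit of $\sigma$ happens to return to $ini_r$, which you cannot assume). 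Hence the all-$p_i$ configuration $C^\ast$ is in general \emph{not} reachable with $n=3$, and your fallback of advancing only one adjacent pair to $p_i$ does not help either: the remaining agent then interacts with the pair via non-homonym transitions whose effect is unconstrained, so the freezing argument collapses.

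The fix is exactly what the paper does: take $n=4$ on a complete graph (or any graph with a perfect matching), partition the agents into two disjoint adjacent pairs, and let each pair interact $i$ times so that all four agents reach $p_i$ simultaneously; then the configuration is dead and monochromatic, and $\left||H_r|-|H_b|\right|\le 1$ fails, contradicting correctness on that instance. Everything else in your write-up --- the fixed-point observation, the preservation argument, the remark that Lemma~\ref{lem:nocha} is not needed here --- is sound and matches the paper; only the choice of population size needs to be even (and the graph needs to admit the pairing) for the reachability of $C^\ast$ to go through.
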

\begin{proof}
For the purpose of contradiction, suppose that $p_i = p_{i+1}$ holds for some $i$.
Consider a complete communication graph $G=(V,E)$ such that the number of agents $n$ is four, where $V=\{ v_1$, $v_2$, $v_3$, $v_4 \}$.

Consider a globally-fair execution $\Xi$ as follows:
\begin{itemize}
\item $v_1$ (resp., $v_3$) interacts with $v_2$ (resp., $v_4$) $i$ times.
\item After that, make interactions so that $\Xi$ satisfies global fairness.
\end{itemize}

Since the initial state of agents is $ini_r$, all agents have $p_i$ after the $i$ interactions.
By the assumption, since $(p_i,p_i) \rightarrow (p_i,p_i)$ holds, every agent keeps state $p_i$ after that.
Hence, $\Xi$ cannot reach a stable configuration.
Since $\Xi$ is globally fair, this is a contradiction.
\end{proof}

Additionally, we can extend Lemma \ref{lem:sym} as follow:
\begin{lemma}
\label{lem:sym2}
There exists some $s_b\in B$ such that $(ini_r,ini_r) \rightarrow (s_b,s_b)$ and $(s_b,s_b) \rightarrow (ini_r,ini_r)$ hold.
\end{lemma}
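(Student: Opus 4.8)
The plan is to push the argument of Lemma~\ref{lem:sym} one step further. Since $Alg$ is deterministic and symmetric, there is a well-defined map $g:Q\to Q$ with $(p,p)\rightarrow(g(p),g(p))\in\delta$; consider the forward orbit $p_0=ini_r,\ p_1=g(ini_r),\ p_2=g(p_1),\ldots$ Because $\delta$ is deterministic, a transition $(ini_r,ini_r)\rightarrow(s_b,s_b)$ forces $s_b=p_1$, and then $(s_b,s_b)\rightarrow(ini_r,ini_r)$ forces $p_2=ini_r$; hence the lemma is equivalent to the assertion that \emph{the forward orbit of $ini_r$ under $g$ is exactly the $2$-cycle $ini_r\leftrightarrow p_1$ with $p_1\in B$}. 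By Lemma~\ref{lem:sym} this orbit has no two consecutive equal entries, and since $|Q|=4$ it is eventually periodic with period at least two. I will establish the two halves separately: (a) $p_2=ini_r$, so that the orbit is the $2$-cycle $ini_r\leftrightarrow p_1$; and (b) $p_1\neq r$, i.e.\ $p_1\in B$.

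The two workhorses are the following. First, on a complete graph $K_m$ with $m$ even, starting from the all-$ini_r$ configuration, one pairing round (each of the $m/2$ pairs interacts once) takes every agent in state $p_j$ to state $p_{j+1}$; iterating, the monochromatic configuration ``all agents in $p_j$'' is reachable from all-$ini_r$ for every $j\ge 0$, and more generally ``all agents in $q$'' is reachable from ``all agents in $q'$'' whenever $q,q'$ lie on the orbit with $q'$ an iterate of $q$. Second, since $P\ge 12$ we may invoke Lemma~\ref{lem:nocha} on small populations (e.g.\ $K_3$, and, via the copy-gluing construction used in its proof, on populations of size up to $P$): in any globally-fair execution the protocol reaches a \emph{recurrent} stable configuration, whose partition $\{H_r,H_b\}$ is frozen, so no configuration reachable from it recolors any agent. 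Finally, a monochromatic configuration on $m\ge 2$ agents is never stable, since it cannot satisfy the size constraint $\left||H_r|-|H_b|\right|\le 1$.

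For part~(a), suppose $p_2\neq ini_r$; then $ini_r,p_1,p_2$ are distinct, and the periodic part of the orbit is a cycle of length $2$, $3$ or $4$ containing some state $p^{*}\neq ini_r$. On a $K_m$ with $m$ a power of two and $3\le m\le P$, the configuration ``all $p^{*}$'' is reachable from all-$ini_r$, and, because $p^{*}$ lies on the orbit cycle, ``all $p^{*}$'' reaches itself again using only pairing rounds. Extending to a globally-fair execution in which ``all $p^{*}$'' recurs infinitely often then contradicts convergence: such an execution must also visit a stable configuration infinitely often, whose frozen partition would forbid the monochromatic ``all $p^{*}$'' from ever reappearing. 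Part~(b) is analogous, and additionally rules out that the orbit stays inside $R$: if $p_1=r$ then by part~(a) the orbit is $ini_r\leftrightarrow r$, every configuration reached from all-$ini_r$ by pairing rounds is all-red, and one derives in the same way a recurring all-red configuration, again impossible.

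The main obstacle is that the hypotheses place \emph{no} constraint on the non-homonym transition rules, so the whole argument must be arranged so that every configuration it invokes is reached from all-$ini_r$ purely through homonym (pairing) interactions, and every contradiction is extracted from the frozen partition of a recurrent stable configuration rather than from any particular non-homonym rule. Concretely, the delicate step is making ``recurs infinitely often'' rigorous, i.e.\ showing that the monochromatic configuration ``all $p^{*}$'' (equivalently ``all $r$'') lies in a recurrent class of the reachability graph so that some globally-fair execution indeed visits it infinitely often despite the uncontrolled non-homonym interactions available from it; this is exactly where the copy-gluing construction of Lemma~\ref{lem:nocha} is brought back in, and where the case bookkeeping over the possible shapes of the orbit cycle is carried out.
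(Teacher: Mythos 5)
Your proposal has a genuine gap, and it sits exactly where you flag ``the delicate step.'' Your contradiction in both parts (a) and (b) requires a globally-fair execution in which a monochromatic configuration (``all $p^{*}$'' or ``all red'') recurs infinitely often. But in a globally-fair execution, all infinitely-recurring configurations are mutually reachable, so this would require the monochromatic configuration to lie in a closed recurrent class. A correct protocol converges under global fairness, so the only candidates for recurrent classes are those of stable configurations, and a monochromatic configuration on $m\ge 2$ agents is never stable --- so no such execution exists, and your intended contradiction never materializes. The ``pairing rounds forever'' schedule is not globally fair: the intermediate configurations of a pairing round are not monochromatic, they enable non-homonym interactions, and global fairness forces those to occur; you cannot suppress them. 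Deferring this to ``the copy-gluing construction of Lemma~\ref{lem:nocha}'' does not help, since that construction establishes that agents change state infinitely often, not that a chosen configuration is recurrent.

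The paper's proof runs in the opposite direction and avoids this entirely. It starts from a stable configuration $C_t$ of a globally-fair execution on $K_6$ (guaranteed to exist and to recur), where exactly $3$ agents are red and $3$ are blue. Since $|R|=|B|=2$, the pigeonhole principle forces two homonym agents of each color in $C_t$; because every configuration reachable from $C_t$ preserves the partition, any state $q$ held by two agents in $C_t$ must satisfy $f(q)=f(q')$ for every $q'$ with $q\overset{sym}{\rightsquigarrow}q'$. The case analysis over the three possible shapes of the orbit (your cycle lengths $4$, $3$, $2$) is then carried out against this color-closure constraint, supplemented in Cases~2 and~3 by the observation that if the orbit's periodic part stayed within one color class, some non-homonym rule on the two states of that class must change color (else no agent of the other color could ever be created from the all-$ini_r$ start), and that rule can be fired from a configuration reachable from a stable one --- again a contradiction with the frozen partition. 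Your orbit framing and your reachability claims about pairing rounds are fine as far as they go, but without the stable-configuration pigeonhole argument the proof does not close.
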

\begin{proof}
Consider a globally-fair execution $\Xi = C_0$, $C_1$, $C_2$, $\ldots$ of $Alg$ with a complete communication graph $G$ such that the number of agents $n$ is six.
First, consider a state $q$ such that there are two or more agents with $q$ in a stable configuration $C_t$ of $\Xi$.
Observe that, for any state $q'$ such that $q \overset{sym}{\rightsquigarrow} q'$ holds, $f(q) = f(q')$ holds.
This is because, if such equation does not hold, it contradicts the definition of the stable configuration (i.e., it contradicts the fact that each agent cannot change its own color after a stable configuration).
Using this fact, we show that the lemma holds.

Since the number of states is four and Lemma \ref{lem:sym} holds, there are three possible symmetric transition sequences starting from $ini_r$ as follows:
\begin{enumerate}
\item For distinct states $ini_r$, $p_1$, $p_2$, and $p_3$, there exists a transition sequence $(ini_r,ini_r)$ $\rightarrow$ $(p_1,p_1)$, $(p_1,p_1) \rightarrow (p_2,p_2)$, $(p_2,p_2) \rightarrow (p_3,p_3)$, $(p_3,p_3) \rightarrow (x,x)$, $\dots$, where $x \in \{ ini_r, p_1, p_2 \}$.
\item For distinct states $ini_r$, $p_1$, and $p_2$, there exists a transition sequence $(ini_r$, $ini_r)$ $\rightarrow$ $(p_1,p_1)$, $(p_1,p_1) \rightarrow (p_2,p_2)$, $(p_2,p_2) \rightarrow (y,y)$, $\dots$, where $y \in \{ ini_r, p_1 \}$.
\item For distinct states $ini_r$ and $p_1$, there exists a transition sequence $(ini_r,ini_r)$ $\rightarrow$ $(p_1,p_1)$, $(p_1,p_1) \rightarrow (ini_r,ini_r)$, $(ini_r,ini_r) \rightarrow (p_1,p_1)$, $\dots$.
\end{enumerate}

\noindent\textbf{Case 1:}
Assume, for the purpose of contradiction, that the transition sequence 1 holds.
Let $p \in \{p_1, p_2, p_3\}$ be a state such that $f(p)= red$ holds. By the assumption, in a stable configuration $C_t$ of $\Xi$, $\# red$ is three.
This implies that there exist two agents with $ini_r$ or $p$.
If there exist two agents with $ini_r$ in $C_t$, they can transition to $p' \in \{p_1, p_2, p_3\}$ such that $f(p') = blue \neq f(ini_r)$ holds by only interactions with homonyms.
Hence, by the definition of stable configurations, there exists at most one agent with $ini_r$ in $C_t$ and thus there exist two or more agents with $p$.

We consider cases of $p=p_3$, $p=p_2$, and $p=p_1$.
In the case of $p=p_3$, $f(p_1)=f(p_2)=blue$ holds. By the transition sequence 1, $p$ can transition to $p_2$ by only interactions with homonyms.
By the definition of stable configurations, since $f(p) \neq f(p_2)$ holds, this case does not hold.
In the case of $p=p_2$, $f(p_1)=f(p_3)=blue$ holds. By the transition sequence 1, $p$ can transition to $p_3$ by only interactions with homonyms.
By the definition of stable configurations, since $f(p) \neq f(p_3)$ holds, this case does not hold.
In the case of $p=p_1$, $f(p_2)=f(p_3)=blue$ holds. By the transition sequence 1, $p$ can transition to $p_2$ by only interactions with homonyms.
By the definition of stable configurations, since $f(p) \neq f(p_2)$ holds, this case does not hold.
Thus, the transition sequence 1 does not hold.

\noindent\textbf{Case 2:} 
Assume, for the purpose of contradiction, that the transition sequence 2 holds.
We show that 1) $f(p_1)=f(p_2)=blue$ holds and 2) $y = p_1$ holds.
After that, from these facts, we show that 3) the transition sequence 2 does not hold.

First we show 1) $f(p_1)=f(p_2)=blue$ holds.
Assume, for the purpose of contradiction, that either $p_1$ or $p_2$ is a $red$ state and the other is a $blue$ state (since $ini_r$ is $red$ state, either $p_1$ or $p_2$ is $blue$ state).
In this case, since $ini_r$ can transition to $blue$ state by only interactions with homonyms, there exist at most one agent with $ini_r$ in $C_t$ of $\Xi$.
Hence, in $C_t$, there exists two $red$ agents with $p_1$ or $p_2$.
By the transition sequence 2, $p_1$(resp., $p_2$) can transition to $p_2$(resp., $p_1$) by only interactions with homonyms.
Since $f(p_1) \neq f(p_2)$ holds, this contradicts the definition of stable configuration.

Next, we show 2) $y = p_1$ holds.
For the purpose of contradiction, assume that $y=ini_r$ holds.
In this case, $p_1$ and $p_2$ can transition to $ini_r$ by only interactions with homonyms.
In addition, $f(ini_r)=red$ holds.
From 1) $f(p_1)=f(p_2)=blue$, since $\# blue$ is three in a stable configuration $C_t$ of $\Xi$, there exist two agents with $p_1$ or $p_2$.
These facts contradict the definition of stable configuration.

Finally, we show 3) the transition sequence 2 does not hold.

Consider a complete communication graph $\hat{G}=(\hat{V},\hat{E})$ such that the number of agents $n$ is six, where $\hat{V}=\{ \hat{v_0}$, $\hat{v_1}$, $\hat{v_2}$, $\ldots$, $\hat{v_{5}} \}$.
Moreover, consider a globally-fair execution $\hat{\Xi}$ as follows:
\begin{itemize}
\item $\hat{v_0}$, $\hat{v_1}$, and $\hat{v_2}$ interact with $\hat{v_3}$, $\hat{v_4}$, and $\hat{v_5}$ once, respectively.
\item After that, make interactions so that $\hat{\Xi}$ satisfies global fairness.
\end{itemize}

After the first item, all agents have $p_1$.
Hence, from 1) $f(p_1)=f(p_2)=blue$ and 2) $y = p_1$ (i.e., $(p_1,p_1) \rightarrow (p_2,p_2)$ and $(p_2,p_2) \rightarrow (p_1,p_1)$ hold), there exists transition rule $(p_1,p_2) \rightarrow (s_{r1},s_{r2})$ such that $f(s_{r1}) = f(s_{r2}) = blue$ does not hold.
Since transition rules $(p_1,p_1) \rightarrow (p_2,p_2)$ and $(p_2,p_2) \rightarrow (p_1,p_1)$ exist and only $p_1$ and $p_2$ are $blue$ states, a stable configuration $\hat{C_t}$ of $\hat{\Xi}$ can reach $\hat{C_{t'}}$ such that both $p_1$ and $p_2$ exist in $\hat{C_{t'}}$.
Additionally, since $(p_1,p_2) \rightarrow (s_{r1},s_{r2})$ exists, some agent can change its color after $\hat{C_{t'}}$.
Since $\hat{C_t}$ is stable, this is a contradiction.
Thus, the transition sequence 2 does not hold.

\noindent\textbf{Case 3:} 
Assume, for the purpose of contradiction, that $f(p_1)=red$ holds.
Consider a globally-fair execution $\Xi'$ of $Alg$ with a complete communication graph $G'=(V',E')$ such that the number of agents is six.
Since transition rules $(ini_r,ini_r) \rightarrow (p_1,p_1)$ and $(p_1,p_1) \rightarrow (ini_r,ini_r)$ exist and $f(ini_r)=f(p_1)=red$ holds, there exists transition rule $(ini_r,p_1) \rightarrow (s_{b1},s_{b2})$ such that $f(s_{b1}) = f(s_{b2}) = red$ does not hold.
Since transition rules $(ini_r,ini_r) \rightarrow (p_1,p_1)$ and $(p_1,p_1) \rightarrow (ini_r,ini_r)$ exist and only $ini_r$ and $p_1$ are $red$ state (and the number of $red$ agents is three in any stable configuration), a stable configuration $C'_t$ of $\Xi'$ can reach $C'_{t'}$ such that both $p_1$ and $ini_r$ exist in $C'_{t'}$.
In addition, since $(ini_r,p_1) \rightarrow (s_{b1},s_{b2})$ exists, some agent can change its color from $C'_{t'}$.
Since $C'_{t}$ is stable, this is a contradiction.
Therefore, $f(p_1) = blue$ holds and thus the lemma holds.
\end{proof}

Without loss of generality, assume that $(ini_r,ini_r) \rightarrow (ini_b,ini_b)$ and $(ini_b,ini_b) \rightarrow (ini_r,ini_r)$ exist.
For some population $V$, we denote the number of agents with $ini_r$ (resp., $ini_b$) belonging to $V$ as $\ninir{V}$ (resp., $\ninib{V}$).
Moreover, let $\nini{V}$ be the sum of $\ninir{V}$ and $\ninib{V}$.
When $V$ is clear from the context, we simply denote them as $\#ini_r$, $\#ini_b$, and $\#ini$, respectively.

From now on, we show that Theorem~\ref{thm:no4sym} holds if the following lemmas and corollary hold.
We show the detail of these intermediate proofs later.

\begin{lemma}
\label{lem:ini}
There does not exist a transition rule such that $\#ini$ increases after the transition. 
\end{lemma}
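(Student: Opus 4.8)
The plan is to argue by contradiction: suppose $\delta$ contains a transition $\tau=(p,q)\to(p',q')$ after which the number of $ini$-agents strictly increases, i.e., strictly more of $p',q'$ belong to $\{ini_r,ini_b\}$ than of $p,q$. Since the only non-$ini$ states are $r$ and $b$, $\tau$ must have one of two shapes: (A) both $p,q\in\{r,b\}$ while at least one of $p',q'$ is an $ini$-state; or (B) exactly one of $p,q$ is an $ini$-state while both $p',q'$ are $ini$-states. I would recall from Lemma~\ref{lem:sym2} and the subsequent WLOG choice that $(ini_r,ini_r)\to(ini_b,ini_b)$ and $(ini_b,ini_b)\to(ini_r,ini_r)$ belong to $\delta$, and observe that each of these transitions changes the colors of both participants. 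Hence, in any configuration reachable from a stable configuration, no edge can have both endpoints in state $ini_r$, and no edge can have both endpoints in state $ini_b$; otherwise that configuration would have a successor in which some agent changes color.

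The core gadget is the complete graph $K_{12}$ on twelve agents, all starting in $ini_r$; since $Alg$ solves uniform bipartition, some globally-fair execution on $K_{12}$ reaches a stable configuration $C_t$. In $C_t$ the red agents number $\lceil 12/2\rceil=6$, and by the observation above at most one of them is in state $ini_r$, so at least $5$ of them are in state $r$; symmetrically at least $5$ blue agents are in state $b$. Thus every pair of states from $\{r,b\}$ occurs on many vertex-disjoint edges of $C_t$. For a shape-(A) transition $\tau$, I would apply $\tau$ on one such edge — or, if $\tau$ leaves one endpoint non-$ini$ and produces exactly one $ini$-state, on two vertex-disjoint such edges — reaching a configuration $C'$ still reachable from $C_t$ that either already exhibits a color change (when $\tau$ turns a red agent blue, or vice versa) or contains two adjacent agents in a common $ini$-state; in the latter case the matching toggle transition then produces a color change in a configuration reachable from $C_t$. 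Either way this contradicts the stability of $C_t$. Essentially the same reasoning handles a shape-(B) transition $(ini_c,s)\to(ini_d,ini_e)$ on $K_{12}$ as long as a state-$ini_c$ agent occurs in some configuration reachable from $C_t$: one either gets an immediate color change (if $c\neq d$ or $s\neq e$) or, after re-applying $\tau$, two adjacent homonyms in an $ini$-state.

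The main obstacle is precisely the remaining case: a shape-(B) transition whose left-hand $ini$-state never reappears in the trap of the stable configuration produced on $K_{12}$ (indeed, the trap of a stable configuration may be entirely $ini$-free, since every transition among $\{r,b\}^2$ that is used there must be null). To cover it I would switch to a non-stabilization argument: on a suitable graph, starting from the all-$ini_r$ configuration, the scheduler keeps at least one $ini_c$ agent together with a state-$s$ neighbour present forever, repeatedly using $\tau$ to regenerate $ini$-states and the toggle transitions to recreate configurations containing two adjacent agents in state $ini_r$, while interleaving all other interactions so that the execution is globally fair. Then a configuration with two adjacent $ini_r$ agents recurs, so by global fairness $(ini_r,ini_r)\to(ini_b,ini_b)$ fires infinitely often, colors change infinitely often, and no stable configuration is ever reached, again contradicting that $Alg$ solves the problem. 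Turning this scheduling description into a formally globally-fair execution, and verifying the whole argument uniformly across all shapes of $\tau$, is the step that demands the most care.
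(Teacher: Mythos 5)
Your case split is the right one, and your handling of shape~(A) (both inputs in $\{r,b\}$) is sound, if more roundabout than the paper's: the paper disposes of that shape wholesale via Lemma~\ref{lem:rb}, which shows the transitions on $(r,r)$, $(b,b)$, $(r,b)$ are already forced to be null, so no shape-(A) rule can exist at all. Your contradiction mechanism for shape~(B) --- manufacture two adjacent homonyms in a common $ini$-state, then fire the toggle $(ini_r,ini_r)\rightarrow(ini_b,ini_b)$ to change a color after a stable configuration --- is also essentially the paper's mechanism (it is the content of Lemma~\ref{lem:sta}).

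The genuine gap is exactly the obstacle you flag and then do not close: for a shape-(B) rule $(ini_c,s)\rightarrow(ini_d,ini_e)$ you must show that an agent in state $ini_c$ with a neighbour in state $s$ is \emph{realizable in the trap of a stable configuration}. Your proposed fix --- a scheduler that ``keeps at least one $ini_c$ agent together with a state-$s$ neighbour present forever'' while remaining globally fair --- does not work. Under global fairness the scheduler has no such power: every configuration reachable from a recurring configuration must itself recur, so if the protocol can drive $\#ini$ to zero from the recurring configurations (and no $\{r,b\}^2$ rule recreates $ini$-states, which is precisely what shape~(A) being impossible guarantees), then the execution stabilizes with $\tau$ never enabled again, and no contradiction is obtained. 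What is missing is a proof that $ini_c$ \emph{must} keep reappearing after stabilization. The paper supplies this with Lemma~\ref{lem:nocha} (every agent changes its state infinitely often in the trap of a stable configuration on an odd-size graph with $n\le P/2$, proved by a graph-doubling argument): since there are only two $red$ states and two $blue$ states and colors are frozen after stabilization, red agents must oscillate between $ini_r$ and $r$ and blue agents between $ini_b$ and $b$, so on a $5$-agent complete graph stable configurations with exactly one $ini_r$ (resp.\ exactly one $ini_b$) and at least one $r$ and one $b$ agent recur; firing the hypothesized rule there once (or twice, in the cross-color case) produces two agents sharing $ini_r$ or two sharing $ini_b$, contradicting Lemma~\ref{lem:sta}. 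Without an analogue of Lemma~\ref{lem:nocha}, your shape-(B) argument does not go through.
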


\begin{lemma}
\label{lem:ininum}
Consider a globally-fair execution $\Xi$ of $Alg$ with some complete communication graph $G$.
After some configuration in $\Xi$, $\#ini \le 1$ holds.
\end{lemma}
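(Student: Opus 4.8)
The plan is to argue by contradiction, in the style of Lemmas~\ref{lem:tok} and~\ref{lem:tok-2}, with the states $ini_r,ini_b$ playing the role of tokens. By Lemma~\ref{lem:ini} the quantity $\#ini$ never increases along an execution, so it suffices to rule out that $\#ini\ge 2$ holds in every configuration of $\Xi$; assume this for contradiction. Since $G$ is finite, some configuration $C$ occurs infinitely often in $\Xi$; and since $\Xi$, being an execution of a protocol that solves uniform bipartition, contains a stable configuration $C_s$ and $C$ occurs after $C_s$, the configuration $C$ is reachable from $C_s$ and hence itself stable: there is a partition $\{H_r,H_b\}$ of the agents --- necessarily the red and the blue agents of $C$ --- such that every agent keeps its color in every configuration reachable from $C$.

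First I would pin down the $ini$-content of $C$. If two agents were both in state $ini_r$, they would be adjacent ($G$ is complete), so from $C$ one could reach a configuration in which they interact via $(ini_r,ini_r)\rightarrow(ini_b,ini_b)$ and both turn blue, contradicting stability; symmetrically for $ini_b$. Hence $\#ini_r=\#ini_b=1$ in $C$, and, since $n\ge 3$, $C$ also contains an agent in state $r$ or in state $b$. Next I would show that the transition rule $(ini_r,ini_b)\rightarrow(p',q')$ is the null rule. In some configuration reachable from $C$ the unique $ini_r$-agent is adjacent to the unique $ini_b$-agent and interacts with it, so stability forces $f(p')=red$ and $f(q')=blue$, whence $p'\in\{ini_r,r\}$ and $q'\in\{ini_b,b\}$; but if $p'=r$ or $q'=b$, the rule strictly decreases $\#ini$ when applied to those two agents, so by global fairness (as $C$ recurs) some configuration of $\Xi$ has $\#ini\le 1$, which by Lemma~\ref{lem:ini} then persists --- contradicting the standing assumption. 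Hence $p'=ini_r$ and $q'=ini_b$.

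Finally I would conclude with a reachability invariant. Among the rules whose two input states both lie in $\{ini_r,ini_b\}$, the only non-null ones are $(ini_r,ini_r)\rightarrow(ini_b,ini_b)$ and $(ini_b,ini_b)\rightarrow(ini_r,ini_r)$, since $(ini_r,ini_b)$ is null; each of these outputs only states in $\{ini_r,ini_b\}$, so a straightforward induction shows that, starting from the designated initial configuration in which every agent is $ini_r$, every reachable configuration has all agents in $\{ini_r,ini_b\}$. But $C$ is reachable from the initial configuration and contains an agent in state $r$ or $b$, a contradiction; therefore $\#ini\le 1$ holds after finite time. I expect the delicate point to be the middle step --- combining color-preservation (from stability of $C$) with the non-increase of $\#ini$ (Lemma~\ref{lem:ini}) and global fairness --- and, throughout, making sure the adjacencies invoked (two $ini_r$-agents, and an $ini_r$-agent next to an $ini_b$-agent) are genuinely available, which is precisely where completeness of $G$ together with $\#ini_r=\#ini_b=1$ is used.
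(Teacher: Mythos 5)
Your proof is correct and follows essentially the same route as the paper: the paper isolates your two middle steps as standalone Lemma~\ref{lem:sta} (a stable configuration over a complete graph has at most one agent with $ini_r$ and at most one with $ini_b$) and Lemma~\ref{lem:inirb} (the $(ini_r,ini_b)$ rule must output a state in $\{r,b\}$, proved via exactly your closing invariant that otherwise every reachable configuration from the all-$ini_r$ start keeps all agents in $\{ini_r,ini_b\}$, which is incompatible with stability for $n\ge 3$), and then concludes by letting the unique $ini_r$- and $ini_b$-agents interact in a recurrent stable configuration and invoking global fairness together with Lemma~\ref{lem:ini}. You have merely inlined those auxiliary lemmas into a single contradiction argument; every ingredient matches.
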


\begin{corollary}
\label{cor:Ini}
Consider a state set $Ini = \{ ini_r$, $ini_b \}$.
When $s_1\notin Ini$ or $s_2\notin Ini$ holds, if transition rule $(s_1,s_2)\rightarrow(s'_1,s'_2)$ exists then $f(s_1)=f(s'_1)$ and $f(s_2)=f(s'_2)$ hold.
\end{corollary}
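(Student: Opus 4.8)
The plan is to argue by contradiction, by \emph{forcing} an offending transition rule to fire inside the stable tail of a carefully chosen globally-fair execution, which contradicts stability.

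First I would fix a complete communication graph $G=K_n$ on an odd number of agents $n$ with $5\le n\le P/2$ (possible since $P\ge 12$) and take an arbitrary globally-fair execution $\Xi$ of $Alg$ on $G$. Since $Alg$ is correct, $\Xi$ reaches a stable configuration, and one easily checks that every configuration after it is stable as well, with the agents partitioned once and for all into a red set $H_r$ and a blue set $H_b$ that no agent ever leaves; as $n$ is odd we may assume $|H_r|=(n+1)/2$ and $|H_b|=(n-1)/2$. Thus in the tail each $H_r$-agent always has a state in $\{ini_r,r\}$ and each $H_b$-agent always has a state in $\{ini_b,b\}$. Now combine the three available facts: by Lemma~\ref{lem:ininum} eventually $\#ini\le 1$, by Lemma~\ref{lem:ini} $\#ini$ never increases, so in the tail $\#ini$ equals some constant $k\in\{0,1\}$; if $k=0$ then no agent could ever again hold an initial state, so every $H_r$-agent would be frozen in state $r$, contradicting Lemma~\ref{lem:nocha}. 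Hence $k=1$: in the whole tail there is exactly one agent in an initial state — a ``token'' that reads $ini_r$ while held by an $H_r$-agent and $ini_b$ while held by an $H_b$-agent. Since by Lemma~\ref{lem:nocha} every agent changes state infinitely often while keeping its color, every $H_r$-agent alternates between $r$ and $ini_r$ and every $H_b$-agent between $b$ and $ini_b$; hence the token visits every agent infinitely often. In particular, for $n=5$ the tail contains both a configuration with the token at an $H_r$-agent (state $ini_r$), two further agents in state $r$ and two in state $b$, and a configuration with the token at an $H_b$-agent (state $ini_b$), three agents in state $r$ and one in state $b$; between them these realize every pair of input states having at most one entry in $Ini$.

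Now suppose, for contradiction, that there is a transition rule $(s_1,s_2)\to(s'_1,s'_2)$ with $s_1\notin Ini$ or $s_2\notin Ini$ but $f(s_1)\ne f(s'_1)$ or $f(s_2)\ne f(s'_2)$; using the symmetry of $Alg$ we may assume the color change is in the first coordinate. Because not both of $s_1,s_2$ lie in $Ini$, at most one of them equals $ini_r$ or $ini_b$, so by the previous paragraph there is a configuration $C$ in the tail carrying two agents in states $s_1$ and $s_2$, necessarily adjacent as $G$ is complete. Letting these two agents interact from $C$ changes the color of an agent, contradicting the fact that $C$ is stable. Hence no such rule exists, which is exactly the statement of the corollary.

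The crux is the case in which one of $s_1,s_2$ is itself an initial state: there one must know that $ini_r$ and $ini_b$ genuinely occur in the stable tail of some globally-fair execution, and this is precisely what the interplay of Lemmas~\ref{lem:ini}, \ref{lem:ininum} and \ref{lem:nocha} delivers, by pinning $\#ini$ to $1$ in the tail and making the single remaining token wander across all agents. The rest — verifying that for $n=5$ enough agents in states $r$ and $b$ coexist with the token, and noticing that transition rules which would raise $\#ini$ are already excluded by Lemma~\ref{lem:ini} — is routine bookkeeping.
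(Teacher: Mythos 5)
Your proof is correct, but it takes a different route from the paper. The paper obtains this corollary in one line by combining two previously established lemmas: Lemma~\ref{lem:rb} (the rules on $(r,r)$, $(b,b)$, $(r,b)$ are null, proved on $K_7$) and Lemma~\ref{lem:inirb2} (rules pairing an $Ini$-state with $r$ or $b$ preserve colors, proved on $K_5$). You bypass both and give a single self-contained argument on $K_5$: Lemmas~\ref{lem:ini} and~\ref{lem:ininum} pin $\#ini$ to a constant $k\le 1$ in the stable tail, Lemma~\ref{lem:nocha} rules out $k=0$ and forces the unique remaining $ini$-``token'' to visit every agent infinitely often, and the two resulting configuration types jointly realize every ordered pair of states with at most one entry in $Ini$; firing an offending rule then contradicts stability. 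The underlying contradiction mechanism (exhibit the state pair in a stable configuration of a globally-fair execution on a complete graph, then let the pair interact) is exactly the one the paper uses inside the proofs of Lemmas~\ref{lem:rb} and~\ref{lem:inirb2}, so what your version buys is chiefly economy: one graph, one execution, and a smaller population ($n=5$ needs only $P\ge 10$, whereas the paper's Lemma~\ref{lem:rb} uses $n=7$ and explicitly needs $P\ge 14$). What it loses is the stronger conclusion of Lemma~\ref{lem:rb} that the $(r,r)$, $(b,b)$, $(r,b)$ rules are actually null, which the corollary does not need. Two minor imprecisions, neither fatal: the claim ``we may assume $|H_r|=(n+1)/2$'' is not a genuine WLOG since red and blue are not interchangeable (the designated initial state is red), but the case $|H_r|=2$, $|H_b|=3$ covers the same seven pairs with the roles of your two witness configurations swapped for $(r,r)$ versus $(b,b)$; and the appeal to symmetry to place the color change in the first coordinate is unnecessary, since firing the rule as written already changes some agent's color.
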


Consider a globally-fair execution $\Xi = C_0$, $C_1$, $C_2$, $\ldots$ of $Alg$ with a ring communication graph $G=(V, E)$ such that the number of agents is three, where $V = \{v_0$, $v_1$, $v_2\}$.
In a stable configuration of $\Xi$, either $\nblue{V} - \nred{V} =1$ or $\nred{V} - \nblue{V} =1$ holds.

First, consider the case of $\nblue{V} - \nred{V} =1$.

By Lemma \ref{lem:nocha}, $red$ agents keep exchanging $r$ for $ini_r$ in $\Xi$.
Moreover, by Lemma \ref{lem:ininum}, there exists a stable configuration in $\Xi$ such that $\#ini \le 1$ holds.
From these facts, there exists a stable configuration $C_{t}$ of $\Xi$ such that there exists exactly one agent that has $ini_r$. 
Without loss of generality, we assume that the agent is $v_0$.

Consider the communication graph $G'=(V',E')$ that includes four copies of $G$. 
The details of $G'$ are as follows:
\begin{itemize}
\item Let $V'= \{ v'_0$, $v'_1$, $v'_2$, $v'_3$, $\ldots$, $v'_{11} \}$. 
Moreover, we define a partition of $V'$ as $V'_1 = \{ v'_0$, $v'_1$, $v'_2\}$, $V'_2 = \{ v'_{3}$, $v'_4$, $v'_5\}$, $V'_3 = \{ v'_6$, $v'_7$, $v'_8 \}$, and $V'_4 = \{ v'_9$, $v'_{10}$, $v'_{11} \}$.
Additionally, let $V'_{red} = \{v'_0$, $v'_{3}$, $v'_{6}$, $v'_{9} \}$ be a set of agents that will have state $ini_r$.
\item $E'=\{(v'_x,v'_y)$, $(v'_{x+3},v'_{y+3})$, $(v'_{x+6},v'_{y+6})$, $(v'_{x+9},v'_{y+9}) \in V' \times V' \mid (v_x, v_y) \in E \} \cup \{(v'_{x}, v'_{y}) \in V' \times V' \mid x$, $y \in \{0$, $3$, $6$, $9 \} \}$.
\end{itemize}

An image of $G$ and $G'$ is shown in Figure \ref{fig:graph2}. 

\begin{figure}[t!]
\begin{center}
\includegraphics[scale=0.35]{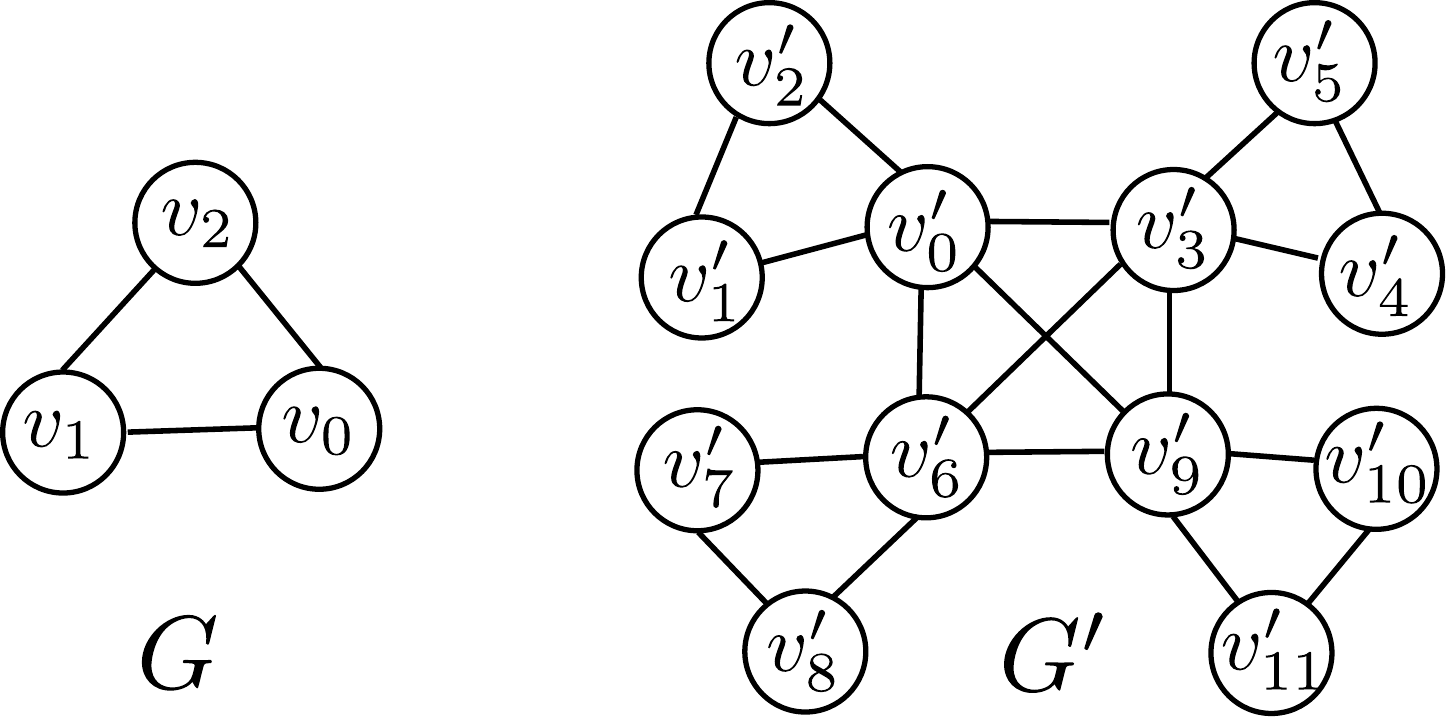}
\caption{An image of graphs $G$ and $G'$}
\label{fig:graph2}
\end{center}
\end{figure}

Consider the following execution $\Xi' = C'_0$, $C'_1$, $C'_2$, $\ldots$ of $Alg$ with $G'=(V',E')$.
\begin{itemize}
\item For $i \le t$, when $v_x$ interacts with $v_y$ at $C_i \rightarrow C_{i+1}$, $v'_x$ interacts with $v'_y$ at $C'_{4i} \rightarrow C'_{4i+1}$, $v'_{x+3}$ interacts with $v'_{y+3}$ at $C'_{4i+1} \rightarrow C'_{4i+2}$, $v'_{x+6}$ interacts with $v'_{y+6}$ at $C'_{4i+2} \rightarrow C'_{4i+3}$, and $v'_{x+9}$ interacts with $v'_{y+9}$ at $C'_{4i+3} \rightarrow C'_{4i+4}$.
\item After $C'_{4t}$, make interactions between agents in $V'_{red}$ until agents in $V'_{red}$ converge and $\nini{V'_{red}} \le 1$ holds.
We call the configuration $C'_{t'}$.
\item After $C'_{t'}$, make interactions so that $\Xi'$ satisfies global fairness.
\end{itemize}

Until $C'_{4t}$, agents in $V'_1$, $V'_2$, $V'_3$, and $V'_4$ behave similarly to agents in $V$ from $C_0$ to $C_{t}$. 
This implies that, in $C'_{4t}$, every agent in $V'_{red}$ has state $ini_r$.
From Lemma \ref{lem:ininum}, since $ini_r$ is the initial state of agents, it is possible to make interactions between agents in $V'_{red}$ until agents in $V'_{red}$ converge and $\nini{V'_{red}} \le 1$ holds.
Moreover, since $v_0$ is the only agent that has $ini_r$ in $C_{t}$, no agent in $V'_i \backslash V'_{red} (1 \le i \le 4)$ has state $ini_r$ or $ini_b$ in $C'_{4t}$.
Hence, $\#ini \le 1$ holds in $C'_{t'}$.
By Corollary \ref{cor:Ini}, if $\#ini \ge 2$ does not hold, no agent can change its color.
Thus, since $\#ini \le 1$ holds after $C'_{t'}$ by Lemma \ref{lem:ini}, no agent can change its color after $C'_{t'}$.
Since $v_1$ and $v_2$ are $blue$ in $C_t$, $v'_1$, $v'_2$, $v'_4$, $v'_5$, $v'_7$, $v'_8$, $v'_{10}$, and $v'_{11}$ are $blue$ in $C'_{t'}$.
In addition, $\nblue{V'_{red}} = \nred{V'_{red}}$ holds.
Hence, $\nblue{V'} - \nred{V'} = 8$ holds.
Since no agent can change its color after $C'_{t'}$ and $\Xi'$ is globally fair, this is a contradiction.

Next, consider the case of $\nred{V} - \nblue{V} =1$.
In this case, we can prove in the same way as the case of $\nblue{V} - \nred{V} =1$.
However, in the case, we focus on $ini_b$ instead of $ini_r$.
That is, we assume that agents in $V'_{red}$ (i.e., $v'_0$, $v'_3$, $v'_6$, and $v'_9$) have $ini_b$ in $C'_{4t}$.
From $C'_{4t}$, we make $v'_0$ (resp., $v'_6$) interact with $v'_3$ (resp., $v'_9$) once.
Then, by Lemma \ref{lem:sym2}, all of them transition to $ini_r$. 
After that, since all agents in $V'_{red}$ have $ini_r$, we can construct an execution such that only agents in $V'_{red}$ interact and eventually $\nini{V'_{red}} \le 1$ holds.
As a result, we can lead to contradiction in the same way as the case of $\nblue{V} - \nred{V} =1$.

\subsection*{Proofs of Lemmas \ref{lem:ini} and \ref{lem:ininum}, and Corollary \ref{cor:Ini}}

From now on, we prove Lemmas \ref{lem:ini} and \ref{lem:ininum}, and Corollary \ref{cor:Ini}.
First, from Lemmas \ref{lem:nocha} and \ref{lem:sym2}, we prove the following lemmas.

\begin{lemma}
\label{lem:sta}
Consider a globally-fair execution $\Xi$ of $Alg$ with some complete communication graph $G=(V, E)$.
In any stable configuration of $\Xi$, there is at most one agent with $ini_r$ and at most one agent with $ini_b$.
\end{lemma}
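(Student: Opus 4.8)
The plan is to derive this directly from the two symmetric rules $(ini_r,ini_r)\to(ini_b,ini_b)$ and $(ini_b,ini_b)\to(ini_r,ini_r)$ established via Lemma~\ref{lem:sym2}, using only that in a \emph{complete} graph any two homonyms are adjacent and can interact in one step. First I would fix a stable configuration $C$ of $\Xi$ with witnessing partition $\{H_r,H_b\}$, and observe that since $C\xrightarrow{*}C$ holds trivially, the second condition in the definition of a stable configuration forces every $red$ agent of $C$ to lie in $H_r$ and every $blue$ agent of $C$ to lie in $H_b$; so in $C$ each agent's color equals its side of the partition, and that color is frozen in all configurations reachable from $C$.

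Then I would argue by contradiction. Suppose two distinct agents $u,v$ both have state $ini_r$ in $C$; since $f(ini_r)=red$, they both lie in $H_r$. As $G$ is complete, $(u,v)\in E$, so letting $u$ interact with $v$ gives $C\to C'$, hence $C\xrightarrow{*}C'$; applying $(ini_r,ini_r)\to(ini_b,ini_b)$ makes both $u$ and $v$ have state $ini_b$, i.e. $blue$, in $C'$ — contradicting that $u,v\in H_r$ must stay $red$ in every configuration reachable from $C$. The case of $ini_b$ is entirely symmetric: two distinct agents with state $ini_b$ in $C$ are $blue$, hence in $H_b$, and letting them interact via $(ini_b,ini_b)\to(ini_r,ini_r)$ yields a reachable configuration in which they are $red$, again contradicting stability. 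Hence $C$ contains at most one agent with $ini_r$ and at most one agent with $ini_b$.

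I do not expect a genuine obstacle here; the only two points that need care are recording that in a stable configuration the partition classes are exactly the color classes (so flipping \emph{any} agent's color in \emph{some} reachable configuration already contradicts stability), and using completeness of $G$ so that the single homonym interaction is available immediately — on an arbitrary graph two $ini_r$ agents need not be adjacent, which is precisely why this lemma is confined to complete graphs and is later combined with a graph-doubling construction. Note that global fairness of $\Xi$ is not actually used in this particular proof: it is carried along from the surrounding setting and will be needed only for the companion statements, Lemmas~\ref{lem:ini} and~\ref{lem:ininum}.
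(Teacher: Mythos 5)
Your proof is correct and follows essentially the same route as the paper's: assume two homonyms in $\{ini_r,ini_b\}$ exist in a stable configuration, use completeness to let them interact, and invoke the rules from Lemma~\ref{lem:sym2} to flip their color, contradicting stability. Your added care about identifying the partition classes with the color classes, and the remark that global fairness is not needed here, are both accurate refinements of the paper's (more terse) argument.
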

\begin{proof}
For the purpose of contradiction, assume that there exists a stable configuration $C_t$ of $\Xi$ such that there are more than one agent with $ini_r$ (or more than one agent with $ini_b$).
Since $G=(V, E)$ is a complete communication graph, two agents with $ini_r$ (or two agents with $ini_b$) can interact.
By Lemma \ref{lem:sym2}, $ini_r$(resp., $ini_b$) transitions to $ini_b$(resp., $ini_r$).
Since $f(ini_r) \neq f(ini_b)$ holds and $C_t$ is stable, this is a contradiction.
\end{proof}

\begin{lemma}
\label{lem:rb}
There exist transition rules $(r,r) \rightarrow (r,r)$, $(b,b) \rightarrow (b,b)$, and $(r,b) \rightarrow (r,b)$.
\end{lemma}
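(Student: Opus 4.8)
The plan is to pin down each of the three transitions by seeing how it would be forced to act \emph{inside} a stable configuration. Observe first that, since $Alg$ is symmetric, the transition on input $(r,r)$ has the form $(r,r)\to(s,s)$ and the one on $(b,b)$ the form $(b,b)\to(s',s')$. Moreover, if any of these transitions were ever taken from a stable configuration $C$, the resulting configuration would be reachable from $C$, so by the definition of stability no agent's colour ever changes; hence each such transition is colour-preserving, and therefore $s\in\{ini_r,r\}$, $s'\in\{ini_b,b\}$, and for $(r,b)\to(p',q')$ we have $p'\in\{ini_r,r\}$, $q'\in\{ini_b,b\}$. It thus remains to rule out $s=ini_r$, $s'=ini_b$, $p'=ini_r$, and $q'=ini_b$.

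For $(r,r)$ and $(b,b)$ I would argue on a complete communication graph $G$ with six agents (admissible under our assumption $P\ge 12$). A globally-fair execution on $G$ reaches a stable configuration $C$, in which the partition must satisfy $|H_r|=|H_b|=3$; since by Lemma~\ref{lem:sta} at most one agent of $C$ is in state $ini_r$ and at most one in state $ini_b$, the configuration $C$ contains at least two agents in state $r$ and at least two in state $b$, all mutually adjacent. Letting two agents in state $r$ interact: if $(r,r)\to(s,s)$ with $s=ini_r$, then from the stable configuration $C$ we reach a configuration with at least two agents in state $ini_r$, which is itself stable, contradicting Lemma~\ref{lem:sta}; hence $s=r$, i.e. $(r,r)\to(r,r)$, and the same argument applied to two agents in state $b$ gives $(b,b)\to(b,b)$.

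For $(r,b)$ I would use a complete communication graph $G$ with five agents together with Lemma~\ref{lem:nocha}: along a globally-fair execution on $G$ every agent changes state infinitely often, and since a red agent can only be in state $ini_r$ or $r$, state $ini_r$ must occur in some stable configuration; likewise $ini_b$ does. Fix a stable configuration with exactly one agent in state $ini_r$ (it exists, and the count is one by Lemma~\ref{lem:sta}); since $\left||H_r|-|H_b|\right|\le 1$ forces $|H_r|,|H_b|\in\{2,3\}$, this configuration also contains an agent in state $r$ and an agent in state $b$, both distinct from the $ini_r$-agent and adjacent to each other. Letting them interact, $(r,b)\to(p',q')$ with $p'=ini_r$ would yield a stable configuration with two agents in state $ini_r$, contradicting Lemma~\ref{lem:sta}; hence $p'=r$. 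Symmetrically, starting from a stable configuration with exactly one agent in state $ini_b$ forces $q'=b$. Therefore $(r,b)\to(r,b)$.

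The step I expect to be the real obstacle is guaranteeing that the required states genuinely coexist in a single configuration reachable from a stable one, given the tight budget on the number of agents: this is exactly why one is driven to take six agents for the $(r,r)$ and $(b,b)$ claims (a stable configuration is then forced to split $3$–$3$, so after deleting the at-most-one $ini$-agent of each colour two plainly-coloured agents of each colour remain), and why Lemma~\ref{lem:nocha} is indispensable for $(r,b)$ (to know that $ini_r$ and $ini_b$ actually appear in stable configurations rather than being consumed only during convergence). Everything after the right configuration is fixed is routine bookkeeping with $f$ and Lemma~\ref{lem:sta}.
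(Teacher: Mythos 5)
Your proof is correct and follows essentially the same route as the paper's: locate a stable configuration of a globally-fair execution over a complete graph that contains adjacent agents in the relevant states, use stability under reachability to force the transitions to preserve colour, and invoke Lemma~\ref{lem:sta} to exclude $ini_r$ and $ini_b$ as outputs (with Lemma~\ref{lem:nocha} needed only so that an $ini$-agent coexists with $r$- and $b$-agents for the $(r,b)$ case). The only difference is parametric: the paper runs the whole argument once on a complete graph with $n=7$ and must therefore remark that $P\ge 14$ is required to apply Lemma~\ref{lem:nocha}, whereas your split into $n=6$ (no use of Lemma~\ref{lem:nocha}) and $n=5$ stays within the theorem's stated hypothesis $P\ge 12$.
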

\begin{proof}
Consider a globally-fair execution $\Xi$ of $Alg$ with a complete communication graph $G=(V, E)$ such that the number of agents $n$ is seven.
Consider a stable configuration $C_{t_1}$ (resp., $C_{t_2}$) of $\Xi$ such that there exists exactly one agent with $ini_r$ (resp., $ini_b$) and $C_{t_1}$ (resp., $C_{t_2}$) occurs infinitely often in $\Xi$.
From Lemmas \ref{lem:sta} and \ref{lem:nocha}, $C_{t_1}$ and $C_{t_2}$ exist.
Note that, since the number of agents must be $P/2$ or less to use Lemma \ref{lem:nocha}, $P$ must be fourteen or more.
Since the number of agents $n$ is seven, there exist at least two agents with $r$ and at least two agents with $b$ in $C_{t_1}$ and $C_{t_2}$.
Hence, since $G$ is a complete graph and $C_{t_1}$ and $C_{t_2}$ are stable, if $(r,r) \rightarrow (r',r')$, $(b,b) \rightarrow (b',b')$, and $(r,b) \rightarrow (r'',b'')$ exist, $f(r) = f(r') = f(r'')$ and $f(b) = f(b')= f(b'')$ hold.
In addition, by Lemma \ref{lem:sta}, $r'$ and $r''$ are not $ini_r$, and, $b'$ and $b''$ are not $ini_b$.
Therefore, the lemma holds.
\end{proof}

From these lemmas, we can prove Lemma \ref{lem:ini}.

\setcounter{theorem}{21}

\begin{lemma}
There does not exist a transition rule such that $\#ini$ increases after the transition.
\end{lemma}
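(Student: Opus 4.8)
The plan is to fix an arbitrary transition rule $(p,q)\to(p',q')$ of $Alg$ and show that $\{p',q'\}$ contains no more states of $\{ini_r,ini_b\}$ than $\{p,q\}$ does; since a single interaction changes $\#ini$ by exactly that difference, this gives the lemma. I split on $k$, the number of the two input states $p,q$ lying in $\{ini_r,ini_b\}$. If $k=2$ there is nothing to prove, as at most two agents are involved. If $k=0$, then $p,q\in\{r,b\}$, so $(p,q)$ is one of $(r,r),(r,b),(b,r),(b,b)$; by Lemma~\ref{lem:rb} and determinism (using symmetry of $Alg$ for $(b,r)$) each of these transitions is null, hence $p',q'\in\{r,b\}$ and $\#ini$ is unchanged. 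So everything reduces to the case $k=1$, i.e.\ ruling out transitions of the form $(ini_r,x)\to(p',q')$ and $(ini_b,x)\to(p',q')$ with $x\in\{r,b\}$ and $\{p',q'\}\subseteq\{ini_r,ini_b\}$ (the cases where the $ini$-state sits in the responder position are identical by symmetry of $Alg$).

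Two facts drive the $k=1$ case. First, if $C$ is a stable configuration occurring in a globally-fair execution on a complete graph and $C\xrightarrow{*}C'$, then $C'$ is again stable with the same partition $\{H_r,H_b\}$, and $C'$ itself occurs in some globally-fair execution on that complete graph (extend a prefix reaching $C'$ fairly); hence Lemma~\ref{lem:sta} applies to $C'$, so $C'$ has at most one agent in state $ini_r$ and at most one in state $ini_b$. Second, by Lemma~\ref{lem:nocha} on a complete graph with a small odd number $n\le P/2$ of agents (e.g.\ $n=5$, which is admissible since $P\ge 12$), after stabilization each red agent toggles between $r$ and $ini_r$ infinitely often and each blue agent toggles between $b$ and $ini_b$ infinitely often; combined with Lemma~\ref{lem:sta}, this produces a stable configuration $C_t$ of that execution containing an agent $u$ in state $ini_r$ and, because $n$ is not too small, at least one $r$-agent, and either an $ini_b$-agent or at least two $b$-agents; symmetrically one obtains a stable configuration containing an agent in state $ini_b$ together with at least one $b$-agent and either an $ini_r$-agent or at least two $r$-agents.

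Now suppose for contradiction that $(ini_r,x)\to(p',q')$ with $p',q'\in\{ini_r,ini_b\}$. Apply this rule in $C_t$ to $u$ and a neighbour in state $x$ (which exists by the counts above); the resulting configuration is stable, so $u$ stays red and the neighbour keeps colour $f(x)$. Since $f(ini_b)=blue$, this forces $p'=ini_r$, and forces $q'=ini_r$ when $x=r$ and $q'=ini_b$ when $x=b$. If $x=r$, the new configuration is a stable configuration with two agents in state $ini_r$, contradicting Lemma~\ref{lem:sta}. If $x=b$, the rule is $(ini_r,b)\to(ini_r,ini_b)$; if $C_t$ already contained an $ini_b$-agent we are done at once, and otherwise we fire the rule once more, on $u$ (still in state $ini_r$) and a second $b$-agent, reaching a stable configuration with two agents in state $ini_b$ — again contradicting Lemma~\ref{lem:sta}. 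The transitions $(ini_b,x)\to(p',q')$ are excluded symmetrically, swapping the roles of red/blue and of $ini_r/ini_b$. I expect the main obstacle to be precisely this $k=1$ bookkeeping: pinning down $p'$ and $q'$ from the colours that stability must preserve, choosing the stable configuration $C_t$ with the right agent counts, and noticing that producing a second copy of the \emph{opposite-colour} $ini$-state requires firing the offending transition twice before Lemma~\ref{lem:sta} can be invoked.
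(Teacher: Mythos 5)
Your proposal is correct and follows essentially the same route as the paper: after disposing of the trivial cases via Lemma~\ref{lem:rb}, it reduces to rules of the form $(ini,x)\to(ini_1,ini_2)$ with $x\in\{r,b\}$, takes a stable configuration of a globally-fair execution on a small complete graph containing exactly one $ini$-agent (via Lemmas~\ref{lem:nocha} and~\ref{lem:sta}), and fires the offending rule once (for $x$ of the same colour) or twice (for $x$ of the opposite colour) to produce two agents in the same $ini$-state, contradicting Lemma~\ref{lem:sta}. The only differences are presentational: you make explicit the $k=0$/$k=2$ bookkeeping and the reason Lemma~\ref{lem:sta} applies to configurations reached from a stable one, which the paper leaves implicit.
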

\begin{proof}
For the purpose of contradiction, assume that such a transition rule exists.
By Lemma \ref{lem:rb}, transition rules $(r, r) \rightarrow (r, r)$, $(b,b) \rightarrow (b,b)$, and $(r,b) \rightarrow (r,b)$ exist and these transition rules do not increase $\#ini$.
Hence, there exists $(rb, ini) \rightarrow (ini_1, ini_2)$ such that $rb \in \{r, b\}$, $ini \in \{ini_r, ini_b\}$, $ini_1 \in \{ini_r, ini_b\}$, and $ini_2 \in \{ini_r, ini_b\}$ hold. 

Consider a globally-fair execution $\Xi$ of $Alg$ with a complete graph $G=(V, E)$ such that the number of agents $n$ is five.
Moreover, consider a stable configuration $C_{t_1}$ (resp., $C_{t_2}$) such that there exists exactly one agent with $ini_r$ (resp., $ini_b$) and $C_{t_1}$ (resp., $C_{t_2}$) occurs infinitely often in $\Xi$.
From Lemmas \ref{lem:sta} and \ref{lem:nocha}, $C_{t_1}$ and $C_{t_2}$ exist.
Since the number of agents $n$ is five, there exists at least one agent with $r$ and at least one agent with $b$ in $C_{t_1}$ and $C_{t_2}$.
This implies that, an agent with $ini_r$ (resp., $ini_b$) can interact with an agent with $r$ in $C_{t_1}$ (resp., $C_{t_2}$).
Similarly, an agent with $ini_r$ (resp., $ini_b$) can interact with an agent with $b$ in $C_{t_1}$ (resp., $C_{t_2}$).
Moreover, since $G$ is a complete graph and $\Xi$ is globally fair, those interactions happen infinitely often.

First, we consider the case that $ini = ini_r$ and $rb = r$ hold.
Consider an interaction between an agent with $ini_r$ and an agent with $r$ in $C_{t_1}$.
Since any agent cannot change its color in a stable configuration, both agents transition to $ini_r$ by the interaction.
However, by Lemma \ref{lem:sta}, two agents cannot have $ini_r$ in any stable configuration. This is a contradiction.
Thus, $ini = ini_r$ and $rb = r$ do not hold.
In a similar way, $ini = ini_b$ and $rb = b$ do not hold. 

Next, we consider the case that $ini = ini_r$ and $rb = b$ hold.
Consider an interaction between an agent with $ini_r$ and an agent with $b$ in $C_{t_1}$.
Let $C_{t'_1}$ be configuration that can be obtained from $C_{t_1}$ by the interaction.
Since any agent cannot change its color after a stable configuration, one $ini_r$ and one $ini_b$ occur by the interaction.
Moreover, by Lemma \ref{lem:sta}, there exist exactly one $ini_r$ and exactly one $ini_b$ in $C_{t'_1}$. 
This implies that, since $C_{t'_1}$ is stable and $n$ is five, there exists at least one agent with $b$ and thus an agent with $ini_r$ can interact with an agent with $b$ in $C_{t'_1}$.
Hence, we can obtain $C_{t''_1}$ from $C_{t'_1}$ by making interaction between an agent with $ini_r$ and an agent with $b$.
Clearly, in $C_{t''_1}$, there exist two agents with $ini_b$. This contradicts Lemma \ref{lem:sta} and thus $ini = ini_r$ and $rb = b$ do not hold.
In a similar way, we can prove that $ini = ini_b$ and $rb = r$ do not hold.
Hence, for any $rb \in \{ r,b \}$ and $ini \in \{ ini_r,ini_b \}$, transition rule $(rb,ini) \rightarrow (ini_1,ini_2)$ with $ini_1$, $ini_2 \in \{ ini_r,ini_b \}$ does not exist. Consequently the lemma holds.
\end{proof}
\setcounter{theorem}{26}

By the existence of $(ini_r,ini_r) \rightarrow (ini_b,ini_b)$ and $(ini_b,ini_b) \rightarrow (ini_r,ini_r)$, we can prove the following lemma.
\begin{lemma}
\label{lem:inirb}
There exists a transition rule $(ini_r,ini_b) \rightarrow (x,y)$ such that $x \in \{ r, b \}$ or $y \in \{ r,b \}$ holds.
\end{lemma}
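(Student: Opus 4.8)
The plan is a short argument by contradiction that leans entirely on Lemma~\ref{lem:sta}. Since $Alg$ is deterministic there is a unique transition rule $(ini_r,ini_b)\rightarrow(x,y)$; suppose, for contradiction, that $x,y\in Ini=\{ini_r,ini_b\}$. By symmetry of $Alg$ the rule $(ini_b,ini_r)\rightarrow(y,x)$ then also outputs a pair in $Ini\times Ini$. Together with the rules $(ini_r,ini_r)\rightarrow(ini_b,ini_b)$ and $(ini_b,ini_b)\rightarrow(ini_r,ini_r)$ that were just established, this gives the key closure property: \emph{every} transition both of whose input states lie in $Ini$ has both output states in $Ini$.

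I would then observe that the set of configurations in which every agent is in an $Ini$-state is closed under $\rightarrow$: in such a configuration any interaction is between two agents whose states lie in $Ini$, and by the closure property the two resulting states again lie in $Ini$. Now fix any $n$ with $3\le n\le P$ and any globally-fair execution $\Xi=C_0,C_1,\ldots$ of $Alg$ on the complete graph with $n$ agents. Since the designated initial state is $ini_r\in Ini$, a trivial induction on $i$ shows that every agent is in an $Ini$-state in every $C_i$, hence $\#ini=n\ge 3$ throughout $\Xi$.

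Finally, because $Alg$ solves the uniform bipartition problem, $\Xi$ must contain a stable configuration $C_t$; by Lemma~\ref{lem:sta} such a $C_t$ has at most one $ini_r$-agent and at most one $ini_b$-agent, so $\#ini\le 2$ in $C_t$, contradicting $\#ini=n\ge 3$. Hence the supposition fails and the rule $(ini_r,ini_b)\rightarrow(x,y)$ has $x\in\{r,b\}$ or $y\in\{r,b\}$. There is no genuine obstacle here; the only point requiring a moment's care is the use of symmetry to see that all four rules among states of $Ini$ — not merely the three explicitly written down — respect the closure property, which is precisely where the contradiction hypothesis is used.
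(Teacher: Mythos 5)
Your proof is correct and follows essentially the same route as the paper: assume the unique rule $(ini_r,ini_b)\rightarrow(x,y)$ keeps both outputs in $Ini$, observe (using Lemma~\ref{lem:sym2} and symmetry) that configurations with all agents in $Ini$-states are then closed under transitions, so $\#ini=n\ge 3$ forever on a small complete graph, contradicting Lemma~\ref{lem:sta}. The paper's version uses $n=3$ and states the closure step more tersely; your explicit note that symmetry handles the $(ini_b,ini_r)$ rule is a detail the paper leaves implicit.
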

\begin{proof}
For the purpose of contradiction, assume that, if there is $(ini_r,ini_b) \rightarrow (x,y)$, $x \in \{ ini_r, ini_b \}$ and $y \in \{ ini_r, ini_b \}$ hold.

Consider a globally-fair execution $\Xi$ of $Alg$ with a complete communication graph $G=(V, E)$ such that the number of agents is three.
By Lemma \ref{lem:sta}, there exists at most one agent with $ini_r$ and at most one agent with $ini_b$ in a stable configuration of $\Xi$.
However, by the assumption and the existence of $(ini_r,ini_r) \rightarrow (ini_b,ini_b)$ and $(ini_b,ini_b) \rightarrow (ini_r,ini_r)$, all agents have $ini_r$ or $ini_b$ permanently in $\Xi$.
This is a contradiction.
\end{proof}

By Lemmas  \ref{lem:ini}, \ref{lem:sta}, and \ref{lem:inirb}, we show the proof of Lemma \ref{lem:ininum}.

\setcounter{theorem}{22}

\begin{lemma}
Consider a globally-fair execution $\Xi$ of $Alg$ with some complete communication graph $G$.
After some configuration in $\Xi$, $\#ini \le 1$ holds.
\end{lemma}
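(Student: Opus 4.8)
The plan is to combine the "no increase" property (Lemma~\ref{lem:ini}) with a reachability argument powered by global fairness, exactly in the spirit of the proofs of Lemmas~\ref{lem:tok} and~\ref{lem:tok-2}. First I would note that, by Lemma~\ref{lem:ini}, no transition rule increases $\#ini$, so $\#ini$ is non-increasing along any execution; hence it suffices to show that whenever $\#ini \ge 2$ holds in a configuration that recurs infinitely often, $\#ini$ can be made to strictly decrease via a reachable configuration. Assume for contradiction that there is a globally-fair execution $\Xi$ on a complete graph $G$ in which $\#ini \ge 2$ holds forever; pick a configuration $C$ that occurs infinitely often and satisfies $\#ini = x \ge 2$ (such $C$ exists since the state space is finite and $\#ini$ is eventually constant).

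Next I would argue that from $C$ one can reach a configuration in which two agents carrying an $ini$-state interact in a way that strictly decreases $\#ini$. Here the key sub-case analysis is on the colors of the $ini$-carrying agents. If two agents both have $ini_r$ (or both have $ini_b$), then since $G$ is complete they are adjacent, and by Lemma~\ref{lem:sym2} the transition $(ini_r,ini_r)\to(ini_b,ini_b)$ (resp.\ $(ini_b,ini_b)\to(ini_r,ini_r)$) is available; but by Lemma~\ref{lem:sta} no stable (hence no recurrent) configuration can contain two $ini_r$ agents or two $ini_b$ agents, so this case cannot persist — more precisely, reaching such a configuration and applying the rule changes a color, contradicting stability of $C$. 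The remaining case is that the $\ge 2$ many $ini$-agents split into at least one $ini_r$ and at least one $ini_b$; then those two are adjacent (complete graph), and by Lemma~\ref{lem:inirb} the transition $(ini_r,ini_b)\to(x,y)$ with $x\in\{r,b\}$ or $y\in\{r,b\}$ is available, so interacting them strictly decreases $\#ini$. In either sub-case, global fairness (applied as in the discussion after the definition of global fairness: every configuration reachable from a recurrent configuration is itself recurrent) yields that a configuration with $\#ini \le x-1$ occurs infinitely often, contradicting the assumption that $\#ini = x$ holds forever.

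The main obstacle I anticipate is the bookkeeping needed to guarantee that, starting from $C$, we can \emph{actually reach} a configuration where two specific $ini$-carrying agents are poised to interact — but on a complete graph this is immediate since any two agents are adjacent, so no routing of tokens is needed (unlike Lemmas~\ref{lem:tok} and~\ref{lem:tok-2}, which needed swap rules to move tokens around). A secondary subtlety is making sure the chosen recurrent configuration $C$ is also stable so that Lemma~\ref{lem:sta} applies: since $\#ini$ is eventually constant and the number of states is finite, after the point where $\#ini$ stabilizes every recurrent configuration is reachable-from and reaches every other recurrent configuration, and these are exactly the stable ones for the purposes of this argument; I would state this carefully, invoking Lemma~\ref{lem:nocha} only where colors must keep changing. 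Putting these pieces together gives $\#ini \le 1$ after finite time, completing the proof.
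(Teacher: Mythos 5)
Your proposal is correct and takes essentially the same route as the paper: the paper likewise combines Lemma~\ref{lem:ini} (no rule increases $\#ini$) with Lemma~\ref{lem:sta} to get a recurrent stable configuration with $\#ini\le 2$ and at most one $ini_r$ and one $ini_b$, and then uses Lemma~\ref{lem:inirb} plus global fairness on the complete graph to force the $ini_r$--$ini_b$ interaction that drops $\#ini$ to at most one permanently. Your extra sub-case of two homonymous $ini$ agents is already subsumed by Lemma~\ref{lem:sta}, so it adds nothing that the paper's argument is missing.
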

\begin{proof}
Consider $C_t$ such that $\#ini \le 2$ holds in $C_t$ and $C_t$ occurs infinitely often in $\Xi$.
By Lemma \ref{lem:sta}, such a configuration exists.
Moreover, by Lemma \ref{lem:ini}, $\#ini \le 2$ holds even after $C_t$.
First, consider the case that $\#ini \le 1$ holds in $C_{t}$.
By Lemma \ref{lem:ini}, $\#ini \le 1$ holds after $C_{t}$ and thus the lemma holds immediately in this case.
Next, consider the case that $\#ini = 2$ holds in $C_{t}$.
By Lemma \ref{lem:sta}, in $C_{t}$, there exists an agent $v_1$ (resp., $v_2$) with $ini_r$ (resp., $ini_b$).
By Lemma \ref{lem:inirb}, when $v_1$ interacts with $v_2$ at $C_{t} \rightarrow C_{t+1}$, $\#ini \le 1$ holds in $C_{t+1}$.
By global fairness, $C_{t+1}$ occurs infinitely often in $\Xi$.
By Lemma \ref{lem:ini}, $\#ini \le 1$ holds after $C_{t+1}$ and thus the lemma holds.
\end{proof}

\setcounter{theorem}{27}

From Lemmas \ref{lem:ininum} and \ref{lem:nocha}, we can obtain the following lemma.
\begin{lemma}
\label{lem:inirb2}
Let $ini$ and $rb$ be states such that $ini \in \{ ini_r, ini_b\}$ and $rb \in \{ r, b \}$ hold.
If $(ini, rb) \rightarrow (x,y)$ exists, $f(ini) = f(x)$ and $f(rb) = f(y)$ hold.
\end{lemma}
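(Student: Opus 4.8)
The plan is to prove the statement directly from the definition of a stable configuration, using Lemmas~\ref{lem:nocha} and~\ref{lem:ininum} to manufacture a convenient stable configuration. Let $(ini,rb)\to(x,y)$ be an arbitrary transition of the hypothetical protocol $Alg$ with $ini\in\{ini_r,ini_b\}$ and $rb\in\{r,b\}$. The idea is to exhibit a stable configuration in which the state $ini$ is held by exactly one agent while both $r$ and $b$ are present, let that agent interact as initiator with an agent in state $rb$, and observe that stability forbids either agent from ever changing its color; hence the resulting states satisfy $f(x)=f(ini)$ and $f(y)=f(rb)$.

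First I would fix a complete communication graph $G$ on an odd number of agents $n$ with $2n\le P$; since Theorem~\ref{thm:no4sym} assumes $P\ge 12$, the choice $n=5$ is admissible. Take a globally-fair execution $\Xi$ of $Alg$ over $G$. As $Alg$ solves uniform bipartition, $\Xi$ reaches a stable configuration, and every configuration reachable from it is again stable; moreover, in any stable configuration, $|\nred{V}-\nblue{V}|\le 1$ together with $n=5$ forces at least two $red$ agents and at least two $blue$ agents. By Lemma~\ref{lem:ininum}, from some point on every configuration of $\Xi$ satisfies $\#ini\le 1$. By Lemma~\ref{lem:nocha}, every $red$ agent changes state infinitely often; since the $red$ states are exactly $ini_r$ and $r$, such an agent is in state $ini_r$ infinitely often. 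Hence, using $\#ini\le 1$ and the finiteness of the configuration space, there is a stable configuration $C_{t_1}$ reachable in $\Xi$ after the above point with exactly one agent in $ini_r$, no agent in $ini_b$, and therefore at least one agent in $r$ and at least one in $b$. Symmetrically, there is a stable configuration $C_{t_2}$ with exactly one agent in $ini_b$ and at least one agent in each of $r$ and $b$.

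Finally I would split into the four cases for $(ini,rb)$. If $ini=ini_r$, work in $C_{t_1}$: since $G$ is complete, the $ini_r$-agent can interact as initiator with an $r$-agent (resp., a $b$-agent), yielding a configuration reachable from the stable $C_{t_1}$; by the defining property of a stable configuration, neither agent changes its color, so the transition $(ini_r,r)\to(x,y)$ gives $f(x)=red=f(ini_r)$ and $f(y)=red=f(r)$, while $(ini_r,b)\to(x,y)$ gives $f(x)=red=f(ini_r)$ and $f(y)=blue=f(b)$. If $ini=ini_b$, the same argument carried out in $C_{t_2}$ handles $(ini_b,r)$ and $(ini_b,b)$. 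In every case $f(ini)=f(x)$ and $f(rb)=f(y)$, as claimed.

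The main obstacle is the construction in the second paragraph: one must pick $n$ odd (so that $ini_r$ and $ini_b$ keep recurring, via Lemma~\ref{lem:nocha}), small enough to invoke Lemma~\ref{lem:nocha} ($n\le P/2$), yet large enough that every stable configuration holds two agents of each color, so that after isolating a single $ini_r$ (resp., $ini_b$) there is still an $r$-agent and a $b$-agent available for the interaction. Once such a configuration is secured, the color-preservation conclusion is an immediate consequence of the definition of stability.
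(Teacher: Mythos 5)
Your proposal is correct and follows essentially the same route as the paper: a complete graph on five agents, Lemmas~\ref{lem:nocha} and~\ref{lem:ininum} to produce stable configurations containing exactly one agent in $ini_r$ (resp., $ini_b$) together with agents in $r$ and $b$, and the observation that stability forbids any color change when that agent interacts with an $rb$-agent. The only cosmetic difference is that you argue directly from stability while the paper phrases the same argument as a proof by contradiction; your construction of $C_{t_1}$ and $C_{t_2}$ is, if anything, spelled out in slightly more detail than the paper's.
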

\begin{proof}
Assume, for the purpose of contradiction, that there exists $(ini, rb) \rightarrow (x,y)$ such that $f(ini) \neq f(x)$ or $f(rb) \neq f(y)$ holds.

Consider a globally-fair execution $\Xi$ of $Alg$ with a complete communication graph $G=(V, E)$ such that the number of agents $n$ is five.
By Lemmas \ref{lem:ininum} and \ref{lem:nocha}, there exists a stable configuration $C_{t_1}$ (resp., $C_{t_2}$) of $\Xi$ such that there is $v_r$ (resp., $v_b$) which is the only agent with $ini_r$ (resp., $ini_b$) in $C_{t_1}$ (resp., $C_{t_2}$) and there is $v_{rb}$ with $rb$.
By the assumption, when $v_r$ (resp., $v_b$) interacts with $v_{rb}$ at $C_{t_1} \rightarrow C_{t_1 + 1}$ (resp., $C_{t_2} \rightarrow C_{t_2 + 1}$), $v_r$ (resp., $v_b$) or $v_{rb}$ changes its color.
Since $C_{t_1}$ and $C_{t_2}$ are stable, this is a contradiction.
\end{proof}

From Lemmas \ref{lem:rb} and \ref{lem:inirb2}, we can obtain the following corollary.

\setcounter{theorem}{23}
\begin{corollary}
 Consider a state set $Ini = \{ ini_r$, $ini_b \}$.
When $s_1\notin Ini$ or $s_2\notin Ini$ holds, if transition rule $(s_1,s_2)\rightarrow(s'_1,s'_2)$ exists then $f(s_1)=f(s'_1)$ and $f(s_2)=f(s'_2)$ hold.
\end{corollary}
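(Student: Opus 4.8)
The plan is to prove the corollary by a short case analysis on whether each of $s_1$ and $s_2$ lies in $Ini$, invoking Lemmas~\ref{lem:rb} and~\ref{lem:inirb2} in the appropriate case and using symmetry of $Alg$ to dispense with the ordering of the two interacting states. Note that the hypothesis ``$s_1\notin Ini$ or $s_2\notin Ini$'' rules out only the situation in which both $s_1$ and $s_2$ belong to $Ini$, so exactly two cases remain: (a) neither of $s_1,s_2$ is in $Ini$; (b) exactly one of them is in $Ini$.

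First I would treat case (a), i.e. $s_1,s_2\in\{r,b\}$. By Lemma~\ref{lem:rb} the transition rules $(r,r)\rightarrow(r,r)$, $(b,b)\rightarrow(b,b)$, and $(r,b)\rightarrow(r,b)$ belong to $\delta$, and since $Alg$ is symmetric $(b,r)\rightarrow(b,r)$ also belongs to $\delta$. Because $Alg$ is deterministic, these are the \emph{only} transition rules whose left-hand sides are $(r,r),(b,b),(r,b),(b,r)$; hence $(s'_1,s'_2)=(s_1,s_2)$, and in particular $f(s_1)=f(s'_1)$ and $f(s_2)=f(s'_2)$. For case (b), suppose first $s_1\in Ini$ and $s_2\notin Ini$, so $s_1\in\{ini_r,ini_b\}$ and $s_2\in\{r,b\}$; then Lemma~\ref{lem:inirb2} applies directly with $ini=s_1$ and $rb=s_2$, giving $f(s_1)=f(s'_1)$ and $f(s_2)=f(s'_2)$. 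If instead $s_2\in Ini$ and $s_1\notin Ini$, I would use symmetry: since $(s_1,s_2)\rightarrow(s'_1,s'_2)\in\delta$, also $(s_2,s_1)\rightarrow(s'_2,s'_1)\in\delta$, and Lemma~\ref{lem:inirb2} with $ini=s_2$, $rb=s_1$ yields $f(s_2)=f(s'_2)$ and $f(s_1)=f(s'_1)$. These cases are exhaustive, so the corollary follows.

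The argument itself is routine; the only real obstacle is that it is entirely a downstream consequence, so it is important that the two cited lemmas are genuinely available. Lemma~\ref{lem:rb} rests on Lemma~\ref{lem:sta} (at most one $ini_r$ and at most one $ini_b$ in a stable configuration) and Lemma~\ref{lem:nocha} (in a population with odd $n\le P/2$ every agent changes state infinitely often), applied to a population of size seven so that a stable configuration is forced to contain at least two $r$-agents and two $b$-agents, which pins down the $(r,r)$, $(b,b)$, and $(r,b)$ transitions; Lemma~\ref{lem:inirb2} rests on Lemmas~\ref{lem:ininum} and~\ref{lem:nocha}, applied to a population of size five so that a stable configuration with a unique $ini_r$ (resp.\ $ini_b$) also contains an agent of each non-$Ini$ color, forcing the $(ini,rb)$ transitions to be color-preserving. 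Granting these, the case split above is immediate and completes the proof.
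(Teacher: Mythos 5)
Your proposal is correct and matches the paper's approach exactly: the paper derives this corollary directly from Lemmas~\ref{lem:rb} and~\ref{lem:inirb2} without spelling out the case split, and your two-case analysis (both states outside $Ini$, handled by Lemma~\ref{lem:rb} plus determinism; exactly one in $Ini$, handled by Lemma~\ref{lem:inirb2} plus the symmetry of $Alg$ to cover both orderings) is precisely the intended argument. No gaps.
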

\setcounter{theorem}{28}

\subsection{Impossibility under Weak Fairness}

In this subsection, assuming arbitrary communication graphs and designated initial states and no base station, we show that there is no protocol that solves the problem under weak fairness.
Fischer and Jiang~\cite{fischer2006self} proved the impossibility of leader election for a ring communication graph.
We borrow their proof technique and apply it to the impossibility proof of a uniform bipartition problem. 


\begin{theorem}
There exists no protocol that solves the uniform bipartition problem with designated initial states and no base station under weak fairness assuming arbitrary communication graphs. 
\end{theorem}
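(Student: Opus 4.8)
The plan is to adapt the covering (``stretching'') argument of Fischer and Jiang~\cite{fischer2006self} from leader election on rings to uniform bipartition. Suppose, for contradiction, that a protocol $\mathcal{P}$ with designated initial states and no base station solves uniform bipartition under weak fairness over all (connected) communication graphs. First I would instantiate $\mathcal{P}$ on the triangle $R_3$, the $3$-cycle on agents $v_0,v_1,v_2$, all starting in the designated initial state, and fix an arbitrary weakly-fair execution $\Xi = C_0, C_1, \dots$ of $\mathcal{P}$ over $R_3$. Since $\mathcal{P}$ is correct, $\Xi$ contains a stable configuration $C_t$ with a partition $\{H_r,H_b\}$ satisfying $\bigl| |H_r|-|H_b| \bigr| \le 1$; because $|H_r|+|H_b|=3$, one class has two agents and the other has one, and by stability every agent keeps its colour from $C_t$ onward. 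Hence the colour counts $(\# red, \# blue)$ of $\Xi$ are eventually constant and equal to $(2,1)$ (up to swapping colours).

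Next I would build the $6$-cycle $R_6$ on agents $u_0,\dots,u_5$ together with the $2$-to-$1$ covering map $\pi(u_i)=v_{i\bmod 3}$, observing that every edge of $R_3$ has exactly two preimage edges in $R_6$ and that this induces a bijection between the six edges of $R_6$ and the six pairs (edge of $R_3$, sheet). Starting all $u_j$ in the designated initial state, I would lift $\Xi$ to an execution $\Xi'$ of $\mathcal{P}$ over $R_6$ by simulating each interaction of $\Xi$ with the two interactions over its preimage edges, preserving initiator/responder roles. A routine induction on the number of simulated steps gives the invariant that, after the simulation of each step of $\Xi$, agents $u_i$ and $u_{i+3}$ (for $i\in\{0,1,2\}$) both hold the state of $v_i$; in particular the colour counts of $\Xi'$ are always exactly twice those of $\Xi$. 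Moreover $\Xi'$ is weakly fair (even in the stronger sense used for lower bounds in the footnote): each edge of $R_6$ is used, with both initiator/responder roles, exactly as often as its $\pi$-image edge is used in the weakly-fair execution $\Xi$, hence infinitely often.

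Finally I would derive the contradiction. By the invariant, the colour counts of $\Xi'$ are eventually constant and equal to $(4,2)$. But then no configuration of $\Xi'$ can be stable: a stable configuration requires the colour of every agent to be frozen from that point on with $\bigl| |H_r|-|H_b| \bigr|\le 1$, whereas from the freeze point onward the counts are frozen at $(4,2)$ with difference $2$, and any earlier configuration, were it stable, would already have its (then frozen) counts equal to $(4,2)$. This contradicts the assumption that $\mathcal{P}$ solves uniform bipartition, since $\Xi'$ is a weakly-fair execution over the connected graph $R_6$ (which uses only $6$ agents, so the argument applies for any upper bound $P\ge 6$ if $P$ is supplied, and trivially otherwise). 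I expect the only delicate part to be the bookkeeping in the lift---checking that the two preimage interactions can always be scheduled consistently, including across the ``seam'' edge $\{v_2,v_0\}$, and that the resulting $\Xi'$ is genuinely weakly fair; everything else is mechanical. It is worth noting why the argument needs weak rather than global fairness: the lift of a weakly-fair execution is again weakly fair, but it need not be globally fair, which is precisely why the five-state protocol of Section~\ref{sec:uppsym} is not ruled out by this construction.
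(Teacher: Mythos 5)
Your proposal is correct and follows essentially the same route as the paper: the paper also fixes a weakly-fair execution on the $3$-cycle reaching a stable $(1,2)$ split, lifts it along the connected double cover (a $6$-cycle whose edge list is just a relabeling of your $\pi(u_i)=v_{i\bmod 3}$ cover), proves by induction that the two preimages of each $v_i$ track $v_i$'s state, and concludes that the lifted weakly-fair execution is frozen at a $(2,4)$ split, a contradiction. The only cosmetic difference is the paper's explicit edge labeling of the $6$-cycle versus your covering-map phrasing.
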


\begin{proof}
For the purpose of contradiction, let us assume that there exists such a protocol $Alg$.

First, consider a ring $R_1$ with three agents $v_0$, $v_1$, and $v_2$.
Let $(v_0,v_1)$, $(v_1,v_2)$, and $(v_2,v_0)$ be the edges of $R_1$.
Furthermore, let $\Xi= C_0$, $C_1$, $C_2$, $\ldots$, $C_t$, $\ldots$ be an execution of $Alg$, where $C_t$ is a stable configuration.
Without loss of generality, we assume that $\# red = 1$ and $\# blue = 2$ hold in $C_t$.

Next, consider a ring $R_2$ with six agents such that two copies of $R_1$ are combined to form $R_2$.
Let $v'_0$, $v'_1$, $v'_2$, $v'_3$, $v'_4$, and $v'_5$ be agents of $R_2$, and let $(v'_0$, $v'_1)$, $(v'_1$, $v'_5)$, $(v'_5$, $v'_3)$, $(v'_3$, $v'_4)$, $(v'_4$, $v'_2)$, and $(v'_2$, $v'_0)$ be edges of $R_2$ (see Figure \ref{fig:ring}).

\begin{figure}[t!]
\begin{center}
\includegraphics[scale=0.23]{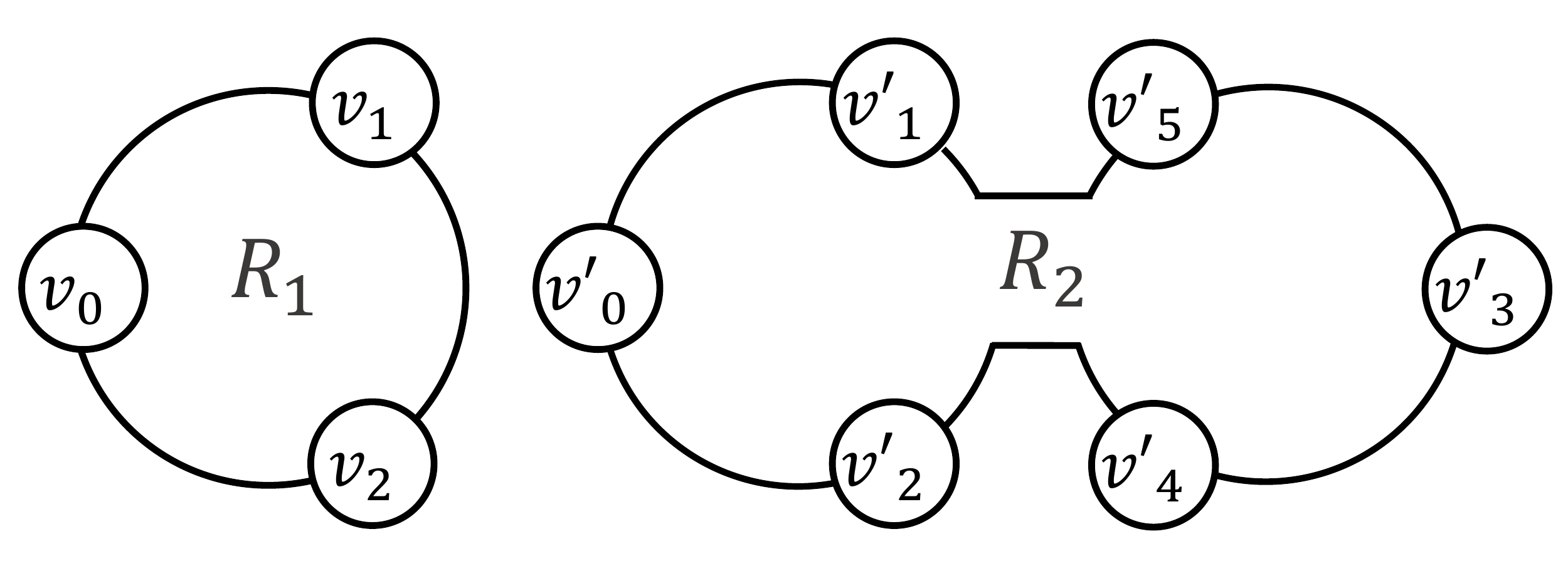}
\caption{Ring graphs $R_1$ and $R_2$}
\label{fig:ring}
\end{center}
\end{figure}

Now, let us construct the following execution $\Xi' =D_0$, $D'_0$, $D_1$, $D'_1$ $\ldots$.
\begin{itemize}
\item For $x$ and $y$ such that either $x = 0$ or $y=0$ holds, when $v_x$ interacts with $v_y$ at $C_i \rightarrow C_{i+1}$, $v'_x$ interacts with $v'_y$ at $D_i \rightarrow D'_{i}$, and $v'_{x+3}$ interacts with $v'_{y+3}$ at $D'_i \rightarrow D_{i+1}$.
\item When $v_1$ interacts with $v_2$ at $C_i \rightarrow C_{i+1}$, $v'_{1}$ interacts with $v'_5$ at $D_i \rightarrow D'_{i}$, and $v'_4$ interacts with $v'_2$ at $D'_i \rightarrow D_{i+1}$.
\end{itemize}

If configurations $C$ of $R_1$ and $D$ of $R_2$ satisfy the following condition, we say that those configurations are $equivalent$.
\begin{itemize}
\item For $i$ $(0 \le i \le 2)$, $s(v_i, C) = s(v'_i, D) = s(v'_{i+3}, D)$ holds.
\end{itemize}

From now on, by induction on the index of configuration, we show that $C_r$ and $D_r$ are equivalent for any $r \ge 0$.
Clearly $C_0$ and $D_0$ are equivalent, so the base case holds immediately.
For the induction step, we assume that $C_l$ and $D_l$ are equivalent, and then consider two cases of interaction at $C_l \rightarrow C_{l+1}$.

First we consider the case that, for $x$ and $y$ such that either $x = 0$ or $y=0$ holds, agents $v_x$ and $v_y$ interact at $C_l \rightarrow C_{l+1}$.
In this case, at $D_l \rightarrow D'_{l}$, $v'_x$ interacts with $v'_y$ and, at $D'_l \rightarrow D_{l+1}$, $v'_{x+3}$ interacts with $v'_{y+3}$. 
By the induction assumption, $s(v_x, C_l) = s(v'_x, D_l) = s(v'_{x+3}, D_l)$ and $s(v_y, C_l) = s(v'_y, D_l) = s(v'_{y+3}, D_l)$ hold.
Thus, agents $v'_{x}$ and $v'_{x+3}$ change their state similarly to $v_x$, and agents $v'_y$ and $v'_{y+3}$ change their state similarly to $v_y$. 
Hence, $C_{l+1}$ and $D_{l+1}$ are equivalent in this case.

Next, we consider the case that $v_1$ and $v_2$ interact at $C_l \rightarrow C_{l+1}$.
In this case, at $D_l \rightarrow D'_{l}$, $v'_{1}$ interacts with $v'_5$ and, at $D'_l \rightarrow D_{l+1}$, $v'_4$ interacts with $v'_2$. 
By the induction assumption, $s(v_1, C_l) = s(v'_1, D_l) = s(v'_4, D_l)$ and $s(v_2, C_l) = s(v'_2, D_l) = s(v'_5, D_l)$ hold.
Thus, agents $v'_1$ and $v'_4$ change their state similarly to $v_1$, and agents $v'_2$ and $v'_5$ change their state similarly to $v_2$. 
Hence, $C_{l+1}$ and $D_{l+1}$ are equivalent in this case.

Thus, $C_r$ and $D_r$ are equivalent for any $r \ge 0$.

This implies that, after $D_t$, $\# red = 2$ and $\# blue = 4$ hold. 
Moreover, since $\Xi$ is weakly fair, clearly $\Xi'$ is weakly fair.
This is a contradiction.
\end{proof}

\section{Concluding Remarks}

In this paper, we consider the uniform bipartition problem with designated initial states assuming arbitrary communication graphs.
We clarified the problem solvability, and even provided tight bounds (with respect to the number of states per agent) in the case of global fairness. 
Concretely, with no base station under global fairness, we proved that four is the minimum number of states per agent to enable asymmetric protocols, and five is the minimum number of states per agent to enable symmetric protocols.
With no base station under weak fairness, we proved the impossibility to obtain an asymmetric protocol. 
On the other hand, with a base station, we propose a symmetric protocol with $3P+1$ states under weak fairness, and a symmetric protocol with three states under global fairness. 

Our work raises interesting open problems: 
\begin{itemize}
\item Is there a relation between the uniform bipartition problem and other classical problems such as counting, leader election, and majority? We pointed out the reuse of some proof arguments, but the existence of a more systematic approach is intriguing.

\item What is the time complexity of the uniform bipartition problem?

\item Is the uniform bipartition problem in arbitrary communication graphs with arbitrary initial states feasible? In case the answer is yes, what is the space complexity?


\end{itemize}


\bibliography{ref}

\end{document}